\numberwithin{equation}{section}
\newtheorem{theorem}{Theorem}[section]
\newtheorem{corollary}[theorem]{Corollary}
\newtheorem{lemma}[theorem]{Lemma}
\newtheorem{proposition}[theorem]{Proposition}
\newtheorem{conjecture}[theorem]{Conjecture}
\theoremstyle{definition}
\newtheorem{definition}[theorem]{Definition}
\newtheorem{remark}[theorem]{Remark}
\newtheorem{assumption}[theorem]{Assumption}
\newtheorem{example}[theorem]{Example}
\newcommand*{\rom}[1]{\expandafter\@slowromancap\romannumeral #1@}
\newcommand{\ind}{1\hspace{-2.1mm}{1}} 
\newcommand{\I}{\mathtt{i}}
\newcommand{\eps}{\varepsilon}
\newcommand{\BS}{\mathrm{BS}}
\newcommand{\EE}{\mathbb{E}}
\newcommand{\VV}{\mathbb{V}}
\newcommand{\NN}{\mathbb{N}}
\newcommand{\RR}{\mathbb{R}}
\newcommand{\QQ}{\mathbb{Q}}
\newcommand{\PP}{\mathbb{P}}
\newcommand{\RV}{\mathcal{R}}
\newcommand{\Crm}{\mathrm{C}}
\newcommand{\Urm}{\mathrm{U}}
\newcommand{\m}{\mathfrak{m}}
\newcommand{\n}{\mathfrak{b}}
\newcommand{\A}{\mathfrak{a}}
\newcommand{\B}{\mathfrak{b}}
\newcommand{\vm}{\mathfrak{v}_-}
\newcommand{\vp}{\mathfrak{v}_+}
\newcommand{\M}{\mathrm{M}}
\newcommand{\D}{\mathrm{d}}
\newcommand{\E}{\mathrm{e}}
\newcommand{\Drm}{\mathrm{D}}
\newcommand{\HH}{\mathrm{H}}
\newcommand{\LDP}{\mathrm{LDP}}
\newcommand{\LL}{\mathrm{L}}
\newcommand{\sgn}{\mathrm{sgn}}
\newcommand{\Nn}{\mathcal{N}}
\newcommand{\Vv}{\mathcal{V}}
\newcommand{\Oo}{\mathcal{O}}
\newcommand{\Cc}{\mathcal{C}}
\newcommand{\Ee}{\mathcal{E}}
\newcommand{\Ss}{\mathcal{S}}
\newcommand{\Dd}{\mathcal{D}}
\newcommand{\Pp}{\mathcal{P}}
\newcommand{\usm}{\overline{u}_{-}}
\newcommand{\usp}{\overline{u}_{+}}
\newcommand{\usmp}{\overline{u}_{\pm}}
\newcommand{\ccm}{\underline{\mathfrak{c}}}
\newcommand{\ccp}{\overline{\mathfrak{c}}}
\newcommand{\LLm}{\underline{\Lambda}^{*}}
\newcommand{\Lf}{\mathfrak{L}}
\newcommand{\Hh}{\mathcal{H}}
\newcommand{\rrho}{\overline{\rho}}
\newcommand{\cf}{\mathfrak{c}}
\newcommand{\ttheta}{\overline{\theta}}
\newcommand{\Ff}{\mathcal{F}}
\newcommand{\ggm}{\underline{\gamma}}
\newcommand{\ggp}{\overline{\gamma}}
\newcommand{\PS}{(\Omega, \Ff, (\Ff_t)_{t\geq 0}, \PP)}
\newcommand{\lambtilde}{\widetilde{\Lambda}}
\newcommand{\dtilde}{\widetilde{\Drm}}
\newcommand{\ctilde}{\widetilde{\Crm}}
\newcommand{\that}{\breve{t}}
\def\equalDistrib{\,{\buildrel \text{(Law)} \over =}\,}
\title{The randomised Heston model}
\date{\today}
\author{Antoine Jacquier, Fangwei Shi}
\address{Department of Mathematics, Imperial College London}
\email{a.jacquier@imperial.ac.uk, fangwei.shi12@imperial.ac.uk}
\thanks{The authors would like to thank Stefan Gerhold and Archil Gulisashvili for useful discussions.
AJ acknowledges financial support from the EPSRC First Grant EP/M008436/1.
FS is funded by a mini-DTC scholarship from the Department of Mathematics, Imperial College London.
Numerical implementations are carried out on the collaborative platform Zanadu (www.zanadu.io).}
\keywords{Stochastic volatility, large deviations, Heston, implied volatility, asymptotic expansion.}
\subjclass[2010]{60F10, 91G20, 91B70}
\date{\today}
\begin{document}
\maketitle

\begin{abstract}
We propose a randomised version of the Heston model--a widely used stochastic volatility model in mathematical finance--assuming that 
the starting point of the variance process is a random variable.
In such a system, we study the small-and large-time behaviours of the implied volatility,
and show that the proposed randomisation generates a short-maturity smile much steeper (`with explosion')
than in the standard Heston model, 
thereby palliating the deficiency of classical stochastic volatility models in short time.
We precisely quantify the speed of explosion of the smile for short maturities
in terms of the right tail of the initial distribution, and in particular show that an explosion rate 
of~$t^\gamma$ ($\gamma\in[0,1/2]$) for the squared implied volatility--as observed on market data--can be obtained by a suitable choice of randomisation.
The proofs are based on large deviations techniques and the theory of regular variations.
\end{abstract}

\section{Introduction}\label{sec:intro}
Implied volatility is one of the most important observed data in financial markets and
represents the price of European options,
reflecting market participants' views.
Over the past two decades, a number of (stochastic) models have been proposed in order
to understand its dynamics and reproduce its features.
In recent years, a lot of research has been devoted to understanding the asymptotic behaviour 
(large strikes~\cite{benaim08,benaim09,caravenna15}, small / large maturities~\cite{FordeJac09, FordeJac11, forde12, tankov11})
of the implied volatility in a large class of models in extreme cases;
these results not only provide closed-form expressions (usually unavailable) for the implied volatility, 
but also shed light on the role of each model parameter and, ultimately on the efficiency of each model.

Continuous stochastic volatility models driven by Brownian motion
effectively fit the volatility smile (at least for indices);
the widely used Heston model, for example, is able to fit the volatility surface for 
almost all maturities~\cite[Section~3]{gatheral06}, 
but becomes inaccurate for small maturities.
The fundamental reason is that small-maturity data is much steeper 
(for small strikes)--the so-called `short-time explosion'--than 
the smile generated by these stochastic volatility models
(a detailed account of this phenomenon can be found in the volatility bible~\cite[Chapters 3 and 5]{gatheral06}).
To palliate this issue, Gatheral (among others) comments that jumps should be added in the stock dynamics;
the literature on the influence of the jumps is vast, and we only mention 
here the clear review by Tankov~\cite{tankov11} in the case of exponential L\'evy models,
where the short-time implied volatility explodes at a rate of~$|t\log t|$ for small~$t$.
To observe non-trivial convergence (or divergence), 
Mijatovi\'c and Tankov~\cite{mijatovic12} introduced maturity-dependent strikes,
and studied the behaviour of the smile in this regime.

As an alternative to jumps, a portion of the mathematical finance community has recently been advocating
the use of fractional Brownian motion (with Hurst parameter~$H<1/2$) 
as driver of the volatility process.
Al\`os, Le\'on and Vives~\cite{Alos} first showed that such a model is indeed
capable of generating steep volatility smiles for small maturities (see also the recent work by Fukasawa~\cite{Fukasawa}), 
and Gatheral, Jaisson and Rosenbaum~\cite{gatheral14} recently showed that 
financial data exhibits strong evidence that volatility is ‘rough’ 
(an estimate for SPX volatility actually gives $H \approx 0.14$).
Guennoun, Jacquier and Roome~\cite{guennoun14} investigated a fractional version of the Heston model, 
and proved that as~$t$ tends to zero
the squared implied volatility explodes at a rate~$t^{H-1/2}$.
This is currently a very active research area, and the reader is invited to consult~\cite{BayerRough, Euch1, Euch2, Euch3, forde16} 
for further developments.
This is however not the end of the story--yet--as computational costs for simulation
are a severe concern in fractional models.

We propose here a new class of models, namely standard stochastic volatility models 
(driven by standard Brownian motion) where the initial value of the variance is randomised,
and focus our attention to the Heston version.
The motivation for this approach originates from the analysis of forward-start smiles 
by Jacquier and Roome~\cite{jacquier13,JR15}, who proved that 
the forward implied volatility explodes at a rate of~$t^{1/4}$ as~$t$ tends to zero.
A simple version of our current study is the `CEV-randomised Black-Scholes model' introduced in~\cite{jacquier15},
where the Black-Scholes volatility is randomised according to the distribution generated from an independent CEV process;
in this work, the authors proved that this simplified model generates the desired explosion of the smile.
The Black-Scholes randomised setting where the volatility has a discrete distribution
corresponds to the lognormal mixture dynamics studied in~\cite{brigoMixt, brigo02}.
We push the analysis further here;
our intuition behind this new type of models is that the starting point of the volatility process 
is actually not observed accurately, but only to some degree of uncertainty.
Traders, for example, might take it as the smallest (maturity-wise) observed at-the-money implied volatility.
Our initial randomisation aims at capturing this uncertainty.
This approach was recently taken by Mechkov~\cite{mechkov15}, 
considering the ergodic distribution of the CIR process as starting distribution,
who argues that randomising the starting point captures potential hidden variables.
One could also potentially look at this from the point of view of uncertain models,
and we refer the reader to~\cite{Fouque} for an interesting related study.
The main result of our paper is to provide a precise link between the explosion rate of the implied volatility smile
for short maturities and the choice of the (right tail of the) initial distribution of the variance process.
The following table (a more complete version with more examples can be found in Table~\ref{table:nonlin})
gives an idea of the range of explosion rates that can be achieved through our procedure;
for each suggested distribution of the initial variance, we indicate the asymptotic behaviour 
(up to a constant multiplier) of the (square of the) out-of-the-money implied volatility smile
(in the first row, the function~$f$ will be determined precisely later, but the absence of time-dependence 
is synonymous with absence of explosion).
\begin{table}[ht]
\centering
\begin{tabular}{c c c c c c}
\hline\hline
Name & Behaviours of~$\sigma^2_t(x)$ ($x\ne 0$) & Reference\\ [0.5ex] 
\hline
Uniform & $f(x)$ & Equation~\ref{eq:IVStandard}\\
Exponential($\lambda$) & $|x|t^{-1/2}$ & Theorem~\ref{thm:fat-tailImpvolAsymps}\\
$\chi$-squared & $|x|t^{-1/2}$ & Theorem~\ref{thm:fat-tailImpvolAsymps}\\
Rayleigh & $x^{2/3} t^{-1/3}$ & Theorem~\ref{thm:ThinTail}\\
Weibull ($k>1$) & $(x^2/t)^{1/(1+k)}$ & Theorem~\ref{thm:ThinTail}\\ [1ex]
\hline
\end{tabular}
\label{table:intro}
\end{table}

The rest of the paper is structured as follows:
we introduce the randomised Heston model in Section~\ref{sec:model description}, 
and discuss its main properties.
Section~\ref{sec:appetiser} is a numerical appetiser to give a flavour of the quality of such a randomisation.
Section~\ref{sec:Asymptotics} is the main part of the paper,
in which we prove large deviations principles for the log-price process,
and translate them into short- and large-time behaviours of the implied volatility.
In particular, we prove the claimed relation between the explosion rate of the small-time smile
and the tail behaviour of the initial distribution.
The small-time limit of the at-the-money implied volatility is, as usual in this literature, 
treated separately in Section~\ref{sec:ATM}.
Section~\ref{sec:dynamic} includes a dynamic pricing framework:
based on the distribution at time zero and the evolution of the variance process,
we discuss how to re-price (or hedge) the option during the life of the contract.
Finally, Section~\ref{sec:unbounded} presents examples of common initial distributions,
and numerical examples.
The appendix gathers some reminders on large deviations and regular variations,
as well as proofs of the main theorems.


\textbf{Notations:}
Throughout this paper, 
we denote~$\sigma_t(x)$ the implied volatility of a European Call or Put option 
with strike~$\E^x$ and time to maturity~$t$.
For a set~$\Ss$ in a given topological space we denote by~$\mathcal{S}^o$ and~$\overline{\Ss}$ 
its interior and closure. 
Let~$\RR_+ := [0,\infty)$,~$\RR_+^* := (0,\infty)$, and~$\RR^* := \RR\setminus\{0\}$. 
For two functions~$f$ and~$g$, and $x_0\in\RR$, we write~$f\sim g$ as~$x$ tends to $x_0$
if $\lim\limits_{x\rightarrow x_0} f(x)/g(x) =1$.
If a function~$f$ is defined and locally bounded on~$[x_0,\infty)$, and $\lim\limits_{x\uparrow \infty}f(x) = \infty$,
define $f^\leftarrow(x):=\inf\left\{y\geq [x_0,\infty): f(y)>x\right\}$ 
as its generalised inverse. 
Also define the sign function as $\sgn(u) := \ind_{\{u\geq 0\}} - \ind_{\{u< 0\}}$.
Finally, for a sequence $(Z_t)_{t\geq 0}$ satisfying a large deviations principle as~$t$ tends to zero
with speed~$g(t)$ and good rate function~$\Lambda_Z^*$ 
(Appendix~\ref{app:LDPLDP}) we use the notation
$Z\sim\LDP_0(g(t), \Lambda_Z^*)$.
If the large deviations principle holds as~$t$ tends to infinity, we denote it by $\LDP_\infty(\cdots)$.

\section{Model and main properties}\label{sec:model description}
On a filtered probability space~$\PS$ supporting two independent Brownian motions~$W^{(1)}$ 
and~$W^{(2)}$, we consider a market with no interest rates, and propose the following dynamics for 
the log-price process:
\begin{equation}\label{eq:RandomHestonSDE}
\begin{array}{rll}
\D X_{t} & = \displaystyle
 -\frac{1}{2}V_{t} \D t + \sqrt{V_{t}}\left(\rho\,\D W^{(1)}_t + \rrho\,\D W^{(2)}_t\right), 
 & X_{0} = 0,\\
\D V_{t} & = \displaystyle \kappa (\theta-V_{t})\D t + \xi \sqrt{V_{t}} \D W^{(1)}_{t}, 
& V_{0} \equalDistrib \Vv,
\end{array}
\end{equation}
where~$\rho\in[-1,1]$, $\rrho := \sqrt{1-\rho^{2}}$, 
and~$\kappa, \theta, \xi$ are strictly positive real numbers.
Here~$\Vv$ is a continuous random variable, independent of the filtration $(\Ff_t)_{t\geq 0}$,
for which the interior of the support is of the form $(\vm, \vp)$ 
for some $0\leq \vm \leq \vp \leq \infty$, 
with moment generating function $\M_{\Vv}(u) := \EE\left(\E^{u \Vv}\right)$, 
for all $u\in\Dd_\Vv:=\{u\in\RR: \EE\left(\E^{u \Vv}\right)<\infty\}\supset (-\infty, 0]$,
and we further assume that~$\Dd_{\Vv}$ contains at least an open neighbourhood of the origin,
namely that $\m:=\sup\left\{u\in\RR: \M_{\Vv}(u)<\infty\right\}$ belongs to~$(0,\infty]$.
Then clearly all positive moments of~$\Vv$ exist.
Existence and uniqueness of a solution to this stochastic system is guaranteed as soon as~$\Vv$ 
admits a second moment~\cite[Chapter 5, Theorem 2.9]{KS91}.
Notice that the process $(X, V)$ is not adapted to the filtration $(\Ff_t)_{t\geq 0}$ due to the lack of information on~$\Vv$ in~$\Ff_t$.
The process is Markovian, however, with respect to the augmented filtration 
$\sigma(\Ff_t\vee \sigma(\Vv))_{t\geq 0}$.

When~$\Vv$ is a Dirac distribution ($\vm = \vp$), 
the system~\eqref{eq:RandomHestonSDE} corresponds to the standard Heston model~\cite{Heston}, 
and it is well known that the stock price process~$\exp(X)$ is a~$\PP$-martingale;
it is trivial to check that it is still the case for~\eqref{eq:RandomHestonSDE}.
Behaviour~\cite{zeliade10}, asymptotics~\cite{FordeJac11, forde12, FordeJacMij10}, estimation and calibration~\cite{andersen07, zeliade10} of the Heston model have been treated at length in several papers, and we refer the interested reader to this literature for more details about it;
we shall therefore always assume that $\vm < \vp$.
\begin{remark}\label{rmk:props-of-v}
For any $t\geq 0$, the tower property for conditional expectation yields
\begin{equation*}
\begin{array}{ll}
\EE(V_t) 
&= \EE[\EE(V_t|\Vv)]
= \theta\left(1-\E^{-\kappa t}\right) + \E^{-\kappa t}\EE(\Vv),\\
\VV(V_t) 
&= \EE[\VV(V_t|\Vv)] + \VV[\EE(V_t|\Vv)]
= \displaystyle \E^{-2\kappa t}\left(\VV(\Vv) + \frac{\xi^2}{\kappa}\left(\E^{\kappa t}-1\right)\EE(\Vv)\right)+ \frac{\xi^2\theta}{2\kappa}\left(1-\E^{-\kappa t}\right)^2.
\end{array}
\end{equation*}
Consider the standard Heston model ($\vm=\vp =: V_0$), 
and construct~$\Vv$ such that $\EE(\Vv) = V_0$.
Then, for any time $t\geq 0$, both random variables $V_t$ (in~\eqref{eq:RandomHestonSDE} 
and in the standard Heston model) have the same expectation;
however, the randomisation of the initial variance increases the variance by $\E^{-2\kappa t}\VV(\Vv)$.
As time tends to infinity, it is straightforward to show that the randomisation 
preserves the ergodicity of the variance process, with a Gamma distribution as invariant measure, 
with identical mean and variance:
$$
\lim_{t\uparrow \infty} \EE(V_t)
= \theta
\qquad\text{and}\qquad
\lim_{t\uparrow \infty}\VV(V_t)
 = \frac{\xi^2\theta}{2\kappa}.
$$
\end{remark}

For any $t\geq 0$, let $\M(t,u)$ denote the moment generating function (mgf) of~$X_{t}$:
\begin{equation}\label{eq:MGFX}
\M(t,u):= \EE\left(\E^{u X_{t}}\right),
\qquad\text{for all }u\in\Dd_\M^t := \left\{u\in\RR: \EE\left(\E^{u X_{t}}\right)<\infty\right\}.
\end{equation}
The tower property yields directly
\begin{equation}\label{eq:TowerProperty}
\M(t,u) = \EE \left(\E^{uX_t} \right)
= \EE\left(\EE\left(\E^{uX_t}|\Vv\right)\right)
= \EE\left(\E^{\Crm(t, u) + \Drm(t, u)\Vv}\right)
= \E^{\Crm(t, u)} \M_{\Vv}\left(\Drm(t, u)\right),
\end{equation}
where the functions~$\Crm$ and~$\Drm$ arise directly from the (affine) representation 
of the moment generating function of the standard Heston model, 
recalled in Appendix~\eqref{eq:HestonMGF}.

\section{Practical appetiser and relation to model uncertainty}\label{sec:appetiser}
\subsection{The bounded support case: a practical appetiser}
Before diving into the technical statements and proofs of asymptotic results in Section~\ref{sec:Asymptotics}, 
let us provide a numerical hors-d'oeuvre, teasing the appetite of the reader regarding the practical relevance of the randomisation.
As mentioned in the introduction, the main drawback of classical continuous-path stochastic volatility models 
(without randomisation and driven by standard Brownian motions),
is that the small-maturity smile they generate is not steep enough to reflect the reality of the market.
Graph~\ref{graph:uniformIntro} below represents a comparison of the implied volatility surface generated by the standard Heston model with
$$
\kappa = 2.1,\qquad
\theta= 0.05,\qquad
V_0 = 0.06,\qquad
\rho = -0.6,\qquad
\xi = 0.1,
$$
and that of the Heston model randomised by a uniform distribution with
$\vm = 0.04$ and $\vp = 0.082$.
From the trader's point of view, this could be understood as uncertainty on the actual value of~$V_0$
(see also~\cite{Fouque} for a related approach).
Clearly, the randomisation steepens the smile for small maturities,
while its effect fades away as maturity becomes large.
\begin{figure}[h!]
\includegraphics[scale=0.4]{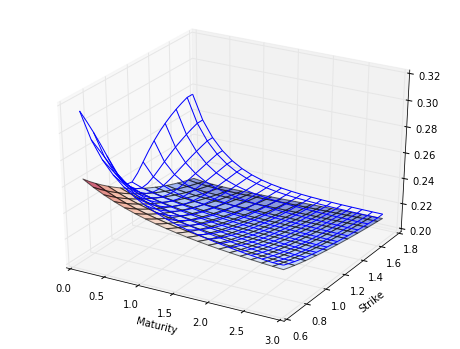}
\caption{Volatility surfaces of standard Heston (coloured) and with randomisation 
$\Vv\equalDistrib \mathcal{U}(4\%,8.2\%)$.}
\label{graph:uniformIntro}
\label{fig:appetiser}
\end{figure}
This numerical example intuitively yields the following informal conjecture:
\begin{conjecture}
Under randomisation of the initial volatility, the smile `explodes' for small maturities.
\end{conjecture}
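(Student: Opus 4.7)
The plan is to establish a short-time large deviations principle for~$X_t$ and to deduce from it that the implied variance $\sigma_t^2(x)$ diverges as~$t\downarrow 0$ for every~$x\ne 0$. The starting point is~\eqref{eq:TowerProperty}, which factors the moment generating function of~$X_t$ as $\M(t,u)=\E^{\Crm(t,u)}\M_{\Vv}(\Drm(t,u))$ with~$\Crm$ and~$\Drm$ the standard Heston mgf components. In the non-randomised Heston model, the G\"artner--Ellis rescaling $\Lambda(u):=\lim_{t\downarrow 0}t\log \EE(\E^{uX_t/t})$ is finite on an open neighbourhood of the origin, and its Fenchel--Legendre transform produces a finite Berestycki--Busca--Florent-type small-time implied volatility limit; this is exactly the regime the conjecture is claimed to break.

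The first step is to observe how the multiplicative factor $\M_{\Vv}(\Drm(t,u))$ alters this picture. Under the Heston affine structure, $\Drm(t,u/t)$ blows up as $t\downarrow 0$ whenever $u\ne 0$, and so $\M_{\Vv}(\Drm(t,u/t))$ is finite only when the blowing-up argument stays inside the admissibility set $\Dd_{\Vv}$, which is bounded from above by~$\m$. Consequently, the linearly-rescaled cumulant $t\log \M(t,u/t)$ is either identically $+\infty$ outside the origin (when $\m<\infty$) or strictly larger than in the Heston case (when $\m=\infty$); in both situations the large deviations principle must therefore hold at a strictly slower speed than~$t$.

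Next I would look for a speed $g:\RR_+\to\RR_+$ with $g(t)\downarrow 0$ and $g(t)/t\to\infty$, tailored to the right tail of~$\Vv$, such that $\Lambda_g(u):=\lim_{t\downarrow 0} g(t)\log \EE(\E^{uX_t/g(t)})$ exists, is finite, and is essentially smooth on an open neighbourhood of the origin. The choice of~$g$ is dictated by the interplay between $\Drm(t,u/g(t))$, explicitly computable from the Heston CIR equation, and $\log \M_{\Vv}$ at infinity, the latter governed by the tail of~$\Vv$ via Kasahara--de Bruijn-type Tauberian correspondences. Once such a~$g$ is identified, G\"artner--Ellis yields a non-degenerate small-time LDP for~$X_t$ at speed~$g(t)$, and the standard Call-price-to-implied-volatility dictionary then delivers a relation of the form $\sigma_t^2(x) \sim x^2/(g(t)\,\Lambda_g^{\ast}(x))$; since $g(t)/t\to\infty$, the right-hand side diverges, confirming the conjecture.

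The main obstacle will be the correct identification of~$g(t)$ and the verification of the G\"artner--Ellis essential smoothness hypothesis in the various tail regimes. The light-tail case ($\m=\infty$) and the exponential-type case ($\m<\infty$) require separate Tauberian inputs, and the precise singular behaviour of~$\Drm$ must be matched against the growth of $\log\M_{\Vv}$; this matching is what ultimately selects the various explosion rates ($t^{-1/2}$, $t^{-1/3}$, $t^{-1/(1+k)}$, \ldots) collected in Table~\ref{table:intro}, and the computations will be carried out in Theorem~\ref{thm:fat-tailImpvolAsymps} and Corollary~\ref{thm:ThinTail}.
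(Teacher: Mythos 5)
Your proposal concludes that the LDP must hold at a strictly slower speed than~$t$ in every randomised case, and hence that the smile always explodes; this is incorrect, and in fact the conjecture as stated is \emph{false}. The paper notes, immediately after stating the conjecture, that no explosion occurs when~$\Vv$ has bounded support (as in the uniform example of Figure~\ref{fig:appetiser}), and Corollary~\ref{Coro:BddSupport} makes this precise: if $\vp<\infty$ then $X\sim\LDP_0(t,\Lambda_{\vp}^*)$ at the \emph{same} speed~$t$ as classical Heston, and the small-time implied variance converges to the finite limit~\eqref{eq:IVStandard}.

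The flaw is in your second step. When $\vp<\infty$ one necessarily has $\m=\infty$, so although $\Drm(t,u/t)\sim\Lambda(u)/t$ diverges as $t\downarrow 0$, the argument remains inside~$\Dd_\Vv=\RR$, and by Proposition~\ref{prop:limit_of_cgf} the pointwise limit $\lim_{t\downarrow 0}t\log\M(t,u/t)=\Lambda(u)\vp$ is finite and essentially smooth on~$(u_-,u_+)$. A limiting cgf that is ``strictly larger than in the Heston case'' is not infinite; it is merely a different finite limit (with~$\vp$ in place of~$v_0$), and G\"artner--Ellis applies at exactly the same speed. The inference ``strictly larger $\Rightarrow$ LDP at a strictly slower speed'' has no basis. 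Consequently the dichotomy you propose based on whether $\m$ is finite is not the one governing explosion; the operative dichotomy is whether $\vp=\infty$ or $\vp<\infty$. Your subsequent outline --- tailoring the speed~$g$ to the tail of~$\Vv$ via a Kasahara-type Tauberian matching between $\Drm(t,\cdot)$ and $\log\M_\Vv$ --- does match the strategy of Sections~\ref{sec:ThinTail} and~\ref{sec:FatTail} in the unbounded-support regimes, but the conjecture itself, interpreted as holding for arbitrary~$\Vv$, is refuted rather than proven.
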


We shall provide a precise formulation--and exact statements--of this conjecture.
Despite the appearances in Figure~\ref{fig:appetiser},
the conjecture is actually  false when the initial distribution has bounded support,
such as in the uniform case here.
However, as will be detailed in Section~\ref{ex:UniformDist}, greater steepness of the smile 
(compared to the standard Heston model) does appear for a wide range of strikes,
but not in the far tails 
(this is quantified precisely, 
as well as the at-the-money curvature in the uncorrelated case, in Section~\ref{ex:UniformDist}).
This leads us to believe that, even if `explosion' does not actually occur in the bounded support case,
this assumption may still be of practical relevance given the range of traded strikes.

\section{Asymptotic behaviour of the randomised model}\label{sec:Asymptotics}
This section is the core of the paper, and relates the explosion of the implied volatility smile in small times
to the tail behaviour of the randomised initial variance.
Section~\ref{sec:preliminary} (Proposition~\ref{prop:limit_of_cgf}) provides the short-time behaviour of 
the cumulant generating function (cgf) of
the random sequence~$(X_t)_{t\geq 0}$, 
and relates it to the choice of the initial distribution~$\Vv$.
This paves the way for a large deviations principle for the sequence~$(X_t)_{t\geq 0}$.
Section~\ref{sec:ThinTail} concentrates on the case where both~$\m$ and~$\vp$ are infinite:
Theorem~\ref{thm:ThinTail} indicates that 
the squared implied volatility has an explosion rate of~$t^\gamma$ with~$\gamma\in(0,1/2)$.
The case where~$\m<\vp=+\infty$ is covered in Section~\ref{sec:FatTail}, where
an explosion rate of~$\sqrt{t}$ is obtained.
Section~\ref{sec:LargeTime} provides the large-time asymptotic behaviour
of the implied volatility in our randomised setting;
in particular, the long-term similarities between standard and randomised Heston models are present in this section.
Finally, Section~\ref{sec:ATM} covers the singular case of the small-time at-the-money implied volatility.

\subsection{Preliminaries}\label{sec:preliminary}
As a first step in understanding the behaviour of the implied volatility, 
we analyse the short-time limit of the rescaled cgf of the sequence~$(X_t)_{t\geq 0}$.
To do so, let~$h: \RR_+ \to \RR_+$ be a smooth function, which can be extended at zero by continuity
with $h(0) := \lim_{t\downarrow 0}h(t)=0$. 
In light of~\eqref{eq:TowerProperty}, for any $t\geq 0$, we introduce the effective domain of the moment generating function 
of the rescaled random variable~$X_t/h(t)$:
$$
\Dd_{t}:=\left\{u\in\RR: \M\left(t,\frac{u}{h(t)}\right) < \infty\right\},
$$
as well as the following sets, for any $t>0$:
$$
\Dd^t_{\Vv}:=\left\{u\in\RR:\M_{\Vv}\circ\Drm\left(t, \frac{u}{h(t)}\right)<\infty\right\},
\qquad
\Dd^*:=\liminf_{t\downarrow 0}\Dd_{t} = \bigcup_{t>0}\bigcap_{s\leq t}\Dd_s,
\qquad
\Dd_{\Vv}^* := \liminf_{t\downarrow 0}\Dd^t_{\Vv} = \bigcup_{t>0}\bigcap_{s\leq t}\Dd^s_{\Vv}.
$$
We now denote the pointwise limit 	
$\Lambda_{h}(u):=\lim\limits_{t\downarrow 0}\Lambda_{h}\left(t,u/h(t)\right)$, where 
\begin{equation}\label{eq:LambdaH}
\Lambda_{h}\left(t, \frac{u}{h(t)}\right)
 := h(t)\log \M\left(t,\frac{u}{h(t)}\right).
\end{equation}
The seemingly identical notations for the function and its pointwise limit should not create any confusion
in this paper.
Introduce further the real numbers $u_-\leq 0$ and $u_+\geq 1$ and the function $\Lambda:(u_-, u_+)\to\RR$:
\begin{equation}\label{eq:lambda_H}
\begin{array}{rl}
& \left\{
\begin{array}{rl}
u_{-} &:= \displaystyle \frac{2}{\xi\rrho}\arctan\left(\frac{\rrho}{\rho}\right)\ind_{\{\rho<0\}}
- \frac{\pi}{\xi}\ind_{\{\rho=0\}}
+ \frac{2}{\xi\rrho}\left(\arctan\left(\frac{\rrho}{\rho}\right)-\pi\right)\ind_{\{\rho>0\}},\\
u_{+} &:= \displaystyle \frac{2}{\xi\rrho}\left(\arctan\left(\frac{\rrho}{\rho}\right)+\pi\right)\ind_{\{\rho<0\}}
+\frac{\pi}{\xi}\ind_{\{\rho=0\}}
+ \frac{2}{\xi\rrho}\arctan\left(\frac{\rrho}{\rho}\right)\ind_{\{\rho>0\}},
\end{array}
\right.
\\
& \displaystyle \Lambda(u) := \frac{u}{\xi(\rrho\mathrm{cot}\left(\xi\rrho u/2\right)-\rho)}.
\end{array}
\end{equation}
The following proposition, whose proof is postponed to Appendix~\ref{subsec:proof_thm_limit_of_cgf}, 
summarises the limiting behaviour of~$\Lambda_h(\cdot,\cdot)$ as~$t$ tends to zero. 
In view of Remark~\ref{rem:h}(ii) below, we shall only consider power functions
of the type $h(t) \equiv c t^\gamma$.
It is clear that there is no loss of generality by taking~$c=1$, as it only acts as a space-scaling factor.
We shall therefore replace the notation~$\Lambda_h$ by~$\Lambda_\gamma$
to highlight the power exponent in action. 
\begin{proposition}\label{prop:limit_of_cgf}
Let $h(t) = t^{\gamma}$, with $\gamma \in (0,1]$.
As~$t$ tends to zero, the following pointwise limit holds:
\begin{equation*}
\Lambda_\gamma(u)
 := \lim_{t\downarrow 0}\Lambda_{\gamma}\left(t, \frac{u}{t^{\gamma}}\right) = 
\left\{
\begin{array}{llll}
0, & u\in\RR, & \text{if } \gamma \in (0,1/2), &\text{for any }\Vv,\\
0, & u\in\RR, & \text{if } \gamma \in [1/2,1), & \vp<\infty,\\
\Lambda(u)\vp, & u\in (u_-, u_+), 
& \text{if }\gamma =1,
& \vp<\infty,\\
\LL_\pm\ind_{\{u=\pm\sqrt{2\m}\}}, 
& u\in [-\sqrt{2\m}, \sqrt{2\m}], 
& \text{if } \gamma = 1/2,
& \vp=\infty, \m<\infty,
\end{array}
\right.
\end{equation*}
and is infinite elsewhere, where $\LL_{\pm} \in [0,\infty]$.
Whenever~$\gamma>1$ (for any~$\Vv$), or~$\m<\infty$ and~$\gamma>1/2$, the limit is infinite everywhere except at the origin.
\end{proposition}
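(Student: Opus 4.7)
The plan is to exploit the affine decomposition~\eqref{eq:TowerProperty}, from which
\begin{equation*}
\Lambda_\gamma\!\left(t,\frac{u}{t^\gamma}\right)
 = t^\gamma \Crm\!\left(t,\frac{u}{t^\gamma}\right)
 + t^\gamma \log \M_\Vv\!\left(\Drm\!\left(t,\frac{u}{t^\gamma}\right)\right),
\end{equation*}
and then to track the two summands separately, combining small-time asymptotics of the Heston characteristic exponents $\Crm,\Drm$ (whose closed-form expressions are recalled in Appendix~\eqref{eq:HestonMGF}) with the tail behaviour of~$\M_\Vv$.

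I would first assemble two asymptotic inputs.
\emph{(i) Heston expansions.} From the Riccati ODEs $\partial_t\Drm = \tfrac12\xi^2\Drm^2 + (\rho\xi z - \kappa)\Drm + \tfrac12 z(z-1)$, $\partial_t\Crm = \kappa\theta\Drm$, one gets, for any fixed $z\in\RR$, $\Drm(t,z) = \tfrac12 z(z-1)t + O(t^2)$ and $\Crm(t,z) = O(t^2)$ as $t \downarrow 0$. For the critical scaling $z=u/t$ with $u \in (u_-,u_+)$, the Forde--Jacquier identity gives $t\Drm(t,u/t)\to\Lambda(u)$ and $t\Crm(t,u/t)\to 0$; outside $[u_-,u_+]$ the cotangent denominator in~\eqref{eq:lambda_H} changes sign and $\Drm(t,u/t)$ itself diverges for all $t$ small. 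For intermediate $\gamma\in(0,1)$ the fixed-$z$ expansion applies uniformly at $z=u/t^\gamma$ because $(u/t^\gamma)\cdot t = u t^{1-\gamma}\to 0$, yielding $\Drm(t,u/t^\gamma) \sim \tfrac12 u^2 t^{1-2\gamma}$ and $\Crm(t,u/t^\gamma) = O(t^{2-2\gamma})$.
\emph{(ii) Tail of $\M_\Vv$.} Sandwiching $\M_\Vv(z)$ for $z>0$ between $\E^{\vm z}$ and $\E^{\vp z}$, and lower-bounding by $\E^{z(\vp-\eps)}\PP(\Vv\geq \vp-\eps)$, one obtains $\log \M_\Vv(z)/z \to \vp$ as $z\to\infty$ whenever $\vp<\infty$; if $\m<\infty$ then $\M_\Vv(z)=+\infty$ for every $z>\m$.

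With these in hand I would walk through the regimes. For $\gamma \in (0, 1/2)$, $\Drm(t,u/t^\gamma)\to 0$ so $\log\M_\Vv(\Drm)\sim\EE(\Vv)\Drm$ and $t^\gamma\log\M_\Vv(\Drm) \sim \tfrac12\EE(\Vv) u^2 t^{1-\gamma} \to 0$, irrespective of $\Vv$. For $\gamma \in [1/2, 1)$ with $\vp<\infty$, factorise $t^\gamma\log\M_\Vv(\Drm)=(t^\gamma\Drm)\cdot(\log\M_\Vv(\Drm)/\Drm)$; the second factor stays bounded (converging to $\log\M_\Vv(u^2/2)/(u^2/2)$ when $\gamma=1/2$, and to~$\vp$ when $\gamma\in(1/2,1)$ since $\Drm\to\infty$), while $t^\gamma\Drm\to 0$, so $\Lambda_\gamma\equiv 0$. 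For $\gamma = 1$ with $\vp<\infty$, the same factorisation together with $\log\M_\Vv(z)/z\to\vp$ delivers $t\log\M_\Vv(\Drm(t,u/t))\to \vp\Lambda(u)$ on $u\in(u_-,u_+)$, while outside this interval $\Drm(t,u/t)$ ceases to be finite. Finally, for $\gamma = 1/2$ with $\vp=\infty$ and $\m<\infty$, $\Drm(t,u/t^{1/2})\to u^2/2$ from below, so $\M_\Vv(\Drm)$ stays finite for $|u|<\sqrt{2\m}$ (giving limit $0$), diverges for $|u|>\sqrt{2\m}$, and at $u = \pm\sqrt{2\m}$ the limit depends on whether $\M_\Vv(\m)$ is finite and on the speed at which $\Drm(t,u/t^{1/2})$ approaches $\m$---this is the freedom encoded by $\LL_\pm\in[0,\infty]$. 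The remaining assertions ($\gamma>1$, or $\m<\infty$ with $\gamma>1/2$) are immediate: $\Drm(t,u/t^\gamma)$ overshoots the abscissa of convergence $\m$, or $t^\gamma\Crm$ itself explodes, rendering the cgf infinite away from the origin.

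The main technical obstacle lies in case $\gamma=1$: one must ensure that the convergence $\log\M_\Vv(z)/z\to\vp$ combines well with the critical-scaling limit $t\Drm(t,u/t)\to\Lambda(u)$, and that the excluded set $u\notin[u_-,u_+]$ genuinely forces $\M(t,u/t) = +\infty$ for small $t$. The boundary behaviour at $u = \pm\sqrt{2\m}$ under $\gamma=1/2$ is the second subtle point, but amounts only to recording that different tails of $\Vv$ near $\m$ realise different values of $\LL_\pm$, which is why the statement is framed with $\LL_\pm\in[0,\infty]$ rather than a single formula.
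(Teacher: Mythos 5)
Your proposal is correct and follows essentially the same route as the paper: the tower-property split $\Lambda_\gamma = t^\gamma\Crm + t^\gamma\log\M_\Vv\circ\Drm$, the small-time expansions of the Heston exponents $\Crm,\Drm$ at the various scales (Lemmas~\ref{lemma:Dasymp}--\ref{lemma:Casymp}), and the sandwiching argument giving $\log\M_\Vv(z)/z\to\vp$ when $\vp<\infty$. The only cosmetic differences are your tidy factorisation $t^\gamma\log\M_\Vv(\Drm)=(t^\gamma\Drm)\cdot(\log\M_\Vv(\Drm)/\Drm)$, which the paper does not spell out, and your statement that $\Drm(t,u/\sqrt{t})\to u^2/2$ ``from below'' (the sign of the $O(\sqrt{t})$ correction depends on $\rho$, but this is immaterial since the boundary value $\LL_\pm$ is left unspecified); for $\gamma>1$ the paper's mechanism is oscillation/undefinedness of the Heston exponents rather than $\Drm$ crossing~$\m$, but your conclusion is the same.
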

We shall call the (pointwise) limit `degenerate' whenever it is either equal to zero everywhere or 
zero at the origin and infinity everywhere else.
In Proposition~\ref{prop:limit_of_cgf}, only the last two cases are not degenerate.
\begin{remark}\label{rem:h}\ 
\begin{enumerate}[(i)]
\item The case where~$\vp$ and~$\m$ are both infinite is treated separately, in Section~\ref{sec:ThinTail},
as more assumptions are needed on the behaviour of the distribution of~$\Vv$.
\item If~$h$ is not a power function, the proofs of Proposition~\ref{prop:limit_of_cgf} and Theorem~\ref{lem:ThinTail} indicate that we only need to compare the order of~$h$ with orders of~$t^{1/2}$ and~$t$. 
Any non-power function then yields degenerate limits.
\item In the last case, 
$\LL_{\pm}$ depend on the explicit form of the mgf of~$\Vv$.
Example~\ref{ex:NcCS} illustrates this.
\end{enumerate}
\end{remark}

When the random initial distribution~$\Vv$ has bounded support ($\vp<\infty$), 
Proposition~\ref{prop:limit_of_cgf} indicates that the only possible speed factor is $\gamma=1$,
and a direct application 
of the G\"artner-Ellis theorem (Theorem~\ref{thm:Gartner})
implies a large deviations for the sequence~$(X_t)_{t\geq 0}$;
adapting directly the methodology from~\cite{FordeJac09}, 
we obtain the small-time behaviour of the implied volatility:

\begin{corollary}\label{Coro:BddSupport}
If~$\vp<\infty$, 
then~$X\sim\LDP_0(t, \Lambda_{\vp}^*)$ with
$\Lambda_{\vp}^*(x):= \sup\left\{ux - \Lambda(u)\vp: u\in(u_-,u_+)\right\}$ and
\begin{equation}\label{eq:IVStandard}
\lim_{t\downarrow 0}\sigma_t^2(x) = \frac{x^2}{2\Lambda_{\vp}^*(x)},
\qquad\text{for all }x \ne 0.
\end{equation}
\end{corollary}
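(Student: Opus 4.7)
The plan is to upgrade the pointwise cgf convergence from Proposition~\ref{prop:limit_of_cgf} to a large deviations principle for~$(X_t)_{t\geq 0}$ at speed~$t$ via the G\"artner--Ellis theorem, and then to translate this LDP into the small-maturity implied volatility asymptotics following the approach of Forde and Jacquier~\cite{FordeJac09}.

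\textbf{Step 1 (LDP via G\"artner--Ellis).} In the regime $\gamma=1$ with $\vp<\infty$, Proposition~\ref{prop:limit_of_cgf} identifies
$$
\Lambda_1(u)=\lim_{t\downarrow 0}t\log\M(t,u/t)=\Lambda(u)\vp \quad\text{on }(u_-,u_+),
\qquad \Lambda_1\equiv+\infty \text{ elsewhere.}
$$
Since $u_-<0$ and $u_+\geq 1$, the origin lies in the interior of the effective domain. The function $\Lambda_1$ is smooth on $(u_-,u_+)$ and steep at the endpoints: the points $u_\pm$ are precisely the zeros of the denominator $\rrho\cot(\xi\rrho u/2)-\rho$ in~\eqref{eq:lambda_H}, so $|\Lambda_1'(u)|\to\infty$ as $u\to u_\pm$. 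Hence $\Lambda_1$ is essentially smooth, and the G\"artner--Ellis theorem (Theorem~\ref{thm:Gartner}) delivers $X\sim\LDP_0(t,\Lambda_{\vp}^*)$ with $\Lambda_{\vp}^*$ the Legendre--Fenchel conjugate of~$\Lambda_1$, matching the stated form.

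\textbf{Step 2 (LDP to implied volatility).} Given the martingale property of $\exp(X)$ and the LDP from Step~1, one proceeds as in~\cite[Sections~3--4]{FordeJac09}. For $x\neq 0$, sandwiching the call/put payoff $(\E^{X_t}-\E^x)^+$ (resp.~$(\E^x-\E^{X_t})^+$) between indicator approximations and combining the LDP with an H\"older-type control of $\EE[\E^{pX_t}]$ for some $p>1$---available for $t$ small, since $\vp<\infty$ keeps the moment-explosion time bounded below---yields
$$
\lim_{t\downarrow 0}t\log c_t(x)=-\Lambda_{\vp}^*(x),
$$
where $c_t(x)$ denotes the out-of-the-money option price at log-strike~$x$. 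Inverting via the small-time Black--Scholes formula---which relates the exponential decay rate of out-of-the-money option prices to $\sigma_t^2(x)$ through $\sigma_t^2(x)\to x^2/(2\Lambda_{\vp}^*(x))$---then produces~\eqref{eq:IVStandard}.

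\textbf{Main obstacle.} The delicate step is verifying the steepness of $\Lambda_1$ at $u_\pm$, which is what upgrades the G\"artner--Ellis upper bound into a full LDP. Because $\Lambda_1$ differs from the cgf limit of the standard Heston model only by the harmless multiplicative constant~$\vp$, both this verification and the subsequent call-price asymptotics transfer essentially verbatim from~\cite{FordeJac09}. This also explains why bounded-support randomisation on its own cannot generate an explosion of the smile---any such explosion must await the unbounded cases $\vp=\infty$ treated in the following subsections.
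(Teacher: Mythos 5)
Your proposal is correct and follows essentially the same route the paper takes: Proposition~\ref{prop:limit_of_cgf} for the pointwise cgf limit, the G\"artner--Ellis theorem for the LDP (with essential smoothness inherited, up to the constant $\vp$, from the standard Heston limiting cgf), and a direct adaptation of the Forde--Jacquier methodology to pass from the LDP to the implied volatility. The paper's own proof is just the two-line remark above the corollary; your write-up supplies the same argument with the standard supporting details made explicit.
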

Approximations, in particular around the at-the-money $x=0$, of the rate function~$\Lambda_{\vp}^*$,
and hence of the small-time implied volatility, can also be found in~\cite[Theorem 3.2]{FordeJac09}, 
and apply here directly as well.
Further, as discussed in detail in Section~\ref{ex:UniformDist}, 
higher order terms in the small-time expansion of $\sigma_t^2(x)$ can be obtained
if the mgf of the initial randomisation is known in closed form.

\subsection{The thin-tail case}\label{sec:ThinTail}
In the case $\m=\infty$, Proposition~\ref{prop:limit_of_cgf} is not sufficient as several different behaviours 
can occur.
In this case, which we naturally coin `thin-tail', a more refined analysis is needed, 
and the following assumption shall be of uttermost importance:
\begin{assumption}[Thin-tail]\label{Assu:V0}
$\vp = \infty$ and~$\Vv$ admits a smooth density~$f$ with
$\log f(v)\sim -l_1 v^{l_2}$ as~$v$ tends to infinity, for some~$(l_1,l_2)\in \RR_+^*\times(1,\infty)$.
\end{assumption}

For notational convenience, we introduce the following two special rates of convergence
$\frac{1}{2}<\ggm<1<\ggp$, 
and two positive constants~$\ccm$,~$\ccp$:
\begin{equation}\label{eq:SpecialRates}
\ggm := \frac{l_2}{1+l_2},
\qquad\qquad
\ggp := \frac{l_2}{l_2-1},
\qquad\qquad
\ccm := (2l_1l_2)^{\frac{1}{1+l_2}},
\qquad\qquad
\ccp := (2l_1l_2)^{\frac{1}{1-l_2}}.
\end{equation}

The following theorem is the main result of this thin-tail section, and provides
both a large deviations principle for the log-stock price process
as well as its implications on the small-maturity behaviour of the implied volatility.
Define the function $\LLm: \RR\to\RR_+$ by
\begin{equation}\label{eq:I}
\LLm(x) := \frac{\ccm}{2\ggm} x^{2\ggm},\qquad\text{for any }x\text{ in }\RR.
\end{equation}

\begin{theorem}\label{thm:ThinTail}
Under Assumption~\ref{Assu:V0}, $X\sim\LDP_0(t^{\ggm},\LLm)$ with~$\LLm$ given in~\eqref{eq:I}, and, for any $x\ne 0$, 
$$
\lim_{t\downarrow 0}t^{1-\ggm}\sigma_t^2(x)
 = \ccm^{-1}\ggm x^{2(1-\ggm)}.
$$
\end{theorem}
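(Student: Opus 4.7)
My plan is to follow the three-step scheme of~\cite{FordeJac09} adapted to our randomised setting: (i) compute the pointwise limit of the rescaled cumulant generating function of~$(X_t)_{t\ge 0}$ at speed $t^\ggm$, (ii) invoke the G\"artner--Ellis theorem (Theorem~\ref{thm:Gartner}) to deduce a large deviations principle for~$X$ with rate function~$\LLm$, and (iii) translate this LDP into the announced implied-volatility asymptotics via a standard out-of-the-money Black--Scholes inversion.

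For step~(i), I would start from the tower-property decomposition~\eqref{eq:TowerProperty}, which gives
\[
t^\ggm \log \M(t, u/t^\ggm) \;=\; t^\ggm \Crm(t, u/t^\ggm) \,+\, t^\ggm \log \M_\Vv\!\left(\Drm(t, u/t^\ggm)\right).
\]
A small-time Taylor expansion of the Heston Riccati coefficient~$\Drm$, valid because $u\,t^{1-\ggm}\to 0$ when $\ggm<1$, yields $\Drm(t, u/t^\ggm) \sim \tfrac{1}{2} u^2\, t^{1-2\ggm}\to \infty$ (since $\ggm > 1/2$), while a parallel estimate on~$\Crm$ shows $t^\ggm \Crm(t,u/t^\ggm) = o(1)$. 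Under Assumption~\ref{Assu:V0}, a Tauberian argument of Kasahara type (using the regular-variation material of the appendix) converts the density tail into the sharp mgf-tail asymptotic
\[
\log \M_\Vv(s) \sim \frac{l_2-1}{l_2}\,(l_1 l_2)^{-1/(l_2-1)}\, s^{\ggp}, \qquad s \to \infty.
\]
The rate $\ggm = l_2/(1+l_2)$ is precisely the one for which $\ggm + (1-2\ggm)\ggp = 0$, so that the powers of~$t$ cancel in the composition. Rearranging the constants using $\ccp=(2l_1 l_2)^{-1/(l_2-1)}$ produces the clean limit
\[
\Lambda_\ggm(u) = \frac{\ccp}{2\ggp}\, u^{2\ggp}, \qquad u\in\RR,
\]
which is finite, $C^\infty$, strictly convex and steep.

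Step~(ii) is then immediate: G\"artner--Ellis gives $X\sim\LDP_0(t^\ggm, \LLm)$ with $\LLm = \Lambda_\ggm^*$. For a power $u\mapsto a u^p/p$ the Legendre dual is $x\mapsto a^{-q/p}x^q/q$ with $1/p+1/q=1$; here $p=2\ggp$ and $1/(2\ggp)+1/(2\ggm) = (l_2-1+l_2+1)/(2l_2) = 1$, so the conjugate exponent is $q = 2\ggm$, and the simplification $\ccp^{-\ggm/\ggp} = (2l_1 l_2)^{1/(l_2+1)} = \ccm$ yields exactly $\LLm(x) = \ccm x^{2\ggm}/(2\ggm)$ as in~\eqref{eq:I}. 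For step~(iii), following~\cite[Section~3]{FordeJac09}, the LDP combined with uniform integrability of~$\exp(X_t)$ transfers to $\lim_{t\downarrow 0} t^\ggm \log \EE[(\E^{X_t} - \E^x)^+] = -\LLm(x)$ for $x>0$ (symmetrically for puts); matching with the out-of-the-money Black--Scholes log-asymptotic $\log C_{\mathrm{BS}}(t,x,\sigma) = -x^2/(2\sigma^2 t) + o(1/t)$ and inverting delivers $t^{1-\ggm}\sigma_t^2(x) \to x^2/(2\LLm(x)) = \ccm^{-1}\ggm\, x^{2(1-\ggm)}$.

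The main obstacle lies in step~(i), specifically in the composition $\log \M_\Vv \circ \Drm(t, u/t^\ggm)$. Assumption~\ref{Assu:V0} is purely a tail statement on the density and only delivers logarithmic equivalence, so the sharp $s^{\ggp}$ asymptotics of $\log \M_\Vv(s)$ must be justified through a de~Bruijn / Kasahara Tauberian theorem (rather than a naive saddle-point evaluation), and one has to check that this equivalence is preserved under the $t$-dependent substitution $s = \Drm(t, u/t^\ggm)$, uniformly on compacts in~$u$. The remaining verifications---essential smoothness of~$\Lambda_\ggm$ and transfer of the LDP to option prices---are standard once step~(i) is under control.
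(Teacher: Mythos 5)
Your proposal is correct and follows essentially the same route as the paper: the tower-property split of the rescaled cgf, small-time expansion of the Heston Riccati functions $\Crm$ and $\Drm$ (the paper's Lemmas~\ref{lemma:Dasymp} and~\ref{lemma:Casymp}), a Kasahara-type Tauberian theorem to turn the density tail $\log f(v)\sim -l_1 v^{l_2}$ into the sharp equivalent $\log\M_\Vv(s)\sim\frac{l_2-1}{l_2}(l_1 l_2)^{-1/(l_2-1)}s^{\ggp}$ (precisely the content of the paper's Lemma~\ref{lemma: log-mgf-asymptotics}, where the slowly varying factor $\psi$ degenerates to the constant $l_1^{-1/(l_2-1)}$ under the full Assumption~\ref{Assu:V0}), then G\"artner--Ellis with the Fenchel--Legendre computation and the standard Forde--Jacquier transfer to implied volatility. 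Your arithmetic with the conjugate exponents and constants ($\ccp^{-\ggm/\ggp}=\ccm$, $1/(2\ggp)+1/(2\ggm)=1$, and the cancellation $\gamma + (1-2\gamma)\ggp = 0$ at $\gamma=\ggm$) checks out and matches the paper's definitions in~\eqref{eq:SpecialRates} and~\eqref{eq:I}; your closing remark about preserving asymptotic equivalence under the $t$-dependent substitution locally uniformly in $u$ is the same issue the paper addresses through its chain of lemmas.
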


In exponential L\'evy models, the implied variance~$\sigma_t^2(x)$ for non-zero~$x$ 
explodes at a rate~$|t\log t|$~\cite[Proposition~4]{tankov11}. 
Theorem~\ref{thm:ThinTail} implies that in a thin-tail randomised Heston model we have a much slower explosion rate of~$t^{\eta}$ with~$\eta \in (0,1/2)$. 
In~\cite{mijatovic12} the authors commented that market data suggests that implied volatility with decreasing maturity still has a reasonable range of values and does not explode significantly,
which might provide empirical grounds justifying the potential value of this randomised model as an alternative to the exponential L\'evy models.
The theorem relies on the study of the asymptotic behaviour of the rescaled mgf of~$X_t$:
\begin{lemma}\label{lem:ThinTail}
Under Assumption~\ref{Assu:V0}, the only non-degenerate speed factor is 
$\gamma = \ggm$, and 
\begin{equation}\label{eq:LambdaHThin}
\Lambda_{\ggm}(u) 
= \frac{\ccp}{2\ggp} u^{2\ggp},
\quad\text{for any }u\text{ in }\RR.
\end{equation}
\end{lemma}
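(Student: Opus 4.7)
The strategy is to exploit the tower-property decomposition~\eqref{eq:TowerProperty} and to match the small-$t$ behaviour of the Heston coefficient $\Drm(t,\cdot)$ against the large-argument asymptotic of $\M_{\Vv}$ forced by Assumption~\ref{Assu:V0}. The ranges $\gamma\in(0,1/2]$ and $\gamma\geq 1$ are already covered by Proposition~\ref{prop:limit_of_cgf} (with $\vp=\infty$, the limit at $\gamma=1$ is identically $+\infty$ off the origin), so I would reduce the task to the regime $\gamma\in(1/2,1)$.

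First, I would control the affine coefficients in the double limit $t\downarrow 0$, $v:=u/t^{\gamma}\uparrow\infty$, $vt\to 0$. From the Heston Riccati equation
$$
\partial_{t}\Drm = \frac{\xi^{2}}{2}\Drm^{2} + (\xi\rho v-\kappa)\Drm + \frac{v(v-1)}{2},\qquad \Drm(0,v)=0,
$$
a Picard-type bootstrap shows that the inhomogeneous term $v^{2}/2$ dominates, the quadratic contribution being $O(v^{4}t^{2})$ and the linear correction $O(v^{3}t)$. Hence $\Drm(t,v)=\tfrac{1}{2}v^{2}t\,(1+o(1))$ and, by integration, $\Crm(t,v)=\kappa\theta\int_{0}^{t}\Drm(s,v)\D s=O(v^{2}t^{2})$. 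Specialising to $v=u/t^{\gamma}$ therefore gives
$$
\Drm\!\left(t,\tfrac{u}{t^{\gamma}}\right)\sim \frac{u^{2}}{2}\,t^{\,1-2\gamma}\uparrow\infty,\qquad t^{\gamma}\Crm\!\left(t,\tfrac{u}{t^{\gamma}}\right)=O\bigl(t^{2-\gamma}\bigr)=o(1).
$$

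Second, I would invoke a de Bruijn--Kasahara Tauberian theorem (from the regular-variations material in Appendix~\ref{app:LDPLDP}) to translate the log-density assumption $\log f(v)\sim -l_{1}v^{l_{2}}$ into the corresponding Laplace-transform asymptotic, whose leading term is the Legendre dual of $l_{1}v^{l_{2}}$ on $\RR_{+}$:
$$
\log\M_{\Vv}(w)\;\sim\; \sup_{v\geq 0}\bigl(wv-l_{1}v^{l_{2}}\bigr) \;=\; \frac{w^{\ggp}}{\ggp\,(l_{1}l_{2})^{1/(l_{2}-1)}},\qquad w\uparrow\infty,
$$
the supremum being attained at $v^{\ast}=(w/(l_{1}l_{2}))^{1/(l_{2}-1)}$.

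Plugging the two inputs back into~\eqref{eq:TowerProperty} yields, for $\gamma\in(1/2,1)$ and $u\in\RR^{\ast}$,
$$
\Lambda_{\gamma}\!\left(t,\tfrac{u}{t^{\gamma}}\right) = \frac{(u^{2}/2)^{\ggp}}{\ggp\,(l_{1}l_{2})^{1/(l_{2}-1)}}\,t^{E(\gamma)}(1+o(1))+o(1),\qquad E(\gamma):=\gamma(1-2\ggp)+\ggp.
$$
Since $\ggp>1$, the map $E$ is strictly decreasing and vanishes precisely at $\gamma=\ggp/(2\ggp-1)=\ggm$; consequently the limit is $0$ for $\gamma\in(1/2,\ggm)$ and $+\infty$ for $\gamma\in(\ggm,1)$, isolating $\gamma=\ggm$ as the unique non-degenerate speed. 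The elementary identity $2^{1-\ggp}=2^{-1/(l_{2}-1)}$ (from $\ggp=l_{2}/(l_{2}-1)$), combined with the definition $\ccp=(2l_{1}l_{2})^{1/(1-l_{2})}$, collapses the prefactor to $\ccp/(2\ggp)$, giving~\eqref{eq:LambdaHThin}. The main technical hurdle is the Tauberian step: the pointwise log-density equivalence in Assumption~\ref{Assu:V0} must be upgraded to the stated sharp Laplace asymptotic, which in general requires some eventual monotonicity (or a two-sided envelope) on $f$ on top of smoothness so that Kasahara's theorem applies cleanly; the Riccati expansion is routine but needs to be made uniform in $u$ on compacts in order to support the pointwise limit claim.
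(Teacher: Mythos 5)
Your proposal is substantially correct and follows the same strategy as the paper: split $\log\M(t,u/t^\gamma)$ via the tower-property decomposition~\eqref{eq:TowerProperty}, extract the small-$t$ asymptotics $\Drm(t,u/t^\gamma)\sim\tfrac{u^2}{2}\,t^{1-2\gamma}$ and $t^\gamma\Crm(t,u/t^\gamma)=o(1)$, and feed the resulting large argument into a Tauberian asymptotic for $\log\M_\Vv$. The two technical deviations are cosmetic: you obtain the $\Drm$ asymptotics by a Riccati/Picard bootstrap where the paper expands the closed form (Lemma~\ref{lemma:Dasymp}), and you phrase the Tauberian output as the Legendre dual $\sup_{v\geq 0}(wv-l_1 v^{l_2})$ where the paper cites the Kasahara form of Lemma~\ref{lemma: log-mgf-asymptotics}; under the strong form of Assumption~\ref{Assu:V0} these two formulas coincide, as your own arithmetic with $\ggp$ and $\ccp$ verifies, and the identification of the critical exponent $E(\gamma)=\ggp-\gamma(2\ggp-1)$ with root $\ggm$ is correct.

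Two small points to fix. First, Proposition~\ref{prop:limit_of_cgf} in the case $\vp=\m=\infty$ only covers $\gamma\in(0,1/2)$ and $\gamma>1$; the boundary values $\gamma=1/2$ and $\gamma=1$ are \emph{not} in its scope, so you cannot defer them there. They are easy: at $\gamma=1/2$ the argument of $\M_\Vv$ tends to the finite constant $u^2/2$ (and $\m=\infty$), so $t^{1/2}\log\M_\Vv(\cdot)\to 0$; at $\gamma=1$ the same Tauberian computation you ran gives exponent $1-\ggp<0$ and hence $+\infty$ off the origin, which is exactly how the paper's proof dispatches $\gamma=1$. Second, your closing worry about needing eventual monotonicity of $f$ for the Tauberian step is unnecessary: Assumption~\ref{Assu:V0} already suffices if one passes first from $\log f$ to $\log\PP(\Vv\geq\cdot)$ via Bingham's Lemma~\ref{lem:Bingham} and then applies Kasahara, which is precisely the paper's route in Lemma~\ref{lemma: log-mgf-asymptotics}. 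Likewise, the G\"artner--Ellis theorem only needs the pointwise-in-$u$ limit of the rescaled cgf, so uniformity of the Riccati expansion over compacts is not required.
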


Assumption~\ref{Assu:V0} in particular implies that the function $\log f$ is regularly varying with index~$l_2$
(which we denote $|\log f| \in \RV_{l_2}$, see also Appendix~\ref{app:LDPRegVar} for a review of 
and useful results on regular variation).
Without this slightly stronger assumption, however, the constant in~\eqref{eq:LambdaHThin}--essential
to compute precisely the rate function governing the corresponding large deviations principle
(Theorem~\ref{thm:ThinTail})--would not be available.
In order to prove the lemma and hence the theorem, 
let us first state and prove the following result:
\begin{lemma}\label{lemma: log-mgf-asymptotics}
If $|\log f|\in \RV_l$ ($l>1$), then
$
\log\M_{\Vv}(z) \sim \left(l-1\right) \left(\frac{z}{l}\right)^{\frac{l}{l-1}}\psi(z)$
at infinity,
with~$\psi\in\RV_0$ defined as 
$$
\psi(z) := \left(\frac{z}{|\log f|^\leftarrow(z)}\right)^\leftarrow z^{\frac{l}{1-l}}.
$$
\end{lemma}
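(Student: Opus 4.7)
The proof follows the standard Laplace method (saddle point) applied to
$$
\M_\Vv(z) \;=\; \int_0^\infty \E^{zv} f(v)\,\D v,
$$
combined with Karamata-type asymptotic results for regularly varying functions. Setting $g:=|\log f|\in\RV_l$, the hypothesis reads $\log f(v) = -g(v)(1+o(1))$, so the exponent $zv-g(v)$ in the integrand is, asymptotically, maximised at the point $v^*(z)$ characterised by $g'(v^*)=z$. I would therefore first apply Karamata's theorem to $g$ (using the smoothness of $f$ and hence of $g$) to obtain $vg'(v)/g(v)\to l$, which combined with $g'(v^*)=z$ yields $g(v^*)\sim zv^*/l$ and hence the candidate maximal value
$$
\phi(v^*) \;:=\; zv^*-g(v^*) \;\sim\; \frac{l-1}{l}\,zv^*.
$$

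The second step is to express $v^*$ through $g^{\leftarrow}$ so that the formula for $\psi$ appears. Writing $u:=g(v^*)$, the relation $g(v^*)\sim zv^*/l$ becomes $u/g^{\leftarrow}(u)\sim z/l$; setting $\varphi(u):=u/g^{\leftarrow}(u)$, which lies in $\RV_{(l-1)/l}$ since $g^{\leftarrow}\in\RV_{1/l}$, this gives $u\sim \varphi^{\leftarrow}(z/l)$. Because $\varphi^{\leftarrow}\in\RV_{l/(l-1)}$, the scaling property yields $\varphi^{\leftarrow}(z/l)\sim l^{-l/(l-1)}\varphi^{\leftarrow}(z)$, whence
$$
v^*(z) \;=\; \frac{lu}{z}\;\sim\; l^{-1/(l-1)}\,z^{-1}\,\varphi^{\leftarrow}(z).
$$
Substituting into the expression for $\phi(v^*)$ from Step~1 and simplifying the powers of $l$ gives exactly
$$
\phi(v^*) \;\sim\; (l-1)(z/l)^{l/(l-1)}\,\psi(z),\qquad \psi(z)=z^{l/(1-l)}\varphi^{\leftarrow}(z),
$$
which matches the definition of $\psi$ in the statement.

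The third step is to promote this heuristic to rigorous upper and lower bounds for $\log\M_\Vv(z)$. For the upper bound, fix $\eps>0$, pick $v_\eps$ with $\log f(v)\leq-(1-\eps)g(v)$ for $v\geq v_\eps$, bound the piece on $[0,v_\eps]$ crudely by $\E^{zv_\eps}$ (negligible on the log scale), and bound the tail by a power-of-$z$ Laplace prefactor times $\exp\sup_v[zv-(1-\eps)g(v)]$; the supremum equals $(1-\eps)^{-1/(l-1)}$ times the expression of Step~2, which converges to $\phi(v^*)$ as $\eps\downarrow 0$. For the lower bound, restrict integration to a neighbourhood $[v^*-\delta,v^*+\delta]$ of $v^*$ on which a second-order expansion of $g$ at $v^*$ controls the exponent, yielding $\log\M_\Vv(z)\geq \phi(v^*)(1+o(1))$. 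Combining the two bounds and using Step~2 gives the claim.

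The main obstacle is not the algebraic manipulation but the analytic justification of the Laplace bounds: the Karamata relation between $g$ and $g'$, and the second-order Taylor estimate for $g$ near $v^*$, require regular-variation properties of the derivatives of $g$, not merely of $g$ itself. These follow from smooth variation theory (see Bingham--Goldie--Teugels, Ch.~1) under the assumed smoothness of $f$. An alternative route, which I would use as a shortcut, is to invoke Kasahara's exponential Tauberian theorem directly, converting regular variation of $-\log f$ into the asymptotic of $\log\M_\Vv$; one then still needs the algebra of Step~2 to rewrite the resulting constant in the form $\psi(z)$.
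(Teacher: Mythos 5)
Your main argument is a direct Laplace (saddle-point) computation, which is a genuinely different route from the paper's. The paper's proof is two lines: Bingham's Lemma (Lemma~\ref{lem:Bingham}) converts $|\log f|\in\RV_l$ into $-\log\PP(\Vv\geq x)\sim |\log f|(x)$, and then Kasahara's exponential Tauberian theorem~\cite[Theorem~4.12.7]{bingham89} converts that tail asymptotic into the stated mgf asymptotic (the constant and the slowly varying factor $\psi$ come out of the de Bruijn conjugate built into Kasahara's statement). Your ``shortcut'' at the end is exactly this route, but it silently skips the first step: Kasahara's theorem operates on the log-survival function $-\log\PP(\Vv\geq x)$, not on $-\log f$, and the two only agree after an application of Bingham's Lemma. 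Your algebra in Step~2 showing that the Legendre transform of $g$ simplifies to $(l-1)(z/l)^{l/(l-1)}\psi(z)$ is correct and is precisely the bookkeeping hidden inside Kasahara's statement.

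There is one step in your main route that would not survive scrutiny as written. You characterise the maximiser by $g'(v^*)=z$, use the Karamata relation $vg'(v)/g(v)\to l$, and use a second-order Taylor estimate of $g$ near $v^*$; you justify these by saying they ``follow from smooth variation theory \ldots\ under the assumed smoothness of $f$.'' They do not: $|\log f|\in\RV_l$ together with smoothness of $f$ does not make $g'$ regularly varying of index $l-1$ --- smooth functions in $\RV_l$ can have wildly oscillating derivatives. The standard repair is the Smooth Variation Theorem~\cite[Theorem~1.8.2]{bingham89}: replace $g$ by a smoothly varying $g_1\sim g$, run the saddle point for $g_1$, and observe that since $g-g_1=o(g(v^*))$ while $\sup_v\{zv-g(v)\}\asymp g(v^*)$, the replacement perturbs $\log\M_\Vv(z)$ only at lower order. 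Alternatively, one can manufacture matching upper and lower Laplace bounds using regular variation of $g$ alone, by integrating over geometric shells $[v,(1+\delta)v]$ and using $g((1+\delta)v)\sim(1+\delta)^l g(v)$, which avoids any reference to $g'$. Either fix closes the gap; with it, your Steps~1--2 correctly recover $\psi$.
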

\begin{proof}
Since $|\log f| \in \RV_l$, 
Bingham's Lemma (Lemma~\ref{lem:Bingham}) implies 
$\log\PP(\Vv\geq x) = \log\int_{x}^{\infty}\E^{\log f(y)}\D y \sim \log f(x)$,
as~$x$ tends to infinity,
and the result follows from Kasahara's Tauberian theorem~\cite[Theorem~4.12.7]{bingham89}.
\end{proof}

\begin{proof}[Proof of Lemma~\ref{lem:ThinTail} and of Theorem~\ref{thm:ThinTail}]
By Lemma~\ref{lemma:suffi_condition}, the mgf of~$\Vv$ is well-defined on~$\RR_+$.
Lemmas~\ref{lemma: log-mgf-asymptotics} and~\ref{lemma:Dasymp} imply that as~$t$ tends to zero,
\begin{equation*}
t^\gamma\log\M_\Vv\left(\Drm\left(t,\frac{u}{t^\gamma}\right)\right)
=
\left\{
\begin{array}{rll}
\displaystyle t^\gamma\log\M_{\Vv}\left(\frac{u^2}{2}t^{1-2\gamma}
\left(1 + \Oo\left(t^{1-\gamma}\right)\right)\right)
& \displaystyle \sim
\frac{\ccp}{2\ggp}u^{2\ggp}
t^{\ggp(1 - \gamma/\ggm)},
 & \text{when }\gamma \in (1/2,1),\\
\displaystyle t\log\M_{\Vv}\left(\frac{\Lambda(u)}{t}\left(1 + \Oo(t)\right)\right)
& \displaystyle \sim 
\frac{\ccp}{2\ggp}2^{\ggp}\Lambda(u)^{\ggp}t^{1-\ggp},
 & \text{when }\gamma=1.
\end{array}
\right.
\end{equation*}
For $u\ne 0$ the right-hand side is well defined with non-zero limit if and only 
if~$\gamma =\ggm\in(1/2,1)$;
the case~$\gamma=1$ does not yield any non-degenerate behaviour, and the lemma follows.

The large deviations principle stated in Theorem~\ref{thm:ThinTail} 
is a direct consequence of Lemma~\ref{lem:ThinTail} and the G\"artner-Ellis theorem 
(Theorem~\ref{thm:Gartner}), 
noting that the function~$\Lambda_{\ggm}$ in~\eqref{eq:LambdaHThin} satisfies all the required conditions and admits~$\LLm$ as Fenchel-Legendre transform.
The translation of this asymptotic behaviour into implied volatility follows the same lines as in~\cite{FordeJac09}.
\end{proof}
\subsection{The fat-tail case}\label{sec:FatTail}
If~$\vp$ is infinite and~$\m$ is finite, Proposition~\ref{prop:limit_of_cgf} 
states that the only choice for the rescaling factor is~$h(t) = t^{1/2}$, 
but the form of the limiting rescaled cumulant generating function 
does not yield any immediate asymptotic estimates for the probabilities.
In this case, we impose the following assumption on the moment generating function
of~$\Vv$ in the vicinity of the upper bound~$\m$ of its effective domain:
\begin{assumption}\label{assu:fat-tail-2}
There exists $(\gamma_0,\gamma_1,\gamma_2,\omega) \in \RR^*\times\RR\times\RR\times\NN_+^*$, 
such that the following asymptotics hold for the cgf of~$\Vv$ as~$u$ tends to~$\m$ from below:
\begin{equation}\label{eq:fat-tail assu2}
\log\M_\Vv(u) = 
\left\{
\begin{array}{ll}
\displaystyle \gamma_0\log(\m-u) + \gamma_1 + o(1), &\text{ for }\omega=1,\gamma_0<0,\\
\displaystyle \frac{\gamma_0}{(\m-u)^{\omega-1}} \left\{1+ \gamma_1(\m-u)\log(\m-u) + \gamma_2(\m-u) + o\left(\m-u\right)\right\}, & \text{ for }\omega\geq 2,\gamma_0>0,
\end{array}
\right.
\end{equation}
and
\begin{equation}\label{eq:fat-tail assu3}
\frac{\M'_\Vv(u)}{\M_\Vv(u)} =
\left\{
\begin{array}{ll}
\displaystyle \frac{|\gamma_0|}{\m-u}\left(1 + o(1)\right), &\text{ for }\omega =1,\gamma_0<0,\\
\displaystyle \frac{(\omega-1)\gamma_0 }{(\m-u)^\omega}
\left\{1+ \A(\m-u)\log(\m-u) + \B(\m-u) + o\left(\m-u\right)\right\}, &\text{ for }\omega\geq 2,\gamma_0>0,
\end{array}
\right.
\end{equation}
where~$\A :=\gamma_1(\omega-2)(\omega-1)^{-1}$ and
$\B :=\left[\gamma_2(\omega-2)-\gamma_1\right](\omega-1)^{-1}$.
\end{assumption}
\begin{remark}
Condition~\eqref{eq:fat-tail assu3} together with the expressions of~$\A$ and~$\B$ imply that 
the asymptotics of~$(\log(\M_\Vv))'$ can be derived by
differentiating~\eqref{eq:fat-tail assu2} term by term.
This is of course not always true;
however, Condition~\eqref{eq:fat-tail assu3} is rather mild, 
and we shall check it directly in several cases where~$\M_\Vv$ is known in closed form.
\end{remark}
\begin{example}\ 
\begin{itemize}
\item For the Exponential distribution with parameter~$\m$,
$(\gamma_0, \gamma_1, \omega) = (-1, \log \m, 1)$.
\item For the non-central~$\chi$-squared distribution as in Example~\ref{ex:NcCS},
$(\gamma_0,\gamma_1,\gamma_2,\omega) = \left(\frac{\lambda}{4}, -\frac{2q}{\lambda}, -2\left(1+\frac{q}{\lambda}\log 2\right), 2\right)$.
\end{itemize}
\end{example}
For $\m \in (0,\infty)$, introduce the function~$\Lambda^*:\RR\to\RR_+$ as
\begin{equation}\label{eq:fat-tail-call-asymp}
\Lambda^*(x) :=  \sqrt{2\m}|x|,
\end{equation}
as well as, for any $t>0$ the functions~$\Ee_t,\Cc_t:\RR^*\to\RR_+^*$ by
$\Ee_t(x) := \ind_{\{\omega=1\}} + \exp\left(\frac{c_1(x)}{t^{1/4}}\right)\ind_{\{\omega=2\}}$ and
\begin{equation}\label{eq:functionC}
\Cc_t(x) := 
\left\{
\begin{array}{ll}
\displaystyle
\exp\left(\frac{1}{2}(\rho\xi\m+1)x +\gamma_1 \right) \frac{|x|^{|\gamma_0|-1}}{\Gamma(|\gamma_0|)(2\m)^{1+|\gamma_0|/2}} t^{1-\frac{1}{2}|\gamma_0|}, &\text{ for }\omega=1,\\
\displaystyle
\exp\left(\frac{1}{2}(\rho\xi\m+1)x + \gamma_0\gamma_2 + \frac{\gamma_0}{4\m} \right)
\frac{1}{2\m\sqrt{2\pi}\zeta(x)}t^{\frac{7}{8} + \frac{1}{4}\gamma_0\gamma_1}, &\text{ for }\omega=2,
\end{array}
\right.
\end{equation}
where the functions~$c_1$ and~$\zeta$ are defined in 
Lemmas~\ref{lem:main-contribution}-\ref{lemma: cf-of-z} respectively.
Then the following behaviour, proved in Section~\ref{sec:proofOfFat-tailImpvolAsymps}, holds for European option prices:

\begin{theorem}\label{thm:fat-tailImpvolAsymps}
Under Assumption~\ref{assu:fat-tail-2}, 
European Call options with strike~$\E^x$ have the following expansion:
$$
\EE\left(\E^{X_t} - \E^x\right)^+
 =
\left(1-\E^x\right)^+ 
+ \exp\left(-\frac{\Lambda^*(x)}{\sqrt{t}}\right)\Ee_t(x)\Cc_t(x)\left(1+o(1)\right),
\quad\text{for any }x\ne 0,\text{ as }t\text{ tends to zero}.
$$
Moreover, the small-time implied volatility 
behaves as follows whenever~$x\neq 0$:
\begin{equation*}
\sigma_t^2(x) = \frac{|x|}{2\sqrt{2\m t}} + 
\left\{
\begin{array}{ll}
\displaystyle \mathrm{h}_1^{(1)}(x) + \mathrm{h}_2^{(1)}\log(t) + o(1), & \text{for }\omega=1,\\
\displaystyle \frac{c_1(x)}{4\m t^{1/4}} +\mathrm{h}_1^{(2)}(x) + \mathrm{h}_2^{(2)}\log(t) + o(1), & \text{for }\omega=2,
\end{array}
\right.
\end{equation*}
where 
\begin{align*}
\mathrm{h}_1^{(1)}(x) & := \frac{1}{4\m}\left\{\frac{\rho\xi\m }{2}x
+ \left(|\gamma_0|-\frac{1}{2}\right)\log|x| 
+ \gamma_1 + \log(4\sqrt{\pi}) 
- \left(\frac{|\gamma_0|}{2}+\frac{1}{4}\right)\log(2\m)-\log\Gamma(|\gamma_0|)\right\},\\
\mathrm{h}_1^{(2)}(x) & := 
 \frac{1}{4\m}\left\{\frac{\rho\xi\m}{2}x + \gamma_0\gamma_2 + \frac{9\gamma_0}{4\m} 
+\frac{5}{8}\log 2 -\frac{3}{8}\log \m + \frac{1}{4}\log \gamma_0 - \frac{1}{4}\log |x|\right\},\\
\mathrm{h}_2^{(1)} & := \frac{1}{8\m}\left(\frac{1}{2} - |\gamma_0|\right),
\qquad\qquad
\mathrm{h}_2^{(2)} := 
 \frac{1}{16\m}\left(\frac{1}{2} + \gamma_0\gamma_1\right),\quad\quad c_1(x) \text{ defined as in Lemma~\ref{lem:main-contribution}}.
\end{align*}
\end{theorem}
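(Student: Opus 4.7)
The plan is to extract sharp (non-logarithmic) asymptotics of the Call price by complex analytic means, and then invert the Black--Scholes pricing formula to read off the implied volatility expansion. Since Proposition~\ref{prop:limit_of_cgf} only delivers a degenerate pointwise limit of the rescaled cumulant generating function in the fat-tail regime (zero away from $\pm\sqrt{2\m}$, and only the uninformative $\LL_\pm\ind_{\{u=\pm\sqrt{2\m}\}}$ at the boundary), a G\"artner--Ellis argument cannot recover the prefactor $\Ee_t(x)\Cc_t(x)$: the sharp asymptotics must come from a direct analysis of $\M(t,\cdot)$ in the vicinity of its singularity.

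First, I would represent the Call price via a Bromwich-type inversion of the form
$$\EE\left(\E^{X_t}-\E^x\right)^{+} - \left(1-\E^x\right)^{+} = \frac{1}{2\pi\I}\int_{c-\I\infty}^{c+\I\infty}\frac{\E^{(1-u)x}}{u(u-1)}\,\M(t,u)\,\D u,$$
for a suitable Bromwich abscissa $c\in(1,\overline{\m}_t)$, where $\overline{\m}_t$ is the upper bound of the effective domain of $\M(t,\cdot)$ (characterised by $\Drm(t,\overline{\m}_t)=\m$, so that $\overline{\m}_t\sim\sqrt{2\m/t}$). Using the tower-property factorisation $\M(t,u)=\E^{\Crm(t,u)}\M_\Vv(\Drm(t,u))$ and the asymptotic expansions of $\Crm$ and $\Drm$ from the appendix lemmas (Lemmas~\ref{lemma:Dasymp}--\ref{lemma:Casymp}), I would rescale the contour via $u\mapsto u/\sqrt{t}$, which transports the blow-up of $\M_\Vv$ at $\m$ to a critical point of the rescaled integrand at $u^{*}=\sqrt{2\m}$.

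Next, I would perform a steepest-descent analysis of the rescaled contour integral near $u^{*}$, plugging in the precise local forms of Assumption~\ref{assu:fat-tail-2} for $\log\M_\Vv$ and $\M'_\Vv/\M_\Vv$. When $\omega=1$ ($\gamma_0<0$), the singularity is algebraic of order $\gamma_0$, and collapsing the contour onto a Hankel keyhole produces an integral of Gamma type: this yields the leading exponential $\exp(-\sqrt{2\m}|x|/\sqrt{t})$, the factor $\Gamma(|\gamma_0|)^{-1}$, and the power $t^{1-|\gamma_0|/2}$ appearing in $\Cc_t(x)$, while no anomalous scale emerges and hence $\Ee_t(x)\equiv 1$. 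When $\omega=2$ ($\gamma_0>0$), $\log\M_\Vv$ has an essential singularity of exponential type $\gamma_0/(\m-u)$; balancing this against the linear term $(1-u)x/\sqrt{t}$ in the integrand creates a new saddle point equation whose solution lies at distance $\Theta(t^{1/4})$ from $u^{*}$, and a two-scale Laplace expansion around this saddle produces the anomalous factor $\exp(c_1(x)/t^{1/4})$ together with the Gaussian prefactor $(2\m\sqrt{2\pi}\zeta(x))^{-1}t^{7/8+\gamma_0\gamma_1/4}$; the functions $c_1(x)$ and $\zeta(x)$ of Lemmas~\ref{lem:main-contribution}--\ref{lemma: cf-of-z} arise precisely as the critical value and second-derivative data at this new saddle.

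Finally, to recover the implied volatility, I would match the derived Call expansion with the Black--Scholes formula: taking logarithms in $C_{\BS}(x,\sigma_t^2(x)t)=$ (derived expansion) and using the classical asymptotic expansion of the cumulative normal in the far-from-the-money regime gives $-x^2/(2\sigma_t^2 t)+o(1/\sqrt{t})=-\sqrt{2\m}|x|/\sqrt{t}+\cdots$, yielding the leading term $|x|/(2\sqrt{2\m t})$. Solving order-by-order for $\sigma_t^2(x)$ then produces the $c_1(x)/(4\m t^{1/4})$ correction in the $\omega=2$ case (from matching $\Ee_t$) and the constant and logarithmic subcorrections $\mathrm{h}_1^{(i)}(x)+\mathrm{h}_2^{(i)}\log t$ (from matching the polynomial-in-$t$ factors of $\Cc_t(x)$). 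The main obstacle will be the $\omega=2$ case: establishing the existence and location of the two-scale saddle, justifying the contour deformation across the essential singularity, and carefully book-keeping constants to reproduce the precise $\gamma_0\gamma_1$-dependent power of $t$ and the explicit forms of $\mathrm{h}_1^{(2)}(x)$ and $\mathrm{h}_2^{(2)}$.
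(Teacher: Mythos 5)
Your overall plan is sound and identifies all the key features of the problem correctly: the singularity of $\M_{\Vv}$ at $\m$ becomes, after rescaling by $\sqrt{t}$, a critical point at $u^*=\sqrt{2\m}$; the $\omega=1$ (algebraic singularity) case yields a Gamma-type prefactor; the $\omega=2$ (essential singularity) case produces a secondary saddle at distance $\Theta(t^{1/4})$ and the anomalous factor $\exp(c_1(x)/t^{1/4})$; and matching against the Black--Scholes tail asymptotics delivers the $\mathrm{h}_j^{(i)}$ terms. However, your route through a Bromwich contour integral with direct steepest-descent / Hankel-keyhole analysis is a genuinely different proof technique from what the paper does.

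The paper instead follows the probabilistic saddlepoint method of Jacquier--Roome: it first locates the saddle by solving $\partial_u\lambtilde_t(u)=x$ (Lemma~\ref{lemma:asymp-of-u*}), then performs an Esscher (exponential tilting) change of measure $\D\QQ_t/\D\PP = \exp\{(u_t^*(x)X_t - \lambtilde_t(u_t^*(x)))/\sqrt{t}\}$, factors out the leading exponential (Lemma~\ref{lem:main-contribution}), and shows that under $\QQ_t$ the rescaled variable $Z_t=(X_t-x)/\vartheta(t)$ converges weakly to a Gamma distribution ($\omega=1$) or a Gaussian ($\omega=2$) via convergence of characteristic functions (Lemma~\ref{lemma: cf-of-z}); the residual $\QQ_t$-expectation of the modified payoff is then evaluated by Parseval/Fourier inversion. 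This is morally the same localisation at the saddle as your contour deformation, but it is packaged so that convergence is verified at the level of characteristic functions and L\'evy's continuity theorem rather than at the level of contour geometry. The practical payoff is that one never has to justify deforming a contour across or around an essential singularity, which in your $\omega=2$ case would be the most delicate step: the two-scale saddle you describe near $u^*$ lies in a region where $\M_{\Vv}\circ\Drm$ blows up like $\exp(\gamma_0/(\m-\Drm))$, and controlling the error along a deformed path there requires care. Your approach can certainly be made rigorous, and the keyhole intuition for $\omega=1$ is exactly the analytic shadow of the Gamma limit law that Lemma~\ref{lemma: cf-of-z} establishes probabilistically; but you should be aware that the paper's route via measure change provides the necessary rigor infrastructure more cheaply, and that Lemma~\ref{lemma:asymp-of-u*} (existence, uniqueness and location of the saddle $u_t^*$, including the sub-leading coefficients $b_1, b_2, b_3$) is indispensable in either approach and cannot be skipped: the constants $\mathrm{h}_1^{(i)}$, $\mathrm{h}_2^{(i)}$ are all read off from those coefficients.
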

A particular example of a randomisation satisfying Assumption~\ref{assu:fat-tail-2} 
is the non-central Chi-squared distribution.
This case was the central focus of~\cite{jacquier13}, where the small-time behaviour 
of the forward smile in the Heston model was analysed.
As a sanity check, our theorem~\ref{thm:fat-tailImpvolAsymps} corresponds to~\cite[Theorem 4.1]{jacquier13}.

\begin{corollary}
Under Assumption~\ref{assu:fat-tail-2}, for $\omega\leq 2$,
$X\sim\LDP_0(\sqrt{t}, \Lambda^*)$.
\end{corollary}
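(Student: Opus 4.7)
The plan is to derive the LDP as a direct consequence of Theorem~\ref{thm:fat-tailImpvolAsymps}, by translating the Call and Put price expansions into tail-probability asymptotics. A direct appeal to the G\"artner--Ellis theorem is not available: Proposition~\ref{prop:limit_of_cgf} shows that the pointwise limit of the rescaled cgf at speed $\sqrt{t}$ vanishes on $(-\sqrt{2\m},\sqrt{2\m})$, so essential smoothness fails at the boundary $\pm\sqrt{2\m}$, and only the LDP upper bound is automatic from the cgf alone.

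For that upper bound, fix $a>0$ and $\lambda\in(0,\sqrt{2\m})$; Chebyshev's inequality gives $\PP(X_t\geq a)\leq \E^{-\lambda a/\sqrt{t}}\,\M(t,\lambda/\sqrt{t})$, so taking $\sqrt{t}\log$ and invoking Proposition~\ref{prop:limit_of_cgf} yields $\limsup_{t\downarrow 0}\sqrt{t}\log\PP(X_t\geq a)\leq -\lambda a$, and letting $\lambda\uparrow\sqrt{2\m}$ delivers $-\sqrt{2\m}\,a=-\Lambda^*(a)$; the analogue for the left tail is identical. The matching lower bound on the right tail is where Theorem~\ref{thm:fat-tailImpvolAsymps} enters: writing $\mathrm{Call}(x,t):=\EE(\E^{X_t}-\E^x)^+$, the expansion implies $\sqrt{t}\log\mathrm{Call}(x,t)\to -\sqrt{2\m}\,x$ for any $x>0$, because $\Ee_t(x)$ and $\Cc_t(x)$ have logarithms of order $o(1/\sqrt{t})$ (the factor $\E^{c_1(x)/t^{1/4}}$ contributes only $c_1(x)t^{1/4}\to 0$ after rescaling, and $\Cc_t$ is polynomial in $t$). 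The elementary telescoping inequality
$$
\mathrm{Call}(x,t)-\mathrm{Call}(x+\delta,t)\leq (\E^{x+\delta}-\E^x)\PP(X_t\geq x),
$$
combined with $\mathrm{Call}(x+\delta,t)/\mathrm{Call}(x,t)\to 0$ (the dominant exponential factor in the numerator is smaller by $\E^{-\sqrt{2\m}\delta/\sqrt{t}}$, which easily absorbs any sub-exponential discrepancy between the correction factors), yields $\liminf_{t\downarrow 0}\sqrt{t}\log\PP(X_t\geq x)\geq -\sqrt{2\m}\,x$ after sending $\delta\downarrow 0$. For $x<0$ the same reasoning applied to Put prices via put-call parity (using the $\PP$-martingale property of $\E^X$) delivers the symmetric bound for the left tail.

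Once $\lim_{t\downarrow 0}\sqrt{t}\log\PP(X_t\geq x)=-\sqrt{2\m}|x|$ is established for every $x\neq 0$, the full LDP follows by the standard routine: exponential tightness at speed $\sqrt{t}$ is immediate from the Chebyshev bound above, so the upper bound extends from half-lines to arbitrary closed sets, while the lower bound on an open set $G\ni x_0$ reduces to the tail asymptotic through $\PP(X_t\in G)\geq\PP(X_t\geq x_0-\eta)-\PP(X_t\geq x_0+\eta)$ (for $\eta>0$ small enough the first term strictly dominates the second at the exponential scale since $\Lambda^*$ is strictly monotone away from the origin). The main obstacle is the telescoping comparison $\mathrm{Call}(x+\delta,t)=o(\mathrm{Call}(x,t))$: one has to track the $x$-dependence of $\Ee_t$ and $\Cc_t$ in both cases $\omega\in\{1,2\}$, but local continuity of $c_1$ (Lemma~\ref{lem:main-contribution}) and the polynomial form of $\Cc_t$ ensure that the $\E^{-\sqrt{2\m}\delta/\sqrt{t}}$ gap in the leading exponential is decisive.
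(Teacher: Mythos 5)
Your proposal is correct, and since the paper simply states this corollary after Theorem~\ref{thm:fat-tailImpvolAsymps} without supplying a proof, you have in effect filled in the bridge that the authors leave implicit. Your architecture is the natural one: on the upper-bound side, the exponential Chebyshev inequality
$\PP(X_t\geq a)\leq \E^{-\lambda a/\sqrt t}\M(t,\lambda/\sqrt t)$
for $\lambda\in(0,\sqrt{2\m})$, together with the vanishing of the rescaled cgf on $(-\sqrt{2\m},\sqrt{2\m})$ from Proposition~\ref{prop:limit_of_cgf}, yields the correct decay rate after letting $\lambda\uparrow\sqrt{2\m}$; on the lower-bound side, the telescoping inequality
$\mathrm{Call}(x,t)-\mathrm{Call}(x+\delta,t)\leq(\E^{x+\delta}-\E^x)\PP(X_t\geq x)$
is exactly the standard device for translating option-price small-time expansions into tail-probability asymptotics, and your observation that the gap $\exp(-\sqrt{2\m}\,\delta/\sqrt t)$ in the leading exponential dominates the $\exp(c_1(\cdot)/t^{1/4})$ factor (when $\omega=2$) and the polynomial prefactors $\Cc_t$ is precisely the point that makes the argument close; put-call parity handles $x<0$ symmetrically. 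You are also right that G\"artner--Ellis is not directly applicable here because the rescaled limiting cgf is flat on the open interval and only carries mass at $\pm\sqrt{2\m}$, so essential smoothness fails. Two small remarks: the step from half-line tail asymptotics to the LDP on general open and closed sets does not actually require exponential tightness --- for closed sets avoiding the origin a simple union bound and the monotonicity of $x\mapsto\sqrt{2\m}|x|$ suffice, while closed sets containing the origin are trivial; and the limit $\delta\downarrow 0$ in the tail lower bound is not needed, since the $\delta$-dependent terms already vanish at speed $\sqrt t$ (the $\delta\to 0$ really enters only in passing from half-lines to arbitrary open sets). Neither of these affects the correctness of your argument. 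An alternative, more in the spirit of the paper's proof of Theorem~\ref{thm:fat-tailImpvolAsymps}, would be to re-run the change of measure under $\QQ_t$ and the Fourier inversion on the tail event $\{X_t\geq x\}$ directly rather than on the call payoff, but that would be more work for the same leading-order outcome.
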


\begin{remark}
Even though the leading order in the expansion is symmetric,
Theorem~\ref{thm:fat-tailImpvolAsymps} explains how 
the asymmetry in the volatility smile is generated.
In particular, the term~$\rho\xi x/8$ immediately
shows how the leverage effect can be produced with $\rho<0$.
\end{remark}

\subsection{Large-time asymptotics}\label{sec:LargeTime}
As observed in Figure~\ref{fig:appetiser},
the effect of initial randomness decays when the maturity becomes large,
so that the large-time behaviour of the randomised Heston model
should be similar to that of the standard Heston model,
which has been discussed in detail in~\cite{FordeJac11,FordeJacMij10,JKRM13}.
In the particular example of the forward Heston model--which coincides with randomising with a non-central $\chi$-squared distribution--such a large-time behaviour has been analysed in~\cite{JR15}.
Throughout this section we assume~$|\rho|<1$ and~$\kappa>\rho\xi$
(this condition usually holds on equity markets, 
where the instantaneous correlation~$\rho$ is negative--the so-called leverage effect), 
which guarantees the essential smoothness of the limiting cgf in a standard Heston as~$t$ tends to infinity, 
and define the function~$\Lf$ on~$\RR$ by:
\begin{equation}\label{eq:large-time-limiting-cgf}
\Lf(u)
:= 
\left\{
\begin{array}{ll}
\displaystyle\frac{\kappa\theta}{\xi^2}\left(\kappa-\rho\xi u - d(u)\right), &\text{ for }u\in[\usm, \usp],\\
+\infty, &\text{ for } u\in\RR\setminus[\usm, \usp],
\end{array}
\right.
\end{equation}
where~
$\usmp := \displaystyle\frac{1}{2\rrho^2\xi}\left(\xi-2\kappa\rho\pm\sqrt{(\xi-2\kappa\rho)^2 + 4\kappa^2\rrho^2}\right)$,  
and where the function~$d$ is given in~\eqref{eq:HestonMGF}. 
We further denote
$\Lf^*(x):=\sup_{u\in\RR}\left\{ux - \Lf(u)\right\}$, the convex conjugate of~$\Lf$. 
Forde and Jacquier~\cite[Theorem~2.1]{FordeJac11} proved that $\usm<0$ and $\usp>1$.
Consider now the following assumption:
\begin{assumption}\label{assu:Cond}
$\max\{\usm(\usm -1), \usp(\usp - 1)\} < \m\xi^2 \leq \infty$.
\end{assumption}

\begin{remark}
Assumption~\ref{assu:Cond} is a technical one, 
needed to ensure that the limiting cgf of the randomised model is essentially smooth.
Should it break down, a more refined analysis, similar to the one in~\cite{JR15} could be carried out 
to prove large deviations, but we leave it for future research.
\end{remark}
\begin{theorem}\label{thm:large-time-LDP}
Under Assumption~\ref{assu:Cond},
$(t^{-1}X_t)\sim\LDP_\infty(t^{-1}, \Lf^*)$ and 
\begin{equation*}
\lim_{t\uparrow \infty}\sigma_t^2(xt)
= 
\left\{
\begin{array}{ll}
2\left(2\Lf^*(x) - x + 2\sqrt{\Lf^*(x)(\Lf^*(x) -x)}\right), 
& \displaystyle \text{ for }x\in\left(-\frac{\theta}{2} , \frac{\ttheta}{2}\right),\\
2\left(2\Lf^*(x) - x - 2\sqrt{\Lf^*(x)(\Lf^*(x) -x)}\right), 
& \displaystyle  \text{ for }x\in\RR\setminus\left[-\frac{\theta}{2} , \frac{\ttheta}{2}\right],
\end{array}
\right.
\end{equation*}
where~$\ttheta := \frac{\kappa\theta}{\kappa-\rho\xi}>0$. 
If~$x\in\left\{-\frac{\theta}{2}, \frac{\ttheta}{2}\right\}$, then
~$\displaystyle\lim_{t\uparrow\infty}\sigma_t^2\left(\frac{\ttheta t}{2}\right) = \ttheta$, and
$\displaystyle\lim_{t\uparrow\infty}\sigma_t^2\left(-\frac{\theta t}{2}\right) =\theta.$
\end{theorem}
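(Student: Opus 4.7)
The plan is to apply the Gärtner--Ellis theorem in the large-time regime to the rescaled process $(X_t/t)_{t\geq 0}$ and then feed the resulting large deviations principle into the Forde--Jacquier implied-volatility dictionary from~\cite{FordeJac11}. By the tower property~\eqref{eq:TowerProperty}, for any admissible~$u$,
\begin{equation*}
\frac{1}{t}\log \M(t,u) \;=\; \frac{\Crm(t,u)}{t} \;+\; \frac{1}{t}\log \M_{\Vv}\bigl(\Drm(t,u)\bigr).
\end{equation*}
The first step is to show that both terms behave correctly as $t\uparrow\infty$. For the first term, the standard Heston large-time asymptotics (Forde--Jacquier~\cite[Proposition~3.1]{FordeJac11}) already yield $\Crm(t,u)/t\to\Lf(u)$ for $u$ in the interior of $[\usm,\usp]$, with explosion outside this interval. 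For the second term, a direct inspection of the ODE satisfied by~$\Drm$ shows that $\Drm(t,u)\to\Drm_\infty(u)$ pointwise on $(\usm,\usp)$, with a finite limit that extends continuously to the boundary, where one can check that $\Drm_\infty(\usmp)=\usmp(\usmp-1)/\xi^2$ because $d(\usmp)=\pm(\kappa-\rho\xi\usmp)$. Assumption~\ref{assu:Cond} is precisely the statement that $\sup_{u\in[\usm,\usp]}\Drm_\infty(u)<\m$, which guarantees that $\M_\Vv(\Drm(t,u))$ stays bounded uniformly on compacts of $[\usm,\usp]$, so the randomisation contribution vanishes at rate~$1/t$. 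Thus the pointwise limit of the rescaled cgf coincides with~$\Lf$.

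The second step is essential smoothness and steepness of~$\Lf$ on $(\usm,\usp)$; under the standing conditions $|\rho|<1$ and $\kappa>\rho\xi$ this is exactly the content of~\cite[Lemma~2.4]{FordeJac11}, and since~$\Lf$ is unchanged by the randomisation under Assumption~\ref{assu:Cond}, the same verification transfers verbatim. The Gärtner--Ellis theorem (Theorem~\ref{thm:Gartner}) then delivers $(X_t/t)\sim\LDP_\infty(t^{-1},\Lf^*)$, proving the first assertion.

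The third step is to turn this LDP into implied-volatility asymptotics. Forde and Jacquier~\cite[Theorem~2.1]{FordeJac11} provide a ready-made translation: denoting by $u^*(x)\in[\usm,\usp]$ the optimiser in the Fenchel--Legendre transform, the two branches in the statement correspond respectively to $u^*(x)\in(0,1)$ and $u^*(x)\in[\usm,0)\cup(1,\usp]$, the switch occurring exactly at $x=\Lf'(0)=-\theta/2$ and $x=\Lf'(1)=\ttheta/2$ (a direct computation using $\Lf(0)=\Lf(1)=0$). At these boundary log-moneynesses $\Lf^*(x)=0$, so the general formula degenerates; the values $\theta$ and $\ttheta$ are obtained by an expansion of $\Lf^*$ around $u=0$ and $u=1$ respectively, mirroring the standard Heston large-time at-the-money computation already carried out in~\cite{FordeJac11, JKRM13}.

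The main obstacle is the step-one control at the boundary of the effective domain. Away from $\{\usm,\usp\}$, boundedness of $\Drm(t,u)$ is routine, but near $\usm$ and $\usp$ one must verify that $\Drm_\infty$ extends continuously and that Assumption~\ref{assu:Cond} indeed corresponds to $\Drm_\infty(\usmp)<\m$; this requires the explicit identity $\Drm_\infty(\usmp)=\usmp(\usmp-1)/\xi^2$, which follows from $d(\usmp)=0$ after some algebra on the Heston Riccati solution in Appendix~\eqref{eq:HestonMGF}. Everything else reduces to standard Heston large-time machinery plus the vanishing of the initial randomisation term.
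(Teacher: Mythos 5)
Your overall plan coincides with the paper's: decompose the rescaled cumulant generating function via the tower property~\eqref{eq:TowerProperty}, show the standard-Heston term converges to~$\Lf$, show the randomisation term $\tfrac1t\log\M_\Vv(\Drm(t,u))$ vanishes on $[\usm,\usp]$ under Assumption~\ref{assu:Cond}, invoke G\"artner--Ellis, and then feed the resulting LDP into the Forde--Jacquier~\cite{FordeJac11} option-price / implied-volatility translation. That is exactly the route the paper takes.

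There is, however, an algebraic error in the one identity you single out as essential. You claim $\Drm_\infty(\usmp)=\usmp(\usmp-1)/\xi^2$; this is false. Since $d(\usmp)=0$ (not $\pm(\kappa-\rho\xi\usmp)$, as you write in the first paragraph and then correct in the last), taking the $t\uparrow\infty$ limit of $\Drm(t,u)=\frac{\kappa-\rho\xi u-d(u)}{\xi^2}\frac{1-\E^{-d(u)t}}{1-g(u)\E^{-d(u)t}}$ gives $\Drm_\infty(u)=(\kappa-\rho\xi u-d(u))/\xi^2$ on $(\usm,\usp)$, which at the endpoints is $\Drm_\infty(\usmp)=(\kappa-\rho\xi\usmp)/\xi^2$. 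Using $d(\usmp)^2=(\kappa-\rho\xi\usmp)^2+\xi^2\usmp(1-\usmp)=0$, i.e.\ $(\kappa-\rho\xi\usmp)^2=\xi^2\usmp(\usmp-1)$, and $\kappa-\rho\xi\usmp>0$, we get $\Drm_\infty(\usmp)=\sqrt{\usmp(\usmp-1)}/\xi$, not $\usmp(\usmp-1)/\xi^2$. A quick sanity check at $\rho=0$: there $\usp(\usp-1)=\kappa^2/\xi^2$ and $\Drm_\infty(\usp)=\kappa/\xi^2=\sqrt{\usp(\usp-1)}/\xi$, whereas your formula would give $\kappa^2/\xi^4$.

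This matters because you then read Assumption~\ref{assu:Cond} as ``$\sup_{u\in[\usm,\usp]}\Drm_\infty(u)<\m$'' using your incorrect formula. The sup statement is indeed the right intermediate condition, but with the correct value of $\Drm_\infty(\usmp)$ it translates to $\max\{\usm(\usm-1),\usp(\usp-1)\}<\m^2\xi^2$, the quadratic-in-$\m$ form the paper's own proof of Theorem~\ref{thm:large-time-LDP} derives in Equation~\eqref{condition-p} (there is an apparent $\m$ vs.\ $\m^2$ mismatch between the paper's Assumption~\ref{assu:Cond} as stated and its proof, but your identity does not reconcile the two --- it just happens to reproduce the weaker-looking statement). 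You should also verify separately that on the interior $\Drm_\infty(u)<\m$ follows automatically (the paper argues via monotonicity of $u(u-1)$ versus $d(u)$ outside $[0,1]$); you assert ``boundedness is routine'' but the monotone comparison is the content. None of this alters the macro-structure of your argument, but the identity you rely on needs to be replaced by $\Drm_\infty(\usmp)=\sqrt{\usmp(\usmp-1)}/\xi$ and the equivalence with Assumption~\ref{assu:Cond} rechecked.
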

\begin{remark}\
\begin{itemize}
\item As proved in~\cite{FordeJac11}, the map 
$x\mapsto\Lf^*(x) - x$ is smooth, strictly convex, attains its minimum at the point~$\ttheta/2$, 
and~$\Lf^*(\ttheta/2)-\ttheta/2 = \Lf^*(\ttheta/2)'-1=0$. 
\item Theorem~\ref{thm:large-time-LDP} has the same form as~\cite[Corollary~2.4]{FordeJac11}, 
confirming the similar large-time behaviours of the classical and the randomised Heston models. 
\item Higher-order terms can be derived using 
the saddle point method described in detail in~\cite{FordeJacMij10}.
(see also~\cite[Proposition~2.12]{JR15}).
\end{itemize}
\end{remark}

Theorem~\ref{thm:large-time-LDP} provides the large-time behaviour of the implied volatility smile
with a time-dependent strike. 
For fixed strike, the initial randomisation has no effect,
and we recover the flattening effect of the smile:
\begin{corollary}[fixed strike]\label{coro: large-time-euro-call-put-2}
Under Assumption~\ref{assu:Cond}, 
$$
\lim_{t\uparrow \infty}\sigma_t^2(x)
 = 8\Lf^*(0)
  = \frac{4\kappa\theta}{\xi^2(1-\rho^2)}\left(-2\kappa + \rho\xi + \sqrt{\xi^2 + 4\kappa^2 - 4\kappa\rho\xi}\right),
\qquad\text{for all }x\in\RR.
$$
\end{corollary}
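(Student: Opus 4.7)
My plan is to deduce the corollary from Theorem~\ref{thm:large-time-LDP} via a direct asymptotic analysis of the call price. The underlying heuristic is that for fixed $x\in\RR$, the rescaled log-strike $x/t$ tends to~$0$ as $t\uparrow\infty$, and~$0$ lies in the interior of $(-\theta/2,\ttheta/2)$; the formula of Theorem~\ref{thm:large-time-LDP} evaluated at the point~$0$ yields $2\bigl(2\Lf^*(0) + 2\sqrt{\Lf^*(0)^2}\bigr) = 8\Lf^*(0)$, which is precisely the claimed limit. The two tasks are therefore: (a) a rigorous justification of this limit interchange and (b) translation of $\Lf^*(0)$ into the stated closed form.

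For (a), using the martingale property $\EE[\E^{X_t}] = 1$, I would decompose
\[
1 - \EE\bigl[(\E^{X_t} - \E^x)^+\bigr]
= \E^x\,\PP(X_t \geq x) + \EE\bigl[\E^{X_t}\ind_{\{X_t < x\}}\bigr].
\]
A direct computation from~\eqref{eq:large-time-limiting-cgf} gives $\Lf'(0)=-\theta/2<0$ and $\Lf'(1)=\ttheta/2>0$, so the strictly convex $\Lf^*$ is minimised at $-\theta/2$, while the map $y\mapsto y-\Lf^*(y)$ is maximised at $\ttheta/2$. Since $x/t\to 0$ separates these two points, the LDP for $(t^{-1}X_t)$ from Theorem~\ref{thm:large-time-LDP} combined with Varadhan's lemma yields that both terms decay at the exact exponential rate $\Lf^*(0)$, namely
\[
\lim_{t\uparrow\infty} -\frac{1}{t}\log\bigl(1 - \EE[(\E^{X_t}-\E^x)^+]\bigr) = \Lf^*(0).
\]
On the Black-Scholes side, a standard Mills-ratio expansion gives $-t^{-1}\log(1-C_{BS}(\sigma,t,x))\to \sigma^2/8$ for any fixed~$x$ and bounded $\sigma>0$; matching the two rates then produces $\sigma_t^2(x)\to 8\Lf^*(0)$.

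For (b), $\Lf^*(0) = -\inf_{u\in(\usm,\usp)}\Lf(u)$ is obtained by solving the first-order condition $d'(u^*) = -\rho\xi$ with~$d$ given in~\eqref{eq:HestonMGF}. Squaring this condition produces a linear equation in~$u^*$, solvable in closed form; substituting back into~\eqref{eq:large-time-limiting-cgf} and simplifying yields the stated expression $\frac{4\kappa\theta}{\xi^2(1-\rho^2)}\bigl(-2\kappa + \rho\xi + \sqrt{\xi^2 + 4\kappa^2 - 4\kappa\rho\xi}\bigr)$.

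The main obstacle will be the Varadhan-type analysis of the second term in the above decomposition, whose integrand $\E^y$ is unbounded: one must argue that, since the global maximiser $\ttheta/2$ of $y-\Lf^*(y)$ lies strictly to the right of $x/t\to 0$ for large~$t$, the truncation $\{y<x\}$ forces the effective supremum to occur at the endpoint, producing the correct rate~$-\Lf^*(0)$. Essential smoothness of~$\Lf$ on a neighbourhood of $[0,1]$, guaranteed by Assumption~\ref{assu:Cond}, is precisely what makes this argument work.
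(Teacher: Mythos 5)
Your route is correct and is essentially what the paper leaves implicit: the corollary is an immediate consequence of Theorem~\ref{thm:large-time-LDP}, obtained by evaluating its implied-variance formula at the point $x/t \to 0$ (which lies in the interior region $(-\theta/2,\ttheta/2)$, yielding $8\Lf^*(0)$), and the rigorous justification via the decomposition of $1-\EE[(\E^{X_t}-\E^x)^+]$, the LDP from the theorem, Varadhan's lemma and the Mills-ratio expansion of the Black--Scholes price is the standard machinery that the paper cites (see the end of the proof of Theorem~\ref{thm:large-time-LDP} and the references to~\cite{FordeJac11,FordeJacMij10,JR15}). One small factual slip in part~(b): squaring the first-order condition $d'(u^*)=-\rho\xi$ does \emph{not} produce a linear equation in $u^*$; writing $d(u)^2 = \kappa^2 + (\xi^2-2\kappa\rho\xi)u - \xi^2\rrho^2 u^2$, the squared equation reads $(\xi^2-2\kappa\rho\xi-2\xi^2\rrho^2 u)^2 = 4\rho^2\xi^2 d(u)^2$, and the $u^2$-coefficient $4\xi^4\rrho^4 + 4\rho^2\xi^4\rrho^2 = 4\xi^4\rrho^2 \ne 0$ survives, so one must solve a genuine quadratic (and discard the extraneous root introduced by squaring). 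The closed-form value of $\Lf^*(0)$ is still obtained straightforwardly this way, so the conclusion is unaffected.
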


\subsection{At-the-money (ATM) case}\label{sec:ATM}
All our small-maturity results above hold in the out-of the money case $x\ne 0$.
As usual in the literature on implied volatility asymptotics, the at-the-money case 
exhibits a radically different behaviour, and a separate analysis is needed.
We first recall in Lemma~\ref{thm:cHestonATM} the at-the-money asymptotics in the classical Heston model~\cite{forde12}.
To differentiate between standard and randomised Heston models, 
denote by~$\sigma_t(x,v_0)$ as the implied volatility in the standard Heston model 
with fixed initial condition~$V_0 = v_0>0$.
\begin{lemma}~\cite[Corollary~4.4]{forde12}\label{thm:cHestonATM}
In the standard Heston model with~$V_0 = v_0>0$, 
assume that there exists~$\eps>0$ such that 
the map $(t,x)\mapsto\sigma_t^2(x,v_0)$ is of class~$\Cc^{1,1}([0,\eps)\times(-\eps,\eps))$, 
then 
$\sigma_t^2(0,v_0) = v_0 + a(v_0)t + o(t)$, where
$a(v_0) := -\frac{1}{12}\xi^2\left(1-\frac{1}{4}\rho^2\right) + \frac{1}{4}v_0\rho\xi + \frac{1}{2}\kappa(\theta-v_0)$.
\end{lemma}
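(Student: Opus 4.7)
The plan is to follow the standard short-time at-the-money methodology for affine stochastic volatility models, as developed in~\cite{forde12}. Since we now sit at $x = 0$, the large-deviation tail estimates used in the out-of-the-money sections provide no information: the at-the-money call decays only as $\sqrt{t}$, the leading coefficient is fixed by the local Brownian approximation, while the order-$t$ correction to $\sigma_t^2(0, v_0)$ is controlled by the first non-trivial small-time correction to the marginal law of $X_t$ near the origin. The $\Cc^{1,1}$ regularity assumption on $(t,x)\mapsto\sigma_t^2(x, v_0)$ is precisely what legitimises the ansatz $\sigma_t^2(0, v_0) = v_0 + a(v_0)\, t + o(t)$, so that the problem reduces to identifying the coefficient $a(v_0)$.

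First, I would derive the small-time expansion of the at-the-money call price,
\[
\EE\bigl[(\E^{X_t} - 1)^+\bigr] = \sqrt{\frac{v_0 t}{2\pi}}\bigl(1 + b(v_0)\, t + o(t)\bigr),
\]
by either a Fourier-based saddle-point analysis starting from the explicit affine representation $\M(t,u) = \exp(\Crm(t,u) + v_0\Drm(t,u))$ recalled in~\eqref{eq:HestonMGF}, or an It\^o-based expansion exploiting $\int_0^t V_s\,\D s = v_0 t + \tfrac{1}{2}\kappa(\theta-v_0)\,t^2 + \Oo(t^{3/2})$ together with the $\rho$-induced small-time skewness of $X_t$. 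One then matches against the Black--Scholes at-the-money expansion
\[
C_{\mathrm{BS}}(\sigma, t) = 2\Phi\!\left(\tfrac{\sigma\sqrt{t}}{2}\right) - 1 = \sigma\sqrt{\tfrac{t}{2\pi}}\bigl(1 - \tfrac{\sigma^2 t}{24} + \Oo(t^2)\bigr),
\]
plugs in the ansatz $\sigma_t^2(0, v_0) = v_0 + a(v_0) t + o(t)$, expands $\sigma$ to order $t$, and reads off a linear relation between $a(v_0)$, $b(v_0)$, and $v_0$ that determines $a(v_0)$ uniquely.

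The main obstacle is the precise computation of $b(v_0)$ in the first step: one must identify every source of $t^{3/2}$ correction in the call price and cleanly disentangle the leverage contribution (responsible for $\tfrac{1}{4} v_0 \rho\xi$), the mean-reversion contribution (responsible for $\tfrac{1}{2}\kappa(\theta - v_0)$), and the combined vol-of-vol/correlation contribution (responsible for $-\tfrac{1}{12}\xi^2(1 - \tfrac{1}{4}\rho^2)$). These then assemble linearly into $a(v_0)$, matching the stated formula. The full calculation is carried out in~\cite[Corollary~4.4]{forde12}, where the $\Cc^{1,1}$ regularity hypothesis is precisely what excludes any anomalous intermediate terms between $v_0$ and the $t$-correction.
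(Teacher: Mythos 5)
The paper does not prove this lemma at all: it is recalled verbatim from~\cite[Corollary~4.4]{forde12}, which is why the citation appears in the statement itself. Your outline is a faithful reconstruction of how that reference establishes the result---expanding the at-the-money Heston call price to order~$t^{3/2}$ (this is precisely~\eqref{eq:ATMsHeston} in the present paper, i.e.~\cite[Corollary~4.5]{forde12}), expanding the Black--Scholes ATM price under the ansatz $\sigma_t^2(0,v_0)=v_0+a(v_0)t+o(t)$, and matching coefficients---and you correctly note that the $\Cc^{1,1}$ hypothesis is what licenses the ansatz and that the actual computation of the order-$t^{3/2}$ correction is carried out in~\cite{forde12}. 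No gap, and no divergence from the paper's route since the paper's ``route'' is simply to cite.
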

\begin{theorem}\label{rHestonATM}
In a randomised Heston model, $\sigma_t(0) = \EE(\sqrt{\Vv}) + o(1)$ holds as time tends to zero.
\end{theorem}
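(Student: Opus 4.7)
The plan is to reduce the claim to the standard-Heston small-maturity ATM expansion via conditioning on $\Vv$, then to invert the Black--Scholes ATM pricing formula. Throughout, let $C_\Hh(0,t,v_0)$ denote the ATM Call price in the standard Heston model with fixed initial variance $V_0=v_0>0$. By Remark~\ref{rem:VinFo}, the regular conditional law of $X_t$ given $\Vv$ coincides with that of the standard-Heston log-price started at $V_0=\Vv$, and the tower property yields
$$
\EE\bigl[(\E^{X_t}-1)^+\bigr] \;=\; \EE\bigl[C_\Hh(0,t,\Vv)\bigr].
$$

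For each $v_0>0$, the expansion~\eqref{eq:ATMsHeston} gives the pointwise limit $C_\Hh(0,t,v_0)/\sqrt{t/(2\pi)}\to\sqrt{v_0}$ as $t\downarrow 0$. The central step is to exchange this limit with the expectation over $\Vv$, which I would justify by dominated convergence once a majorant of the form
$$
\frac{C_\Hh(0,t,v_0)}{\sqrt{t/(2\pi)}} \;\leq\; \kappa_1\sqrt{v_0} + \kappa_2\,v_0,
\qquad\text{uniformly in }t\in(0,t_0]\text{ and }v_0>0,
$$
is available. Such a bound can be obtained from the elementary inequality $(\E^x-1)^+\leq x^+\E^{x^+}$ combined with Cauchy--Schwarz and standard second- and exponential-moment estimates for the Heston log-price (second moment $\Oo(v_0 t)$ for small $t$, exponential moment growing at worst linearly in $v_0$ on a small enough time window). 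Since $\Dd_\Vv$ contains a neighbourhood of the origin, $\Vv$ is integrable and Jensen's inequality yields $\EE[\sqrt{\Vv}]\leq\sqrt{\EE[\Vv]}<\infty$, so dominated convergence gives
$$
\EE\bigl[(\E^{X_t}-1)^+\bigr] \;=\; \sqrt{\tfrac{t}{2\pi}}\,\EE\bigl[\sqrt{\Vv}\bigr]\bigl(1+o(1)\bigr),\qquad t\downarrow 0.
$$

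Equating this with the ATM Black--Scholes price $C^{BS}(\sigma,t)=\sigma\sqrt{t/(2\pi)}\bigl(1+\Oo(\sigma^2 t)\bigr)$, and noting that $\sigma_t(0)$ must remain bounded as $t\downarrow 0$ (else the right-hand side diverges), inversion reads off $\sigma_t(0) = \EE[\sqrt{\Vv}]+o(1)$, as claimed.

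The main obstacle is the uniform-in-$v_0$ majorant required by dominated convergence: the $o(t)$ remainder in~\eqref{eq:ATMsHeston} is not stated to be uniform in $v_0$, so one cannot simply integrate the expansion term-by-term when $\Vv$ has unbounded support. The Cauchy--Schwarz route sketched above is natural, but demands some care on the exponential-moment factor, which is finite only up to a $v_0$-dependent explosion time and must thus be controlled on some explicit common $(0,t_0]$. All the remaining ingredients---the tower property and the inversion of the ATM Black--Scholes formula---are entirely routine.
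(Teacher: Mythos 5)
Your proposal follows essentially the same route as the paper: condition on $\Vv$ and use the tower property, invoke the pointwise Heston ATM expansion from~\cite{forde12}, pass to the expectation, and invert the Black--Scholes ATM formula. The paper's own proof in fact glosses over the very interchange you flag: it writes $\Crm_\HH(t,0,\Vv) = \Crm_{\BS}(t,0,\sqrt{\Vv + a(\Vv) t})(1+o(1))$ almost surely and then integrates, without justifying that the $\PP$-a.s.\ $o(1)$ (which depends on the realised value of $\Vv$) can be taken outside the expectation. So you are being more careful than the source. One caveat on your sketched majorant: the bound $\kappa_1\sqrt{v_0}+\kappa_2 v_0$ uniform in $t\in(0,t_0]$ is likely too strong. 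The Cauchy--Schwarz factor $\bigl(\EE[\E^{2X_t^+}\,|\,V_0=v_0]\bigr)^{1/2}$ involves $\exp\{\Drm(t,2)v_0/2\}$ with $\Drm(t,2)\asymp t$, so the resulting dominating function is of order $\sqrt{v_0}\,\E^{c\,t_0\, v_0}$ rather than polynomial in $v_0$. This is still integrable once $t_0$ is chosen with $c\,t_0 < \m$, which is precisely where the standing assumption $\m=\sup\{u:\M_\Vv(u)<\infty\}>0$ enters (note also that the Heston moment-explosion time for $u=2$ does not depend on $v_0$, only on $\kappa,\rho,\xi$, so the small window $(0,t_0]$ can be chosen independently of $v_0$). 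With that correction the dominated-convergence step closes, and the rest of your argument is routine.
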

\begin{proof}
Since $\m\in(0,\infty]$, then $\EE(\sqrt{\Vv})$ is finite.
Denote by $\Crm_{\BS}(t,x,\Sigma)$ the European Call option price in the Black-Scholes model
with maturity~$t$, strike~$\E^x$ and volatility~$\Sigma$, 
and by $\Crm_{\HH}(t,x,v)$ its price in the standard Heston model with~$V_0 = v$.
Using the tower property,
\begin{equation}\label{eq:ATMprice}
\EE\left(\E^{X_t} - 1\right)^+
= \EE\left(\EE\left(\E^{X_t} - 1\right)^+|\Vv\right) 
= \EE\left(\Crm_\HH(t,0,\Vv)\right),
\end{equation}
and Lemma~\ref{thm:cHestonATM} and~\cite[Corollary~4.5]{forde12} imply that
the equation $\Crm_\HH\left(t,0,\Vv\right) = \Crm_{\BS}\left(t,0,\sqrt{\Vv+a(\Vv)t}\right)\left(1+o(1)\right)$ holds~$\PP$-almost surely.
Also for any~$c\in\RR$,~\cite[Proposition~3.4]{forde12} implies that
$$
\Crm_{\BS}\left(t,0,\sqrt{\Sigma^2+ct}\right) 
= \frac{1}{\sqrt{2\pi}}\left(\Sigma t^{1/2} + \frac{12c/\Sigma-\Sigma^4}{24\Sigma}t^{3/2} + \Oo\left(t^{5/2}\right)\right).
$$
Plugging these equations back into~\eqref{eq:ATMprice}, 
and equating~\eqref{eq:ATMprice} with~$\Crm_{\BS}\left(t,0,\sigma_t(0)\right)$, 
the theorem follows from
$$
\sqrt{\frac{t}{2\pi}}\EE\left\{\left(\sqrt{\Vv} + \frac{12a(\Vv) - \Vv^{5/2}}{24\Vv}t + \Oo\left(t^2\right) \right)\left(1+o(1)\right)\right\}
= \sqrt{\frac{t}{2\pi}} \left(\sigma_t(0) - \frac{\sigma_t^3(0)}{24}t + \Oo\left(t^2\right)\right).
$$
\end{proof}
\begin{remark}\label{prop:ATMimproved}
If $\EE(\Vv^{-1/2})$ is finite then following a similar procedure
we obtain higher order terms of~$\sigma_t(0)$,
\begin{equation*}
\sigma_t(0) 
= \EE(\sqrt{\Vv}) + \left\{\cf_1\EE(\Vv^{-1/2}) + \cf_2\EE(\sqrt{\Vv}) + \cf_3\left(\EE(\sqrt{\Vv})^3 - \EE(\Vv^{3/2})\right)\right\}t + o(t),
\end{equation*}
where
$\cf_1 := \frac{1}{4}(\kappa\theta + \xi^2(\rho^2-4)/24)$,
$\cf_2 := \frac{1}{8}(\rho\xi-2\kappa)$, and
$\cf_3 := \frac{1}{24}$.
In the non-central chi-squared case we recover the result of~\cite[Theorem 4.4]{jacquier13}.
\end{remark}

\section{A dynamic pricing framework}\label{sec:dynamic}
The model proposed in this paper has so far only been studied in a static way,
namely from the inception time of the (European contract), with a view towards calibration of the implied volatility surface.
While providing a better fit to short-maturity options by steepening the skew,
it is not obvious, however, how to use the model dynamically;
in particular, it is unclear how to choose the random initial value of the volatility process 
during the life of the contract,
should one be wishing to sell or buy the option, or for hedging purposes.
Mathematically, assume that at time zero the trader chooses an initial randomisation~$\Vv$ 
(or classically a Dirac mass at some positive point), 
and suppose that, at some later time $\that>0$, she needs to reprice the option
(with remaining maturity~$\tau$).
How should she choose the new initial random variable~$\Vv_{\that}$?
Since the variance process has continuous paths, a suitable choice of~$\Vv_{\that}$,
consistent with the dynamics of the variance, is obviously~$V_{\that}$, 
the solution of the SDE~\eqref{eq:RandomHestonSDE},
after running it from time zero to time~$\that$.
With an initial guess~$\Vv$ at time zero, then, at time~$\that$, conditional on~$\Vv$,
$\Vv_{\that}$ is distributed as~$\beta_{\that}\chi^2(q,\lambda)$,
where $\beta_{\that}: = \xi^2(1-\E^{-\kappa \that})/(4\kappa)$, and
$\chi^2(q, \lambda)$ is a non-central Chi-squared distribution 
with $q := 4\kappa\theta / \xi^2$ degrees of freedom
and non-centrality parameter
$\lambda:=4\kappa \Vv/(\xi^2(\E^{\kappa \that}-1))$.
From the tower property,
the moment generating function of~$\Vv_{\that}$ then reads
\begin{equation}\label{eq:mgf_at_t}
\M_{\Vv_{\that}}(u) 
= \EE\left[\EE\left(\E^{uV_{\that}}|\Vv\right)\right]
= \left(1-2\beta_{\that} u \right)^{-q/2} \M_\Vv\left(\frac{\exp(-\kappa \that)u}{1-2\beta_{\that} u}\right),
\end{equation}
for all $u\in\Dd^H_{\that} = \{u\in\RR: \M_{\Vv_{\that}}(u)<\infty\}$. Setting $\n_t:=1/(2\beta_{\that})$, we have
\begin{equation*}
\Dd^H_{\that}
  = (-\infty,\n_{\that}) \bigcap\left\{u\in\RR: \frac{\exp(-\kappa \that)u}{1-2\beta_{\that} u} \in \Dd_{\Vv}\right\}
 =  \left\{
  \begin{array}{ll}
   (-\infty,\n_{\that}), & \text{ if }\m=\infty,\\
  \displaystyle  (-\infty,\n_{\that}) \bigcap\left(-\infty, \frac{\m}{\E^{-\kappa \that} + 2\beta_{\that} \m}\right)
 = (-\infty,\n_{\that}^*), & \text{ if }\m<\infty.
 \end{array}
 \right.
\end{equation*}
where $\n_{\that}^* :=  \frac{\m \n_{\that}}{\m + \n_{\that}\exp(-\kappa \that)}\label{eq:btStar}$.
We now discuss the impact of different choices of~$\Vv$ at time zero
on the distribution of~$\Vv_{\that}$
and on the implied variance~$\sigma^2_\tau(x, \that)$ at time~$\that$ (for a remaining maturity~$\tau$).
We keep here the terminology introduced in Section~\ref{sec:Asymptotics} regarding the tail behaviour of~$\Vv$.

Before diving into the detailed analysis, 
we argue that~$\Vv_{\that}$ chosen this way should only serve 
as a candidate for the initial distribution at time~$\that$
and in practice should be recalibrated according to updated (noisy) market observations at time~$\that$. 
Market noises explain how
the distribution of~$\Vv_{\that}$ can deviate from the ergodic distribution: 
the impact of the (instantaneous) noises
can change the shape and parameterisation of the randomisation.
We further comment that understanding the choice of~$\Vv_{\that}$ is also useful from a model risk point of view: 
at time zero, it is important to understand and simulate the behaviours 
of model parameters at a given future time.
We show in this section, in our setting, that~$\Vv_{\that}$ can in fact only be fat tailed,
and therefore, for consistency, one should probably start with~$\Vv$ in the class of fat-tail distributions.

\subsection{The bounded-support case}
In this case, $\Dd^H_{\that} = (-\infty, \n_{\that})$;
the proof of Proposition~\ref{prop:limit_of_cgf} showed that
$\lim\limits_{u\uparrow \infty}u^{-1}\log\M_{\Vv}(u) = \vp$.
Combining this with~\eqref{eq:mgf_at_t}, we obtain, as~$u$ tends to~$\n_{\that}$ from below, 
$$
\log\M_{\Vv_{\that}}(u) 
= -\frac{q}{2}\log(1-2\beta_{\that} u) + \frac{\E^{-\kappa \that}\vp u}{1-2\beta_{\that} u } \left(1+ o(1) \right)
= \frac{q}{2}\log\left(\frac{\n_{\that}}{\n_{\that} - u}\right) + \frac{\E^{-\kappa \that}\vp \n_{\that} u}{\n_{\that}-u} (1+o(1))
= \frac{\E^{-\kappa \that}\vp\n_{\that}^2}{\n_{\that} - u}\left(1+o(1)\right),
$$
so that, at leading order, $\Vv_{\that}$ behaves asymptotically as a fat-tail distribution
as in Assumption~\ref{assu:fat-tail-2} with $\omega=2$.
In the particular case of a uniform distribution on $[\vm, \vp] \subset [0,\infty)$,
as~$u$ tends to~$\n_{\that}$ from below, we obtain
\begin{align*}
\log\M_{\Vv_{\that}}(u) 
&= \frac{q}{2}\log\left(\frac{\n_{\that}}{\n_{\that}-u}\right)
 + \n_{\that} \vp\E^{-\kappa \that}\left(\frac{\n_{\that}}{\n_{\that}-u} - 1\right)
  + \log\left(\frac{1-\exp\{(\vm-\vp)\E^{-\kappa \that}\n_{\that} u/(\n_{\that} - u)\}}{\E^{-\kappa \that}\n_{\that} u (\vp-\vm)}(\n_{\that} - u)\right)\\
&= \frac{\E^{-\kappa \that}\n_{\that}^2 \vp}{\n_{\that}-u}
\left[1 + \frac{\E^{\kappa \that}(2-q)}{2\n_{\that}^2 \vp}(\n_{\that} - u)\log(\n_{\that} - u)
 + \left\{ \frac{\E^{\kappa \that}}{\n_{\that} \vp}\log\left(\frac{\E^{\kappa \that}\n_{\that}^{q/2-2}}{\vp-\vm}\right)-1\right\}
 \frac{\n_{\that} - u}{\n_{\that}}  + o(\n_{\that} - u)\right].
\end{align*}
Hence in a uniform randomisation environment, at future time $\that$, 
the shape of the distribution of~$\Vv_{\that}$ depends both on~$\Vv$ 
and on the parameters~$\kappa,\theta,\xi$ that control the dynamics of the variance process.
Moreover from Theorem~\ref{thm:fat-tailImpvolAsymps},
the implied variance at time $\that$, denoted by~$\sigma^2_\tau(x, \that)$, 
has an explosion rate of~$\sqrt{\tau}$:
$$
\sigma_\tau^2(x,\that) = \frac{|x|\tau^{-1/2}}{2\sqrt{2\n_{\that}}} 
+ \frac{\sqrt{\vp |x|}}{2\E^{\kappa \that/2}}(2\n_{\that}\tau)^{-1/4} + o\left(\tau^{-1/4}\right),
\quad\text{for all $x\ne 0$, as }\tau \text{ tends to zero}.
$$

\subsection{The thin-tail case (Assumption~\ref{Assu:V0})}
Here again, $\Dd^H_{\that} = (-\infty, \n_{\that})$ and applying Lemma~\ref{lemma: log-mgf-asymptotics}
with $\log f\sim -l_1 v^{l_2}$, we have
\begin{align}\label{eq:thin-tail-dynamic}
\log\M_{\Vv_{\that}}(u)
&= \frac{q}{2}\log\left(\frac{\n_{\that}}{\n_{\that}-u}\right) 
+ l_1(l_2-1)\left(\frac{1}{l_1l_2}\right)^{\frac{l_2}{l_2-1}}\left(\frac{\E^{-\kappa \that}\n_{\that}^2}{\n_{\that}-u}\right)^{\frac{l_2}{l_2-1}}(1+o(1))\nonumber\\
&= \frac{q}{2}\log\left(\frac{\n_{\that}}{\n_{\that}-u}\right) + \frac{2^{\ggp-1}\ccp}{\ggp}\left(\frac{\E^{-\kappa \that}\n_{\that}^2}{\n_{\that} - u}\right)^{\ggp}\left(1+o(1)\right),
\end{align}
as~$u$ tends to~$\n_{\that}$ from below,
so that a thin-tail initial randomisation generates a fat-tail distribution for~$\Vv_{\that}$ at time~$\that$.
In light of~\eqref{eq:thin-tail-dynamic}, Assumption~\ref{assu:fat-tail-2} does not hold,
and hence~$\Vv_{\that}$ is neither of Gamma or non-central Chi-squared type. 
A case-by-case analysis depending on the distribution of~$\Vv$ is therefore needed in order 
to make the $o(\cdot)$ term in~\eqref{eq:thin-tail-dynamic} more precise.
\begin{example}[Folded-Gaussian randomisation]
When $f(v) \equiv \cf\E^{-l_1v^2}$,
straightforward computations yield
\begin{align*}
\log\M_{\Vv_{\that}}(u) 
&= \frac{q}{2}\log\left(\frac{\n_{\that}}{\n_{\that}-u}\right) + \frac{1}{4l_1}\left(\frac{\E^{-\kappa \that}\n_{\that} u }{\n_{\that} - u}\right)^2
+\log\left(\cf\sqrt{\frac{\pi}{l_1}}\right) + \log\left(\frac{1}{2} + \Phi\left(\frac{\E^{-\kappa \that}\n_{\that} u }{\sqrt{2l_1}(\n_{\that} - u)}\right)\right)\\
&=: \frac{\cf_0}{(\n_{\that} - u)^2} + \frac{\cf_1}{\n_{\that} - u} + \cf_2 -\frac{q}{2}\log(\n_{\that}-u) + o(1).
\end{align*}
We can obtain the small-time asymptotic expansion of the option price
using an approach similar to the proof of Theorem~~\ref{thm:fat-tailImpvolAsymps}.
Specifically, only Lemma~\ref{lemma: cf-of-z}
needs to be adjusted, and the rescaling factor is now~$\vartheta(\tau) = \tau^{1/6}$;
the main contribution to the asymptotics of out-of-the-money option prices is still given 
in Lemma~\ref{lem:main-contribution}.
Translating this into the asymptotics of the implied variance, we obtain, for small~$\tau$,
$$
\sigma_\tau^2(x, \that)
 = \frac{|x|}{2\sqrt{2\n_{\that}\tau}} + \frac{2|x|^{2/3}}{3(4l_1)^{1/3}} \frac{\exp\left(-\frac{2\kappa \that}{3}\right)}{\tau^{1/3}}
  + o(\tau^{-1/3}).
$$
\end{example}
\subsection{The fat-tail case}
In this case, $\Dd_{\that}^{H} = (-\infty, \n_{\that}^*)$.
Here we only discuss two special cases for~$\Vv$: 
the Gamma distribution,
and the (scaled) non-central~$\chi$-squared distribution.
\begin{example}[Gamma randomisation]
If $\Vv\equalDistrib\Gamma(\alpha,\m)$, then from~\eqref{eq:mgf_at_t}, we have
\begin{align*}
\log\M_{\Vv_{\that}}(u) 
&=  \frac{q}{2}\log\left(\frac{\n_{\that}}{\n_{\that}-u}\right) - \alpha\log\left(1-\frac{\E^{-\kappa \that}u\n_{\that}}{\m(\n_{\that} - u)}\right)
= -\alpha\log\left(\frac{\m\n_{\that} - (\m+\exp(-\kappa \that)\n_{\that})u}{\m(\n_{\that}-u)}\right)\\
&= -\alpha\log(\n_{\that}^*-u) + \alpha\log\left(\frac{\m(\n_{\that}-u)}{\m + \exp(-\kappa \that)\n_{\that}}\right)
+ \frac{q}{2}\log\left(\frac{\n_{\that}}{\n_{\that}-u}\right).
\end{align*}
Consequently~$\Vv_{\that}$ is still a fat-tail distribution satisfying Assumption~\ref{assu:fat-tail-2}
with~$\omega=1$, $\gamma_0=-\alpha$, 
while the upper bound of the support of the mgf now depends on 
both the initial distribution~$\Vv$ and on the evolution of the process (through~$\n_{\that}^*$).
A direct application of Theorem~\ref{thm:fat-tailImpvolAsymps}
further suggests that, for small enough $\tau>0$,
$$
\sigma_\tau^2(x, \that) = \frac{|x|}{2\sqrt{2\n^*_{\that} \tau}} + \mathrm{h}_1^{(1)}(x)  + \mathrm{h}_2^{(1)}\log(\tau) + o(1),\qquad \text{with }\mathrm{h}_1^{(1)},\mathrm{h}_2^{(1)}\text{ given in Theorem~\ref{thm:fat-tailImpvolAsymps}}.
$$
\end{example}
\begin{example}[Non-central~$\chi^2$ randomisation]
If $\Vv \equalDistrib \alpha \chi^2(\A,\B)$
then~$\m = 1/(2\alpha)$, and
\begin{align*}
\log\M_{\Vv_{\that}}(u) 
&= \frac{q}{2}\log\left(\frac{\n_{\that}}{\n_{\that}-u}\right) 
+ \left.\left(\frac{\alpha\B z}{1-2\alpha z } - \frac{\A}{2}\log(1-2\alpha z)\right)\right|_{z = \exp(-\kappa \that)u/(1-2\beta_{\that} u)}\\
&= \frac{\alpha\E^{-\kappa \that}\B \n_{\that}^* u }{\n_{\that}^*-u} - \frac{\A}{2}\log\left(\frac{\n_{\that}(\n_{\that}^*-u)}{\n_{\that}^*(\n_{\that} - u)}\right) 
+ \frac{q}{2}\log\left(\frac{\n_{\that}}{\n_{\that} - u}\right)\\
&= \frac{\alpha\E^{-\kappa \that}\B \n_{\that}^{*2}}{\n_{\that}^*-u} - \frac{\A}{2}\log(\n_{\that}^*-u) 
+\frac{q-\A}{2}\log\left(\frac{\n_{\that}}{\n_{\that} - \n_{\that}^*}\right) + \frac{\A}{2}\log \n_{\that}^* - \alpha\E^{-\kappa \that}\B \n_{\that}^* + \Oo(\n_{\that}^* - u),
\end{align*}
which satisfies~\eqref{eq:fat-tail assu2} in Assumption~\ref{assu:fat-tail-2}
with~$\omega=2$ as~$u$ tends to~$\n_{\that}^*$, with~$\n_{\that}^*$ playing the role of the boundary~$\m$, 
and $\gamma_0 = \alpha\B \n_{\that}^{*2}\E^{-\kappa \that}$.
As a result, the implied variance~$\sigma_\tau^2(x, \that)$ has an explosion rate 
of~$\sqrt{\tau}$ as~$\tau$ tends to zero, and its full asymptotic expansion is provided in Theorem~\ref{thm:fat-tailImpvolAsymps}.
\end{example}

This analysis shows that a suitable choice for~$\Vv_{\that}$, consistent with the dynamics of the variance process,
can actually depend on the initial randomisation at time zero, as well as the evolution of the variance.
Even though all three types of initial randomisation imply a fat-tail initial distribution at future time,
the generated small remaining-maturity implied volatility smiles are very different.
The folded-Gaussian (thin-tail) generates a steeper smile 
compared to the bounded support case;
a fat-tail distribution for~$\Vv$ generate an even steeper volatility smile at~$\tau$,
since the coefficient of the leading order is~$\n_{\that}^*$, which is strictly less than~$\n_{\that}$.
\begin{remark}
All distributions discussed in Section~\ref{sec:Asymptotics} generate a fat-tail distribution for~$\Vv_{\that}$.
However, should the assumptions in Section~\ref{sec:Asymptotics} break down,
this may not be true any longer: 
Equation~\eqref{eq:mgf_at_t} suggests that 
the mgf of~$\Vv_{\that}$ can be ill-defined whenever that of~$\Vv$ does not exist,
in the case of a Cauchy distribution for example.
That said, the study of the effective domain below~\eqref{eq:mgf_at_t} indicates
that, in our setting, only fat-tail distributions for~$\Vv_{\that}$ are possible.
\end{remark}

\begin{remark}
As $\that$ tends to zero, $\beta_{\that}$ also converges to zero, $\B_{\that}$ diverges to infinity,
and $\B^*_{\that}$ tends to~$\m$ (defined on Page~\pageref{eq:btStar}). 
Plugging these into the asymptotic behaviour developed in Section~\ref{sec:dynamic},
we recover the moment generating functions from Section~\ref{sec:Asymptotics} as well as
the asymptotics of the implied variance.
\end{remark}


\section{Examples and Numerics}\label{sec:unbounded}
We now choose some common distributions supported on a subset of~$[0,\infty)$ 
for the initial randomisation to illustrate the results in Section~\ref{sec:Asymptotics}.
We first start with the bounded support case, and provide rigorous justifications to the 
statements in Section~\ref{sec:appetiser}.
In Section~\ref{ex:UniformDist}, we consider a uniformly distributed initial variance,
with~$\vp$ finite, and provide full asymptotics of European Call prices.
The remaining sections are devoted to the unbounded support case;
specifically, 
Sections~\ref{ex:NcCS}-\ref{ex:ErgodicDist} correspond to the fat-tail case,
so that Theorem~\ref{thm:fat-tailImpvolAsymps} can be applied.
The thin-tail environment is illustrated in Section~\ref{ex:FoldedGaussian}
where the initial distribution satisfies Assumption~\ref{Assu:V0} with $l_2=2$.

\subsection{Uniform randomisation}\label{ex:UniformDist}
Assume that~$\Vv$ is uniformly distributed on 
$[\vm,\vp]$ with $0\leq \vm<\vp<\infty$,
then Corollary~\ref{Coro:BddSupport} provides the leading term of short-time implied volatility.
However, as will be shown in Section~\ref{sec: numerics}, 
the true volatility smile for small~$t$ is much steeper compared with the leading term,
so that higher-order terms shall be considered.
For any~$x\neq 0$, denote by~$u_{\vp}^*(x)$ the unique solution in~$(u_-,u_+)$ 
to the equation $x = \Lambda'(u)\vp$,
with~$\Lambda$ described in~\eqref{eq:lambda_H}.
From~\cite[Remark 2.1]{forde12}, existence and uniqueness of such a solution are straightforward,
and ~$u_{\vp}^*(x) \neq 0$ holds for any non-zero~$x$.
Introduce the function 
$\Urm: \RR^*\to\RR_+^*$ by
\begin{equation}\label{eq:U}
\Urm_{\vp}(x) 
:=
\exp\left\{\Drm_0^0(u_{\vp}^*(x))\vp + \Crm_0(u_{\vp}^*(x)) + x\right\},
\end{equation}
where the functions~$\Drm_0^0$ and~$\Crm_0$ are provided in~\eqref{eq:D0}-\eqref{eq:C0}.
From~\cite[Remark 3.2]{forde12}, the function~$\Urm$ is well defined on $\RR^*$. 
The following theorem is the main result of this section, 
and provides a detailed asymptotic behaviour of Call option prices as the maturity becomes small:

\begin{theorem}\label{thm:uniform-higher-order}
Under uniform randomisation, as~$t$ decreases to zero,
European Call option prices behave as
$$
\EE\left( \E^{X_t} - \E^x\right)^+
= \left(1-\E^x\right)^+ 
+ \exp\left(-\frac{\Lambda_{\vp}^*(x)}{t}\right) 
\frac{\Urm_{\vp}(x)t^{5/2}\left(1 + o(1)\right)}{(\vp-\vm)\Lambda(u_{\vp}^*(x))u_{\vp}^*(x)^2\sqrt{2\pi\vp\Lambda''(u_{\vp}^*(x))}},
\quad\text{for any }x\ne 0,
$$
where the function~$\Lambda_{\vp}^*$ was introduced in Corollary~\ref{Coro:BddSupport}.
\end{theorem}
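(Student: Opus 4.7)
The plan is to establish the expansion through a Mellin-type contour representation for the Call price followed by a saddle-point (Laplace) analysis, in the spirit of Forde--Jacquier--Mijatovi\'c~\cite{FordeJacMij10} and the forward-smile asymptotics of Jacquier--Roome~\cite{jacquier13}.

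\textbf{Inversion formula.} The first step is to express, for any $c>1$ inside the strip of analyticity of $z\mapsto\M(t,z)$,
\begin{equation*}
\EE\left(\E^{X_t}-\E^x\right)^+
= \left(1-\E^x\right)^+ + \frac{\E^x}{2\pi\I}\int_{c-\I\infty}^{c+\I\infty}\frac{\M(t,z)\E^{-zx}}{z(z-1)}\D z,
\end{equation*}
where the $(1-\E^x)^+$ term is produced, for $x<0$, by the residues at $z=0$ and $z=1$ collected when the contour is shifted to the left of these poles; for $x>0$, no residue correction is needed, as the relevant saddle already lies to the right of both poles.

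\textbf{Tower property and rescaling.} Next I would substitute $\M(t,z)=\E^{\Crm(t,z)}\M_\Vv(\Drm(t,z))$ from~\eqref{eq:TowerProperty} together with the closed-form uniform MGF $\M_\Vv(w)=(\E^{\vp w}-\E^{\vm w})/[w(\vp-\vm)]$. Under the rescaling $z=u/t$ and the short-time expansions from the appendix
\begin{equation*}
\Crm(t,u/t)=\Crm_0(u)+\Oo(t), \qquad \Drm(t,u/t)=\frac{\Lambda(u)}{t}+\Drm_0^0(u)+\Oo(t),
\end{equation*}
the integrand takes the form of a smooth prefactor $g_t(u)\sim t\,\E^{\Crm_0(u)+\vp\Drm_0^0(u)}/[\Lambda(u)u(u-t)(\vp-\vm)]$ multiplied by the phase $\exp\{[\vp\Lambda(u)-xu]/t\}$. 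The subdominant contribution coming from the $\E^{\vm w}$ term in $\M_\Vv$ is of order $\exp\{-(\vp-\vm)\Lambda(u)/t\}$, super-exponentially small along the steepest-descent contour, and can therefore be absorbed in the $o(1)$ remainder.

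\textbf{Saddle-point.} The third step is to deform the vertical contour so that it passes through the unique saddle $u_{\vp}^*(x)\in(u_-,u_+)$ defined by $\vp\Lambda'(u_{\vp}^*(x))=x$, in the direction of steepest descent, which is vertical since $\vp\Lambda''(u_{\vp}^*(x))>0$. The resulting Gaussian integral produces the factor $\I\sqrt{2\pi t/(\vp\Lambda''(u_{\vp}^*(x)))}$; combining this with the $\E^x$ from the inversion formula, the Jacobian $\D z/\D u=1/t$ from the rescaling, and the $t/\Lambda(u_{\vp}^*(x))$ factor arising from $\M_\Vv$, the total power of $t$ becomes $t^{5/2}$, and all constants assemble into the announced expression, the exponential $\E^{\Crm_0(u_{\vp}^*(x))+\vp\Drm_0^0(u_{\vp}^*(x))+x}$ being precisely $\Urm_{\vp}(x)$.

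The hard part will be the rigorous justification of the contour deformation and of the uniform integrability bounds far from the saddle, so as to ensure that only a shrinking neighbourhood of $u_{\vp}^*(x)$ contributes at the stated order. This amounts to controlling $|\M(t,z)|$ along the tails of the shifted vertical contour: one must verify that $\mathrm{Re}(\Crm(t,z)+\vp\Drm(t,z))$ stays bounded from above and that the factor $1/\Drm(t,z)$ supplies enough integrable decay. These estimates are standard consequences of the complex-analytic properties of the Heston affine pair $(\Crm,\Drm)$ and can be adapted almost verbatim from the classical small-time Heston literature, since the uniform randomisation only multiplies the standard Heston MGF by the smooth bounded factor $(1-\E^{-(\vp-\vm)\Drm})/(\Drm(\vp-\vm))$.
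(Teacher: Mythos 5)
Your proposal is correct and leads to the announced coefficient, but it packages the argument differently from the paper. The paper's proof (as it explicitly states) mirrors the proof of Theorem~\ref{thm:fat-tailImpvolAsymps}: it expands the rescaled cgf $\Lambda_t(u)$ of the uniform model, noting the $t\log t$ term coming from $\log(1/\Drm)$, then performs an Esscher change of measure $\D\QQ_t/\D\PP=\exp\{[u_\vp^*(x)X_t-\Lambda_t(u_\vp^*(x))]/t\}$, writes the Call price as $\exp(-\Lambda_\vp^*(x)/t)$ times an explicit prefactor (containing the $t/[(\vp-\vm)\Lambda(u_\vp^*(x))]$ factor) times a $\QQ_t$-expectation, and evaluates that expectation by establishing weak convergence of $(X_t-x)/\sqrt{t}$ to a Gaussian under $\QQ_t$ and then applying a Parseval/Fourier-inversion argument. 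You instead start from the Mellin contour representation of the Call price, substitute $\M(t,z)=\E^{\Crm(t,z)}\M_\Vv(\Drm(t,z))$ with the explicit uniform mgf, rescale $z=u/t$, and run a steepest-descent argument through the saddle $u_\vp^*(x)$. The two are mathematically the same saddle-point analysis in disguise (the tilting parameter is exactly the saddle), but your version is the classical analytic one and the paper's is probabilistic. Your accounting of powers of $t$ is correct in the end ($t^2$ from $1/[z(z-1)]$, $t^{-1}$ from the Jacobian, $t$ from the $1/\Drm$ factor inside $\M_\Vv$, and $t^{1/2}$ from the Gaussian width, totalling $t^{5/2}$), although your displayed expression for $g_t(u)$ with a single power of $t$ is mislabelled — it double-counts the $t/\Lambda(u)$ factor that you also list separately; this does not affect your final tally. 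Your treatment of $(1-\E^x)^+$ via residues at $z=0,1$ when the saddle crosses these poles (i.e. for $x<0$) is also right. The genuine gap you correctly flag — justifying the contour deformation and controlling the tails of the integral away from $u_\vp^*(x)$, exploiting that $\mathrm{Re}(\Crm+\vp\Drm)$ stays bounded and $1/\Drm$ decays — is exactly the part the paper also sweeps into ``the rest of the proof is similar \dots and we omit the details''; neither route gets this for free, and in the paper's version the analogous estimate sits inside the unstated verification that the characteristic-function convergence in the Parseval step is in fact $L^1$ or dominated.
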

\begin{remark}\label{rmk:uniform-higher-order}\
\begin{itemize}
\item The remainder is of order~$t^{5/2}$, 
instead of~$t^{3/2}$ as in both standard Heston and Black-Scholes models~\cite{forde12}.
This can also be seen at the level of the (asymptotic behaviour of) corresponding densities,
as noted in Remark~\ref{rem:FunnyPowers} below.
\item The asymptotics holds locally for any fixed log-strike~$x\neq 0$. 
The numerics indicate that
for small $t>0$, as~$x$ tends to zero, 
the asymptotics of option prices and volatility smile explode to infinity.
This is in contrast with the standard Heston case~\cite[Section~5]{forde12}.
\item Since the function~$\Lambda$ is strictly positive and strictly convex on $(u_-,u_+)\setminus\{0\}$ and
$u_{\vp}(x) \in (u_-,u_+)\setminus\{0\}$ for any $x\ne 0$, 
the quotient on the right-hand side is well defined.
\item In a Black-Scholes model we have (see~\cite[Corollary~3.5]{forde12})
$$
\EE\left(\E^{X_t} - \E^x\right)^+ = (1-\E^x)^+ + \frac{1}{\sqrt{2\pi}x^2}\exp\left(-\frac{x^2}{2\Sigma^2 t} + \frac{x}{2}\right)(\Sigma^2t)^{3/2}(1+\Oo(t)).
$$
Compare it with Theorem~\ref{thm:uniform-higher-order} then
we obtain the higher-order term in the expansion of the implied variance, 
as $t$ tends to zero,
$$
\sigma_t(x)^2 = \frac{x^2}{2\Lambda^*_{\vp}(x)}  
+ \frac{x^2t}{2\Lambda^*_{\vp}(x)^2} \log\left(\frac{\Urm_{\vp}(x)\exp(-x/2)(2\Lambda_{\vp}^*(x))^{3/2}t}{(\vp-\vm)\Lambda(u_{\vp}^*(x))u_{\vp}^*(x)^2\sqrt{\vp\Lambda''(u_{\vp}^*(x))x^2}}\right) + o(t).
$$
\end{itemize}
\end{remark}
\begin{proof}
The procedure is essentially the same as the proof of Theorem~\ref{thm:fat-tailImpvolAsymps}.
Applying Lemmas~\ref{lemma:Dasymp} and~\ref{lemma:Casymp},
the rescaled cgf of~$X_t$ for each~$t$ is given by
(with the same notations as in~\eqref{eq:LambdaH})
\begin{align}\label{eq:cgf-uniform-asympt}
\Lambda_t(u)
&:=\Lambda_1\left(t,\frac{u}{t}\right) 
= t\Crm\left(t,\frac{u}{t}\right) + t\log\left(\M_\Vv\circ\Drm\left(t,\frac{u}{t}\right)\right)
= t\Crm\left(t,\frac{u}{t}\right) + t\log\left(\frac{\E^{\vp \Drm(t,u/t)} - \E^{\vm \Drm(t,u/t)}}{(\vp-\vm)\Drm(t,u/t)}\right)\nonumber\\
&= \vp\Lambda(u) + t\left(\Crm_0(u) + \vp\Drm_0^0(u) - \log\left((\vp-\vm)\Lambda(u)\right)\right) + t\log t + \Oo\left(t^2\right).
\end{align}
For fixed~$x>0$ and small enough~$t>0$,
introduce the time-dependent probability measure~$\QQ_t$ by
$$
\frac{\D\QQ_t}{\D\PP}:= \exp\left(\frac{u_{\vp}^*(x) X_t - \Lambda_t(u_{\vp}^*(x))}{t}\right).
$$
Changing the measure, plugging~\eqref{eq:cgf-uniform-asympt} and rearranging terms yield
the following expression for the Call option price with strike~$\E^x$:
$$
\EE\left(\E^{X_t} - \E^x\right)^+
=\exp\left(-\frac{\Lambda_{\vp}^*(x)}{t}\right)
\Urm_{\vp}(x)
\frac{t\left(1+\Oo(t)\right)}{(\vp - \vm)\Lambda(u_{\vp}^*(x))}
\EE^{\QQ_t}\left[\exp\left(\frac{-u_{\vp}^*(x)(X_t - x)}{t}\right)\left(\E^{X_t - x}-1\right)^+\right].
$$
It is easy to show that for fixed~$t>0$, 
under~$\QQ_t$ the random variable $\left(\frac{X_t - x}{\sqrt{t}}\right)$ converges weakly to a Gaussian distribution. 
The rest of the proof is similar to Section~\ref{sec:proofOfFat-tailImpvolAsymps} and we omit the details.
\end{proof}

We now explain the steepness of the volatility smile
in the uncorrelated case $\rho=0$.
Using the at-the-money curvature formula for the implied volatility (in uncorrelated stochastic volatility models)
proved by De Marco and Martini~\cite[Equation~(2.9)]{demarco10},
we can write, for any $t>0$, 
\begin{equation}\label{eq:curvature-atm}
\left.\partial_x^2\sigma(t,x)^2\right|_{x=0}
 = \frac{2}{t}\left\{\sigma(t,0)\sqrt{2\pi t}\exp\left(\frac{\sigma(t,0)^2 t}{8}\right)p_t(0)-1\right\},
\end{equation}
where~$p_t$ is the density of the log-price process at time~$t$.
In the standard Heston model with the initial condition $V_0 = v_0\in(\vm,\vp)$,
such that~$\EE(\sqrt{\Vv}) = \sqrt{v_0}$,
the small-time asymptotics of the density reads~\cite[Section~5.3]{FrizGerholdPinter}
$$
p_t(x) = \exp\left(-\frac{\Lambda_{v_0}^*(x)}{t}\right)\frac{\Urm_{v_0}(x)}{\sqrt{2\pi v_0 \Lambda''(x)}} t^{-1/2}\left(1+o(1)\right),\quad \text{for any }x\ne 0,
$$
with the function~$\Urm$ defined in~\eqref{eq:U}.
Applying the saddle point method similar to the proof of~\cite[Theorem~3.1]{forde12},
the small-time asymptotics of the density in a randomised setting, 
denoted as~$\widetilde{p}_t$, has the expression
$$
\widetilde{p}_t(x) = \exp\left(-\frac{\Lambda_{\vp}^*(x)}{t}\right)\frac{\Urm_{\vp}(x)}{\sqrt{2\pi\vp\Lambda''(x)}}\frac{t^{1/2}\left(1+o(1)\right)}{(\vp-\vm)\Lambda(u_{\vp}^*(x))},\quad x \ne 0.
$$
\begin{remark}\label{rem:FunnyPowers}
Note the difference between the powers $t^{1/2}$ and $t^{-1/2}$ in the expressions 
for~$p_t$ and~$\widetilde{p}_t$ above.
Even if, in the bounded support case, the leading-order term is not affected by the randomisation,
the latter does act at higher order.
We leave a precise study of this issue to further research.
\end{remark}
The ratio~$p_t(x)/\widetilde{p}_t(x)$ then reads
$$
\frac{p_t(x)}{\widetilde{p}_t(x)}
= \frac{1}{t}\exp\left(-\frac{\Lambda_{v_0}^*(x) - \Lambda_{\vp}^*(x)}{t}\right)
\left(\frac{\vp}{v_0}\right)^{\frac{1}{2}}
\frac{\Urm_{v_0}(x)}{\Urm_{\vp}(x)}(\vp-\vm)\Lambda(u_{\vp}^*(x))\left(1+o(1)\right),\quad x \ne 0.
$$
It is easy to verify that
$\lim\limits_{x\downarrow 0 }\Urm_{v_0}(x) = \lim\limits_{x\downarrow 0 }\Urm_{\vp}(x) = 1$ and
$\lim\limits_{x\downarrow 0 }\Lambda(u_{\vp}^*(x)) = 0$.
Moreover, for any fixed $x\ne 0$,
$$
\partial_v\Lambda_v^*(x) = \partial_v\left[u_v^*(x)x - v\Lambda(u_v^*(x))\right]
= \left[x-v\Lambda'(u_v^*(x))\right] \frac{\partial u_v^*(x)}{\partial v} - \Lambda(u_v^*(x))
= - \Lambda(u_v^*(x)) < 0.
$$
Combining these results,
assume that the density at zero can be approximated by~$p_t(x)$ for small enough~$x>0$, 
then there exists $t^*>0$ small enough such that $p_t(x) / \tilde{p}_t(x)<1$ for all $t \in (0,t^*)$.
Plugging it back to~\eqref{eq:curvature-atm},
and noticing that (Section~\ref{sec:ATM})
$\sigma(t,0)\sim\EE[\sqrt{\Vv}] = \sqrt{v_0} \sim \sigma_t(0,v_0)$ holds as~$t$ tends to zero,
then the small-time curvature in a uniformly randomised Heston is much larger compared with that of a standard Heston,
implying a much steeper smile around the at-the-money. 
Figure~\ref{graph:steep} provides a visual help.

Finally, we mention that
the tail behaviour of the implied volatility in a uniformly randomised Heston model
is similar to that of the standard Heston.
To see this, notice that the moment explosion property in the standard Heston setting
is described in~\cite[Proposition~3.1]{andersen05}.
Specifically,
the explosion of the mgf of~$X_t$ is equivalent to the explosion of the function~$\Drm$ provided in~\eqref{eq:HestonMGF}.
Moreover, Equation~\eqref{eq:TowerProperty} suggests that it is still the case in the uniform randomised setting,
since~$\m$ is infinity.
Then the similarity of the tail behaviours follows from~\cite{lee04}
(see also~\cite{benaim09, caravenna15}).

\subsection{Non-central $\chi$-squared distribution}\label{ex:NcCS}
Assume that~$\Vv$ is non-central~$\chi$-squared distributed with~$q>0$ degrees of freedom and non-centrality parameter~$\lambda > 0$, so that its moment generating function reads
$$
\M_{\Vv}(u) = \frac{1}{(1 - 2u)^{q/2}}\exp\left(\frac{\lambda u}{1-2u}\right),
\qquad\text{for all }u \in \Dd_\Vv = (-\infty, 1/2),
$$
then $\vp$ is infinite and $\m = 1/2$.
By Proposition~\ref{prop:limit_of_cgf}, the only suitable scale function is 
$h(t)\equiv \sqrt{t}$, which corresponds exactly to the forward-start Heston model,
the asymptotics of which have been studied thoroughly in~\cite{jacquier13}.
Applying Equation~\ref{eq: bddlimit} and L'H\^opital's rule with 
$\M_{\Vv}'(u) = \M_{\Vv}(u)\left(\lambda(1-2u)^{-2} + q(1-2u)^{-1}\right)$ 
imply that, at the right endpoint $u = \sqrt{2\m} = 1$, as~$t$ tends to zero,
the pointwise limit
$$
\lim_{t\downarrow 0}\Lambda_{1/2}\left(t,\frac{1}{\sqrt{t}}\right) 
= \frac{4}{2-\rho\xi}\lim_{s\downarrow 0}\frac{s^2\M_{\Vv}'(1/2-s)}{\M_{\Vv}(1/2-s)} 
= \frac{\lambda}{2-\rho\xi}
$$
can be either finite or infinite.
In particular, since $\lambda>0$, the pointwise limiting rescaled cumulant generating function is not
continuous at the right boundary of its effective domain.
The cgf of~$\Vv$ satisfies Assumption~\ref{assu:fat-tail-2} with~$\omega=2$, 
then Theorem~\ref{thm:fat-tailImpvolAsymps} implies that we can recover~\cite[Theorem 4.1]{jacquier13}:
$$
\sigma_t^2(x) 
= \frac{|x|}{2}t^{-1/2} + \frac{\sqrt{\lambda |x|}}{2}t^{-1/4} + o\left(t^{-1/4}\right) ,\quad\text{as }t\text{ tends to zero}.
$$

\subsection{Folded Gaussian distribution}\label{ex:FoldedGaussian}
Assume that $\Vv\equalDistrib |\Nn(0,1)|$, 
then the density of~$\Vv$ reads
$$
f(v) = \sqrt{\frac{2}{\pi}}\exp\left(-\frac{1}{2}v^2\right),
\qquad\text{for all }v \in \Dd_{\Vv}=\RR_+,
$$
which satisfies Assumption~\ref{Assu:V0}. 
Simple computations yield
$\M_{\Vv}\left(z\right) = 2\exp\left(z^2/2\right)\Phi(z)$,
for any~$z\in\RR$,
where~$\Phi$ denotes the Gaussian cumulative distribution function. 
Therefore, Lemma~\ref{lemma:Dasymp} implies that for ~$\gamma\in(0,1)$,
$$
t^\gamma\log \M_{\Vv}\left(\Drm\left(t,\frac{u}{t^\gamma}\right)\right)
= \frac{u^4}{8}t^{2-3\gamma} + \frac{\rho\xi u^5}{8}t^{3-4\gamma} - \frac{u^3}{4}t^{2-2\gamma} + \Oo\left(t^{4-5\gamma}\right)+ \Oo\left(t^\gamma\right).
$$
If~$\gamma=1$, then 
$\M_{\Vv}(x\Lambda(u)) = 2\exp\left(\frac{1}{2}\Lambda^2(u)x^2\right)\Phi(x\Lambda(u))$, 
and hence
$$
t\log \M_{\Vv}\left(\Drm\left(t,\frac{u}{t}\right)\right)
= \frac{\Lambda^2(u)}{2 t} + \Drm_0^{0}(u) + \Oo(t).
$$
The limit is therefore non-degenerate if and only if~$h(t) = t^{2/3}$, 
in which case~$\Lambda_{2/3}(u) = \frac{1}{8}u^4$, for all $u \in \RR$,
and Theorem~\ref{thm:ThinTail} implies 
$$
\lim_{t\downarrow 0}t^{1/3}\sigma_t^2(x) = \frac{(2x)^{2/3}}{3},
\quad\text{for all }x\ne 0.
$$

\subsection{Starting from the ergodic distribution}\label{ex:ErgodicDist}
Remark~\ref{rmk:props-of-v} shows that 
the stationary distribution of a randomised Heston model
has the density
$$
f_\infty(v) = \frac{a^b}{\Gamma(b)} v^{b-1}\E^{-av},\quad\text{for }v>0,
$$
where $a := \frac{2\kappa}{\xi^2}$ and $b := a\theta$. 
Assume now that~$f_\infty$ is the density of~$\Vv$, 
so that the cgf of~$\Vv$ is given by $\log\M_\Vv(u) = -b\log\left(a-u\right) + b\log a$, with~$u<a=\m$.
Then Assumption~\ref{assu:fat-tail-2} is satisfied with $\omega=1$.
A direct application of Theorem~\ref{thm:fat-tailImpvolAsymps} implies that 
$$
\sigma_t^2(x) = \frac{\xi |x|}{4\sqrt{\kappa t}} + o\left(t^{-1/2}\right),\quad\text{for any }x\neq 0.
$$

\subsection{Other distributions}
The following table (more refined version of the one in Section~\ref{sec:intro}) 
presents some common continuous distributions for the initial variance, 
together with the corresponding parameters~$\vp, \m, l_2$.
In each case, we indicate (up to a constant multiplier) the short-time behaviour of the smile. 
\begin{table}[ht]
\caption{Some continuous distributions with support in~$\RR_+$}
\centering
\begin{tabular}{c c c c c c}
\hline\hline
Name & $\vp$ & $\m$ & $l_2$ & Behaviour of~$\sigma^2_t(x)$ ($x\ne 0$) & Reference\\ [0.5ex] 
\hline
Beta & $1$ & $\infty$ & \  & $x^2 / \Lambda_{1}^*(x)$ & Equation~\ref{eq:IVStandard}\\
Exponential($\lambda$) & $\infty$ & $\lambda<\infty$ & $1$ & $|x|t^{-1/2}$ & Theorem~\ref{thm:fat-tailImpvolAsymps}\\
$\chi$-squared & $\infty$ & $1/2$ & $1$ & $|x|t^{-1/2}$ & Theorem~\ref{thm:fat-tailImpvolAsymps}\\
Rayleigh & $\infty$ & $\infty$ & $2$ & $ x^{2/3} t^{-1/3}$ & Theorem~\ref{thm:ThinTail}\\
Weibull ($k>1$) & $\infty$ & $\infty$ & $k$ & $ (x^2/t)^{1/(1+k)}$ & Theorem~\ref{thm:ThinTail}\\ [1ex]
\hline
\end{tabular}
\label{table:nonlin}
\end{table}

\subsection{Numerics}\label{sec: numerics}
We present numerical results for the implied volatility surface for three types of initial randomisation: 
uniform ($\vp<\infty$), exponential ($\m < \vp = \infty$) and folded-Gaussian ($\m = \vp =\infty$). 
To generate these surfaces, 
we apply Fast Fourier Transform (FFT) methods~\cite{carr99} to derive a matrix of option prices,
and then compute the corresponding implied volatilities using a root-finding algorithm.
The Heston parameters are given by~$(\kappa, \theta, \xi, V_0, \rho) = (2.1, 0.05, 0.1, 0.06, -0.6)$,
which corresponds to a realistic data set calibrated on the S\&P options data.
In view of Theorem~\ref{rHestonATM},
parameters of~$\Vv$ are chosen to satisfy $\EE(\sqrt{\Vv}) = \sqrt{V_0}$, 
so that results of standard and randomised Heston models can be compared.

The numerics show that
the randomised Heston model provides a much steeper short-time volatility smile compared with 
the standard Heston model,
but this difference tends to fade away as maturity increases.
In the uniform case, Figure~\ref{graph:uniform} and~\eqref{eq:IVStandard} may seem contradictory at first,
since the former indicates steepness and the latter excludes explosion. 
There is no issue here, and in fact suggests that even though there is no proper explosion,
it is still possible to generate steep short-time volatility smiles in a randomised setting. 
In Figure~\ref{graph:short-time-gamma} we test higher-order terms in a Gamma randomisation scheme 
while the Heston parameters remain unchanged.

\begin{figure}[h!]
\includegraphics[scale=0.4]{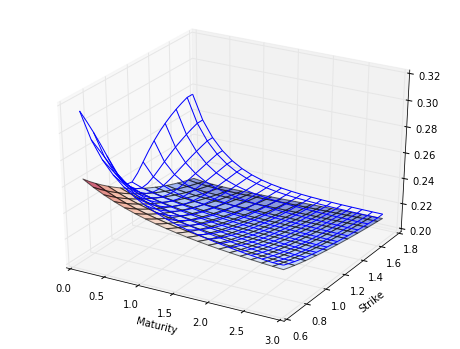}\quad
\includegraphics[scale=0.4]{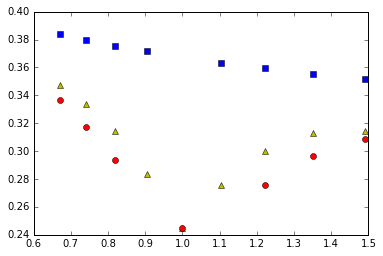}
\caption{Uniform randomisation with $(\vm, \vp) = (0,0.135)$.
Time-to-maturities are represented in years.
Left: volatility surfaces of randomised and standard Heston calculated with the FFT method. 
Right:
triangles, squares and circles represent implied volatility by FFT, leading-, and second-order asymptotics.
Time to maturity is $t = 1/24$. 
Higher-order terms are obtained by inverting the asymptotic formula in Theorem~\ref{thm:uniform-higher-order}
(see also Remark~\ref{rmk:uniform-higher-order}).}
\label{graph:uniform}
\end{figure}
\begin{figure}[h!]
\includegraphics[scale=0.4]{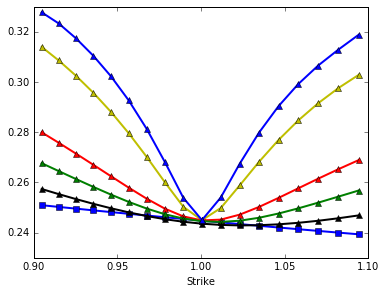}
\caption{Uniform randomisation with $(\vm, \vp) = (0,0.135)$.
Blue squares is the implied volatility obtained from the standard Heston model, maturity $t=0.005$.
Blue, yellow, red, green and black triangles represent the implied volatilities computed from the randomised Heston model by FFT, 
with the maturities equal to~$0.005$,~$0.01$,~$0.05$,~$0.1$, and~$0.2$.
The graph illustrates the increase of steepness in a randomised Heston setting as the maturity tends to zero.}
\label{graph:steep}
\end{figure}
\begin{figure}[h!]
\includegraphics[scale=0.385]{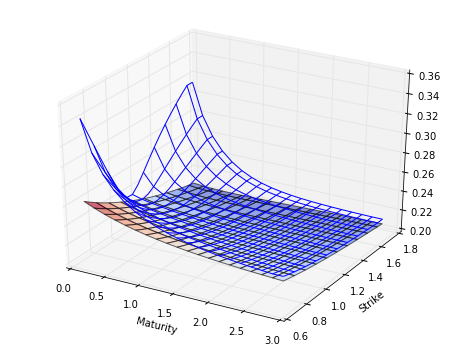}
\includegraphics[scale=0.385]{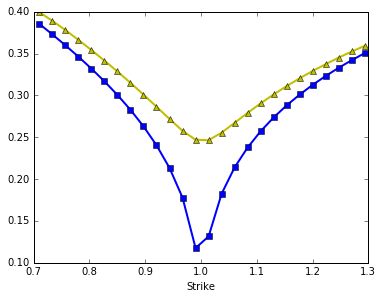}
\includegraphics[scale=0.385]{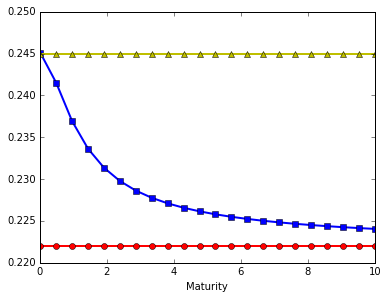}
\caption{$\Vv\equalDistrib$ Folded-Gaussian.
Left: implied volatility surfaces of folded-Gaussian randomisation and standard Heston,
calculated using FFT. 
Middle: implied volatility by FFT (triangles) and the leading order (squares) in Theorem~\ref{thm:ThinTail},~$t=1/24$.
Right: triangles, squares and circles represent
$\sqrt{V_0}$, ATM implied volatility~$\sigma_t(0)$ by FFT, and large-time limit.
The parameter~$l_1$ in Assumption~\ref{Assu:V0} is~$63.46$.}
\label{graph:short-time-fG}
\end{figure}
\begin{figure}[h!]
\includegraphics[scale=0.41]{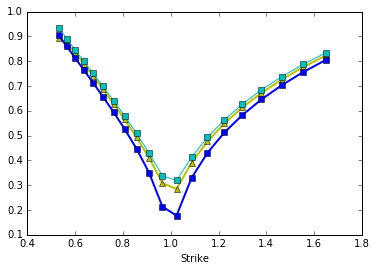}
\includegraphics[scale=0.41]{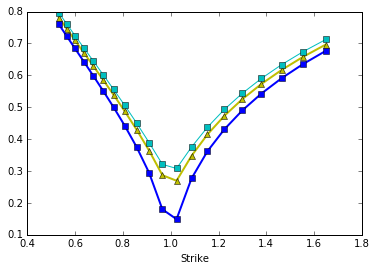}
\includegraphics[scale=0.41]{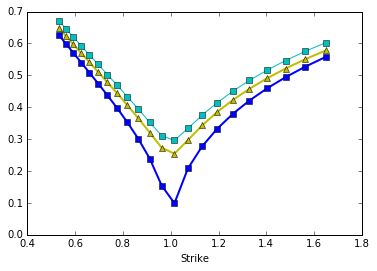}
\caption{$\Vv\equalDistrib\Gamma(\alpha,\beta)$ with $(\alpha,\beta) = (0.4, 3.868)$. 
Here we preset $\alpha$ and calculate $\beta$ using Theorem~\ref{rHestonATM}.
Blue and cyan squares are first- and second-order asymptotics,
yellow triangles are true smiles by FFT.
From left to right 
maturities are one week, two weeks and one month.}
\label{graph:short-time-gamma}
\end{figure}
In Figure~\ref{graph:dynamic-gamma} we illustrate the results in Section~\ref{sec:dynamic}.
We price the option 
in three different randomisation schemes after one month ($\that=1/12$) into the life of the contract. 
To compare different schemes, we again match the parameters of~$\Vv$ (at time zero) 
with different distributions
according to Theorem~\ref{rHestonATM}.
We see that the higher-order term in Theorem~\ref{thm:fat-tailImpvolAsymps}
is quite accurate even for relatively large time to maturity.
Not surprisingly (especially in the folded-Gaussian case)
the leading order is insufficient, and higher orders are needed for reliable approximations.
\begin{figure}[h!]
\includegraphics[scale=0.41]{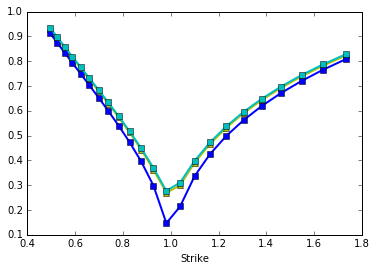}
\includegraphics[scale=0.41]{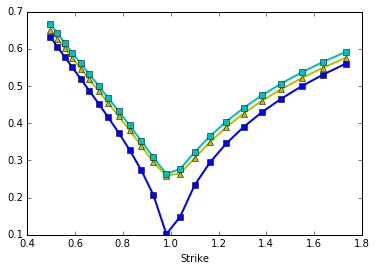}
\includegraphics[scale=0.41]{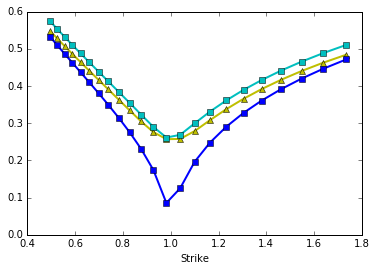}\\
\includegraphics[scale=0.41]{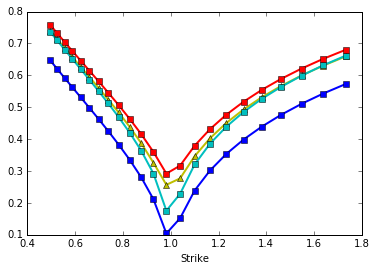}
\includegraphics[scale=0.41]{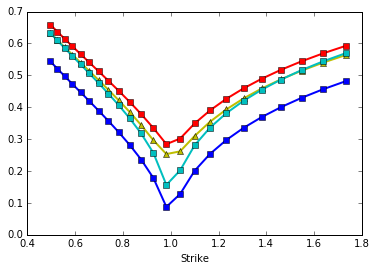}
\includegraphics[scale=0.41]{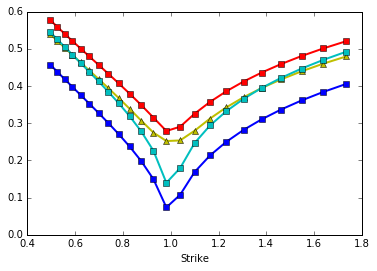}\\
\includegraphics[scale=0.41]{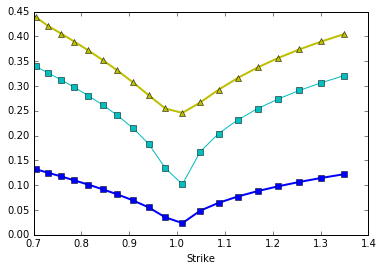}
\includegraphics[scale=0.41]{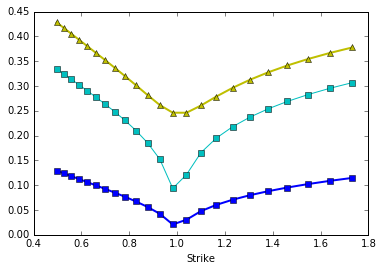}
\includegraphics[scale=0.41]{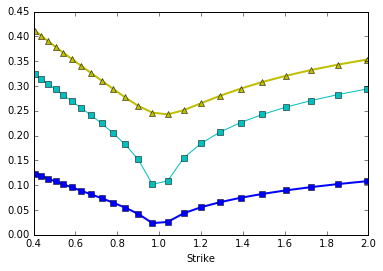}
\caption{
Numerical examples for the dynamic pricing framework.
We price the option at~$\that = 1/12$ in three different randomisation schemes:
Gamma ($\Gamma(0.4, 3.868)$, top), non-central Chi-squared ($0.07\chi^2(0.23, 1.25)$, middle),
folded-Gaussian ($l_2=2$, $l_1 = 63.46$, bottom).
Time to maturity $\tau$ is one week, one month, two months (left to right).
Blue and cyan squares are first- and second-order asymptotics,
red squares (in the second row) are third-order asymptotics;
yellow triangles are true implied volatilities computed by FFT.
}
\label{graph:dynamic-gamma}
\end{figure}
\clearpage
\subsection{USD/JPY FX options}
We test the calibration accuracy of the randomised Heston model 
using the USD/JPY FX market (ask) prices on January 20th, 2017.
In the FX market the implied volatility still has the small-time explosion feature:
Figure~\ref{fig:fx1} shows that the volatility smile generated by a standard Heston model
is too flat compared with the market data with small maturities.
This finding agrees with the existing literature.
For instance, in~\cite{janick10} the authors fixed~$\kappa$ and~$v_0$,
and calibrated the remaining~$3$ parameters
$(\theta,\xi,\rho)$ in a standard Heston environment to the EUR/USD market data.
They selected maturities ranging from one week to two years,
then calibrated the Heston model for each fixed maturity. 
Even with this `slice-by-slice' calibration procedure,
they observed poor fit of Heston to the market data for small maturities.
Unsurprisingly, they commented that time-dependent parameters, 
or `stochastic volatility plus jumps', as appeared in~\cite{bakshi97, Bates96}, 
are needed to improve the calibration accuracy.
We use the same initial guess for both the standard and the randomised Heston models, 
then calibrate the parameter sets using the market data.
The results are presented in Figure~\ref{fig:fx1} and Table~\ref{tab:CalibJPY}.
Both randomisation schemes have substantial improvement over the standard Heston model.
\begin{figure}[h!]
\includegraphics[scale=0.3]{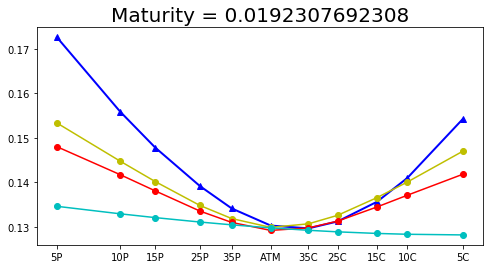}
\includegraphics[scale=0.3]{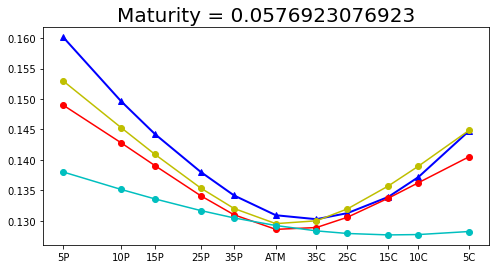}
\includegraphics[scale=0.3]{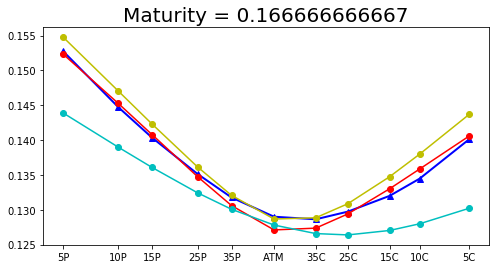}\\
\includegraphics[scale=0.3]{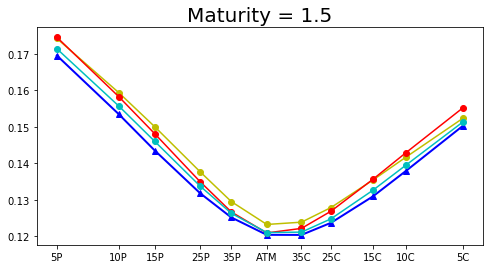}
\includegraphics[scale=0.3]{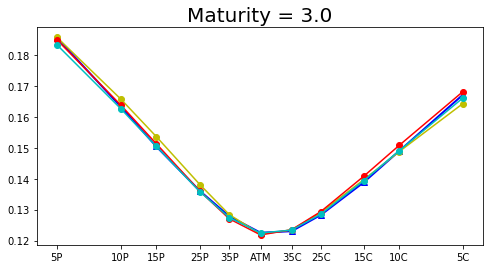}
\includegraphics[scale=0.3]{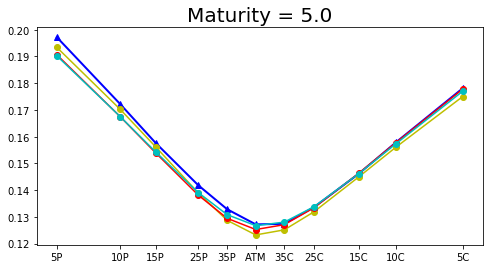}
\caption{
Calibration results of the randomised Heston model with Gamma (yellow) and uniform (red) randomisations, compared to the standard Heston model (cyan). 
}
\label{fig:fx1}
\end{figure}
\begin{table}[ht]
\centering
\begin{tabular}{c c c c}
\hline\hline
Model & Small maturities & Less than one year & Total\\ [0.5ex] 
\hline
Standard Heston & $11.91$ & $8.22$ & $7.34$\\
Gamma randomisation & $5.86$ & $5.02$ & $5.32$ \\
Uniform randomisation & $6.86$ & $5.13$ & $5.51$\\ 
\hline\\
\end{tabular}
\caption{The root-mean-square deviation (RMSD) of 
standard and randomised Heston models ($\times 10^{-3}$).
Small maturities are those less than one month;
the total RMSD is calculated over all maturities including extrapolations up to seven years.}
\label{tab:CalibJPY}
\end{table}

\appendix
\section{Notations from the Heston model}
In the Heston model, the log stock price satisfies the SDE~\eqref{eq:RandomHestonSDE},
where the initial distribution~$\Vv$ is a Dirac mass at some point~$v_0>0$.
As proved in~\cite{albrecher07}, 
the moment generating function~\eqref{eq:MGFX} admits the closed-form representation
$\M(t,u) = \exp\left(\Crm(t, u)+\Drm(t, u)v_0\right)$, for any $u\in\Dd_\M^t$,
where 
\begin{equation}\label{eq:HestonMGF}
\left\{
\begin{array}{rl}
\Crm(t,u) & := \displaystyle \frac{\kappa\theta}{\xi^{2}}
\left[(\kappa- \rho\xi u-d(u))t - 2\log\left(\frac{1-g(u)\E^{-d(u)t}}{1-g(u)}\right)\right],\\
\Drm(t,u) & := \displaystyle \frac{\kappa- \rho\xi u-d(u)}{\xi^{2}}\frac{1-\exp\left(-d(u)t\right)}{1-g(u)\exp\left(-d(u)t\right)},\\
d(u) & : = \displaystyle \left((\kappa - \rho\xi u)^{2}+\xi^{2} u (1-u)\right)^{1/2}
\qquad\text{and}\qquad
g(u) : = \frac{\kappa- \rho\xi u-d(u)}{\kappa-\rho\xi u +d(u)}.
\end{array}
\right.
\end{equation}
In the proof of~\cite[Lemma 6.1]{forde12}, the authors showed that the functions~$d$ and~$g$ have the following behaviour
as~$t$ tends to zero:
\begin{equation}\label{eq:heston_asymp}
d\left(\frac{u}{t}\right) = \I \frac{d_0 u}{t} + d_{1} + \Oo (t)
\qquad\text{and}\qquad
g\left(\frac{u}{t}\right) = g_{0} - \I \frac{g_1}{u}t + \Oo (t^2),
\end{equation}
with
$d_{0} := \xi\rrho\,\sgn(u)$, 
$d_{1} := \I \displaystyle \frac{2\kappa\rho-\xi}{2\rrho} \sgn(u)$, 
$g_{0} := \displaystyle \frac{\I\rho-\rrho\sgn(u)}{\I\rho+\rrho\,\sgn(u)}$ and 
$g_{1} := \displaystyle \frac{(2\kappa-\rho\xi)\sgn(u)}{\xi\rrho(\rrho+\I \rho\,\sgn(u))^{2}}$.
The pointwise limit of the (rescaled) cumulant generating function of~$X_t$ then reads
$$
\lim_{t\downarrow 0}t\log\M\left(t,\frac{u}{t}\right) = \Lambda(u)v_0,
\qquad\text{for any }u\in (u_-,u_+),
$$
where $u_-,u_+$ and $\Lambda$ are introduced in~\eqref{eq:lambda_H}.
From~\cite[Section 2]{forde12}, the function $\Lambda$ is well defined, 
smooth and strictly convex on $(u_-, u_+)$, and infinite elsewhere.

\section{Reminder on large deviations and Regular variations}\label{app:LDP}

\subsection{Large deviations and the G\"artner-Ellis theorem}\label{app:LDPLDP}
In this appendix, we briefly recall the main definitions and results from large deviations theory,
which we need in this paper. 
For full details, the interested reader is advised to look at the excellent monograph by Dembo and Zeitouni~\cite{dembo92}.
Let $(Y_n)_{n\geq 0}$ denote a sequence of real-valued random variables.
A map $I:\RR\to\RR_+$ is said to be a good rate function if it is lower semicontinuous and 
if the set $\{y:I(y)\leq \alpha\}$ is compact in~$\RR$ for each~$\alpha\geq 0$.
\begin{definition}
Let $h:\RR\to\RR_+$ be a continuous function that tends to zero at infinity.
The sequence~$(Y_n)_{n\geq 0}$ satisfies a large deviations principle as~$n$ tends to infinity with speed~$h(n)$ and rate function~$I$ 
(in our notations, $Y\sim\LDP_\infty(h(n), I)$)
if for each Borel measurable set~$\Ss\subset\RR$, the following inequalities hold:
$$
-\inf_{y\in\Ss^o}I(y)\leq\liminf_{n\uparrow\infty} h(n)\log\PP(Y_n\in\Ss)\leq\limsup_{n\uparrow\infty}h(n)\log\PP(Y_n\in\Ss)\leq -\inf_{y\in\overline{\Ss}}I(y).
$$
\end{definition}
Let now~$\Lambda_h$ be the pointwise limit of the rescaled cgf of~$Y$:
$\Lambda_h(u) := \lim_{n\uparrow\infty}h(n)\log\EE\left[\exp(uY_n/h(n))\right]$, whenever the limit exists,
and denote by~$\Dd_\Lambda:=\{u\in\RR: |\Lambda_h(u)|<\infty\}$ its effective domain.
Then~$\Lambda_h$ is said to be essentially smooth if
the interior~$\Dd_\Lambda^o$ is non-empty,~$\Lambda_h$ is differentiable on~$\Dd_\Lambda^o$, 
and $\displaystyle\lim_{u \to u_0} |\Lambda_h'(u) | = \infty,$ for any $u_0\in\partial\Dd_\Lambda$.
Finally, for any $y\in\RR$, define $\Lambda_h^*(y):=\sup_{u\in\Dd_\Lambda}\{uy - \Lambda_h(u)\}$, 
the convex conjugate of function~$\Lambda_h$.
\begin{theorem}[G\"artner-Ellis theorem, Theorem~2.3.6 in~\cite{dembo92}]\label{thm:Gartner}
If the function~$\Lambda_h$ is lower semicontinuous on~$\Dd_\Lambda$ and essentially smooth, and $0\in\Dd_\Lambda^o$,
then~$(Y_n)_{n\geq 0}\sim\LDP_\infty(h(n), \Lambda_h^*)$.
\end{theorem}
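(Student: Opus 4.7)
The plan is to establish the two halves of a large deviations principle separately, following the classical Cram\'er--Chernoff route and then upgrading it to the full statement via exponential tilting. Throughout, write $\Lambda := \Lambda_h$ and $\Lambda^* := \Lambda_h^*$, and set $\varepsilon_n := h(n)$, so that $\varepsilon_n \to 0$.

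First, I would prove the upper bound, which is the easier direction. For any closed half-line $[y, \infty)$ and any $u \geq 0$ in $\Dd_\Lambda$, Markov's inequality gives
\[
\PP(Y_n \geq y) \leq \exp\!\left(-\tfrac{uy}{\varepsilon_n}\right)\EE\!\left[\exp\!\left(\tfrac{u Y_n}{\varepsilon_n}\right)\right],
\]
so taking $\varepsilon_n \log$ and letting $n \uparrow \infty$ yields $\limsup \varepsilon_n \log \PP(Y_n \geq y) \leq -(uy - \Lambda(u))$; optimising over $u \geq 0$ produces $-\Lambda^*(y)$ on the appropriate branch, and symmetrically for $(-\infty, y]$. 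The assumption $0 \in \Dd_\Lambda^o$ combined with Chebyshev delivers exponential tightness, i.e.\ a sequence of compact sets $K_M$ with $\limsup \varepsilon_n \log \PP(Y_n \notin K_M) \leq -M$, and standard covering arguments then promote the one-sided estimates to the upper bound over arbitrary closed sets.

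The lower bound is the heart of the argument, and I would prove it via exponential tilting. For an exposed point $y$ with exposing hyperplane $u^*$ (i.e.\ $\Lambda^*(y) = u^* y - \Lambda(u^*)$ and $u^* \in \Dd_\Lambda^o$), define the tilted measure $\QQ_n$ by
\[
\frac{\D\QQ_n}{\D\PP} := \exp\!\left(\tfrac{u^* Y_n - \Lambda_n(u^*)}{\varepsilon_n}\right),
\quad\text{with } \Lambda_n(u) := \varepsilon_n \log \EE[\exp(uY_n/\varepsilon_n)].
\]
For any open neighbourhood $B_\delta(y) \subset G$,
\[
\PP(Y_n \in B_\delta(y)) = \EE^{\QQ_n}\!\left[\exp\!\left(\tfrac{\Lambda_n(u^*) - u^* Y_n}{\varepsilon_n}\right)\ind_{\{Y_n \in B_\delta(y)\}}\right]
\geq \exp\!\left(\tfrac{\Lambda_n(u^*) - u^*(y + \delta \cdot \mathrm{sgn}(u^*))}{\varepsilon_n}\right)\QQ_n(Y_n \in B_\delta(y)),
\]
so it suffices to prove $\QQ_n(Y_n \in B_\delta(y)) \to 1$. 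For this, a direct mgf computation under $\QQ_n$ gives $\varepsilon_n \log \EE^{\QQ_n}[\exp(\lambda Y_n/\varepsilon_n)] \to \Lambda(u^* + \lambda) - \Lambda(u^*)$, whose derivative at $\lambda = 0$ equals $\Lambda'(u^*) = y$; applying the upper bound already established to the tilted sequence forces $Y_n \to y$ in $\QQ_n$-probability. Taking $\varepsilon_n \log$, letting $n \uparrow \infty$ and then $\delta \downarrow 0$ yields $-\Lambda^*(y)$.

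The principal obstacle is that not every $y \in \Dd_{\Lambda^*}^o$ is exposed: the tilting argument above only delivers the lower bound on the set $\Ff$ of exposed points. This is precisely where the essential smoothness hypothesis earns its keep. The strategy is to show $\Ff = \{\Lambda'(u) : u \in \Dd_\Lambda^o\}$ under essential smoothness, and that this set is dense in $\{\Lambda^* < \infty\}^o$ with $\Lambda^*$ continuous there; then for any open $G$ and any $y \in G$ with $\Lambda^*(y) < \infty$, approximate $y$ by exposed points $y_k \in G \cap \Ff$ with $\Lambda^*(y_k) \to \Lambda^*(y)$, apply the exposed-point bound to each, and pass to the limit. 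Verifying the density and continuity claims is a convex-analytic exercise using the strict convexity and steepness of $\Lambda$ on $\Dd_\Lambda^o$ guaranteed by essential smoothness.
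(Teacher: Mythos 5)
The paper does not prove this theorem at all: it is stated as a quoted result, with an explicit citation to Theorem~2.3.6 of Dembo and Zeitouni~\cite{dembo92}, and is used throughout as a black box. So there is no ``paper's own proof'' against which to compare; the relevant benchmark is the proof in~\cite{dembo92}.

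Against that benchmark, your sketch reproduces the canonical argument faithfully: Chernoff/Markov plus exponential tightness (from $0\in\Dd_\Lambda^o$) for the upper bound over closed sets; exponential tilting and the change-of-measure identity for the lower bound at exposed points whose exposing hyperplane lies in $\Dd_\Lambda^o$; and essential smoothness (steepness plus differentiability on the interior) to show that the infimum of $\Lambda^*$ over an open set is already attained along exposed points, thereby removing the restriction. Your intermediate claim that under the tilted measure $\QQ_n$ one has $Y_n\to y$ in probability is the right mechanism, and it is obtained exactly as you say, by feeding the tilted sequence back into the already-proved upper bound. The one place where you are a bit quick is the final step: what is actually proved in Dembo--Zeitouni (Lemma~2.3.12) is not a pointwise approximation of an arbitrary $y$ by exposed $y_k$ with $\Lambda^*(y_k)\to\Lambda^*(y)$, but the weaker and more robust statement that $\inf_{y\in G}\Lambda^*(y)=\inf_{y\in G\cap\Ff}\Lambda^*(y)$ for every open $G$; this sidesteps questions of continuity of $\Lambda^*$ at boundary points. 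That is a refinement, not a gap, since your continuity claim does hold on $(\Dd_{\Lambda^*})^o$ by convexity; but the infimum formulation is what the reference actually establishes and is what you should aim for if you write this out in full.
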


\subsection{Regular variations}\label{app:LDPRegVar}
We recall here some notions on regular variations, 
following the monograph~\cite{bingham89}.
\begin{definition}
Let $\A>0$.
A function~$f: (\A,\infty)\to\RR_+^*$  is said to be regularly varying with index~$l\in\RR$ (and we write $f\in\RV_l$)
if
$\lim\limits_{x\uparrow \infty}f(\lambda x)/f(x) = \lambda^l$, for any~$\lambda>0$.
When $l=0$, the function~$f$ is called slowly varying.
\end{definition}

\begin{lemma}[Bingham's Lemma, Theorem~4.12.10 in~\cite{bingham89}]\label{lem:Bingham}
Let~$f$ be a regularly varying function with index~$l>0$; 
then, as~$x$ tends to infinity, the asymptotic equivalence $-\log\int_x^\infty \E^{-f(y)}\D y \sim f(x)$ holds.
\end{lemma}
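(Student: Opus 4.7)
The plan is to prove the asymptotic equivalence by sandwich bounds, using the representation and growth properties afforded by regular variation. Since $f\in\RV_l$ with $l>0$, I can assume (after modifying $f$ on a compact set, which does not affect the asymptotics) that $f(x)\to\infty$ as $x\uparrow\infty$, and exploit Potter's bounds: for any $\delta\in(0,l)$, there exists $A>0$ such that for all $y\geq x\geq A$,
$$
(1-\delta)\left(\frac{y}{x}\right)^{l-\delta} \leq \frac{f(y)}{f(x)} \leq (1+\delta)\left(\frac{y}{x}\right)^{l+\delta}.
$$

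\textbf{Lower bound for $-\log\int_x^\infty \E^{-f(y)}\D y$ (i.e.\ upper bound on the integral).} Using the left-hand Potter inequality, substitute $u=(y/x)^{l-\delta}$ to obtain
$$
\int_x^\infty \E^{-f(y)}\D y \leq \int_x^\infty \exp\left\{-(1-\delta)f(x)\left(\frac{y}{x}\right)^{l-\delta}\right\}\D y
 = \frac{x}{l-\delta}\int_1^\infty \E^{-(1-\delta)f(x)u} u^{1/(l-\delta)-1}\D u.
$$
The right-hand side is an incomplete gamma integral; as $f(x)\uparrow\infty$ its leading asymptotic behaviour is of order $x\,f(x)^{-1}\exp\{-(1-\delta)f(x)\}$, so that
$$
-\log\int_x^\infty \E^{-f(y)}\D y \geq (1-\delta)f(x) - \log x + \log f(x) + \Oo(1).
$$
Since $f\in\RV_l$ with $l>0$, both $\log x$ and $\log f(x)$ are of order $\log x$ and hence $o(f(x))$, so dividing by $f(x)$ and letting $\delta\downarrow 0$ yields $\liminf_{x\uparrow\infty} f(x)^{-1}\bigl(-\log\int_x^\infty \E^{-f(y)}\D y\bigr)\geq 1$.

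\textbf{Upper bound for $-\log\int_x^\infty \E^{-f(y)}\D y$ (i.e.\ lower bound on the integral).} Here I localise. For any $\varepsilon>0$, bound the tail integral below by the integral over $[x,(1+\varepsilon)x]$ and use the right-hand Potter inequality to get $f(y)\leq (1+\delta)(1+\varepsilon)^{l+\delta}f(x)$ on that interval. Then
$$
\int_x^\infty \E^{-f(y)}\D y \geq \varepsilon x \exp\left\{-(1+\delta)(1+\varepsilon)^{l+\delta} f(x)\right\},
$$
so $-\log\int_x^\infty \E^{-f(y)}\D y \leq (1+\delta)(1+\varepsilon)^{l+\delta}f(x) - \log(\varepsilon x)$. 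Dividing by $f(x)$ and letting $\delta,\varepsilon\downarrow 0$ (again using $\log x = o(f(x))$) gives the matching $\limsup\leq 1$.

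The two bounds combine to give $-\log\int_x^\infty \E^{-f(y)}\D y\sim f(x)$. The main technical obstacle is a clean deployment of Potter's inequality and a careful book-keeping of the logarithmic correction terms to ensure they are absorbed into $o(f(x))$; this is where the hypothesis $l>0$ is essential, since it guarantees the polynomial growth of $f$ that dominates any $\log x$ or $\log f(x)$ correction.
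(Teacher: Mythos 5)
The paper does not actually prove this lemma; it is cited verbatim from Bingham, Goldie and Teugels~\cite{bingham89}, Theorem~4.12.10, so there is no internal proof to compare against. Your argument is correct and is the standard Laplace-type sandwich proof of such an Abelian statement: Potter's bounds trap $f(y)$ between $(1\mp\delta)f(x)(y/x)^{l\mp\delta}$ on $\{y\geq x\geq A\}$, the substitution $u=(y/x)^{l-\delta}$ reduces the upper estimate on the tail integral to an incomplete-Gamma asymptotic $\int_1^\infty \E^{-cu}u^{a-1}\,\D u\sim c^{-1}\E^{-c}$, and the lower estimate follows from localisation to $[x,(1+\varepsilon)x]$; the hypothesis $l>0$ is what allows the logarithmic correction terms $\log x$ and $\log f(x)$ to be absorbed into $o(f(x))$ after dividing by $f(x)$. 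One small remark: for $f\in\RV_l$ with $l>0$ the divergence $f(x)\uparrow\infty$ is automatic (it does not require modifying $f$ on a compact set), and the inequality in your final display of the upper-bound step should make clear that $-\log(\varepsilon x)$ is eventually negative and therefore can simply be dropped, so the limsup bound $\limsup_{x\uparrow\infty} f(x)^{-1}\bigl(-\log\int_x^\infty \E^{-f(y)}\,\D y\bigr)\leq (1+\delta)(1+\varepsilon)^{l+\delta}$ holds directly without appealing to $\log x = o(f(x))$ there; that growth comparison is only needed in the first half.
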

Let~$Y$ be a random variable supported on $[0,\infty)$ with a smooth density~$f$. 
The following lemma ensures that its moment generating function has unbounded support.
\begin{lemma}\label{lemma:suffi_condition}
If there exists~$l>1$ such that $|\log f| \in \RV_l$, then 
$\sup\{u\in\RR: \EE(\E^{uY})<\infty\} = +\infty$.
\end{lemma}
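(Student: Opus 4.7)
The plan is to prove the stronger statement that $\EE(\E^{uY}) < \infty$ for every $u > 0$, which immediately yields $\sup\{u\in\RR:\EE(\E^{uY})<\infty\} = +\infty$.

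First I would use the hypothesis $|\log f| \in \RV_l$ with $l>1$ to establish that $\log f(y) \to -\infty$ as $y\uparrow\infty$, so that $\log f(y) = -|\log f(y)|$ for all sufficiently large~$y$. The representation theorem for regularly varying functions writes $|\log f(y)| = y^l L(y)$ for some slowly varying~$L$; since $l>0$ this forces $|\log f(y)|\to\infty$, so $f(y)$ must tend either to zero or to infinity. The latter is incompatible with $f$ being an integrable probability density, so $f(y)\to 0$. This is the step in which the density assumption is genuinely used, and it is the only subtle point in the argument--one must rule out pathological oscillations in the sign of $\log f$.

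Next I would apply Potter's bounds for regularly varying functions (see~\cite[Theorem~1.5.6]{bingham89}): since $l>1$, I can fix $\delta\in(0,l-1)$ and obtain $A_\delta>0$ such that
$$
|\log f(y)|\geq \tfrac{1}{2}\, y^{l-\delta},\qquad\text{for all }y\geq A_\delta.
$$
Combining with the previous step gives $f(y)\leq \exp\bigl(-\tfrac{1}{2}y^{l-\delta}\bigr)$ on $[A_\delta,\infty)$, with the crucial feature $l-\delta>1$.

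Finally, for any fixed $u>0$, I would split the mgf at~$A_\delta$: the compact part $\int_0^{A_\delta}\E^{uy}f(y)\D y$ is finite since $f$ is locally integrable, and the tail is dominated by $\int_{A_\delta}^\infty \exp\bigl(uy-\tfrac{1}{2}y^{l-\delta}\bigr)\D y$, which converges because $l-\delta>1$ makes the exponent tend to $-\infty$ faster than linearly. Hence $\EE(\E^{uY})<\infty$ for every $u>0$, and the lemma follows. Beyond the first step, the argument is a routine Potter-bound estimate with no expected complications.
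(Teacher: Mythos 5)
Your proof is correct and establishes the same stronger claim as the paper---that $\EE(\E^{uY})<\infty$ for every $u>0$---but reaches the key lower bound on $|\log f|$ by a different route. The paper invokes Karamata's Representation Theorem, writes $|\log f(v)|=v^l g(v)$ with $g(v)=\mathrm{c}(v)\exp\left(\int_a^v\varepsilon(y)\,\D y/y\right)$, and bounds this by hand to show that $|\log f(v)|$ is eventually bounded below by a constant multiple of $v^{l-\varepsilon_0}$ with $l-\varepsilon_0>1$; multiplied into the exponent this dominates $uv$. You compress that entire computation into a single citation of Potter's bounds, obtaining $|\log f(y)|\geq c\,y^{l-\delta}$ with $l-\delta>1$ directly. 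The two routes are equivalent in strength---Potter's inequality is itself a corollary of the representation theorem---but yours is shorter and more transparent. You also isolate a step the paper uses only implicitly: that $\log f$ is eventually negative, so that $f(v)=\exp(-|\log f(v)|)$ for large $v$ (the paper silently writes $f(v)=\E^{-v^l g(v)}$ in its final integral). Just tighten your justification of that step: $|\log f(y)|\to\infty$ alone does not yield the dichotomy ``$f\to 0$ or $f\to\infty$,'' since $\log f$ could in principle oscillate in sign. The clean argument uses continuity of the smooth density $f$: if $\log f$ changed sign infinitely often it would vanish along a sequence $y_n\uparrow\infty$, giving $|\log f(y_n)|=0$ and contradicting $|\log f|\in\RV_l$; once the sign is eventually fixed, integrability of $f$ excludes $\log f\to+\infty$.
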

\begin{proof}
Karamata's Characterisation Theorem~\cite[Theorem~1.4.1]{bingham89} implies that 
 $|\log f(v)| = v^l g(v)$ for any~$v>0$, where the function~$g$ is slowly varying, and
Karamata's Representation Theorem~\cite[Theorem~1.3.1]{bingham89} provides the following expression:
$$
g(v) = \mathrm{c}(v)\exp\left(\int_a^v\varepsilon(y)\frac{\D y}{y}\right),
$$
where the functions~$\mathrm{c}$ and~$\varepsilon$ satisfy
$\lim_{v\uparrow \infty}\mathrm{c}(v) = c>0$,
$\lim_{v\uparrow \infty}\varepsilon(v) = 0$, and~$a$ is a fixed positive number. 
Then there exists~$v_1\geq a$ such that $\mathrm{c}(v)>c/2$ for all~$v\geq v_1$. 
Additionally, for any fixed~$\varepsilon_0$ small enough satisfying that $l>1+\varepsilon_0$, 
there exists $v_2\geq a$ such that
$\int_{v_2}^v\varepsilon(y)\D y/y > -\varepsilon_0\log (v/v_2)$, for any~$v\geq v_2$.
Denote $d:=\exp\left(\int_a^{v_2}\varepsilon(y)\frac{\D y}{y}\right)$, 
then for any $v>\max(v_1,v_2)$, and any $u>0$,
$$
u - \mathrm{c}(v)\exp\left(\int_a^v \varepsilon(y)\frac{\D y}{y}\right)v^{l-1}
<u - \frac{cd}{2}\exp\left(\int_{v_2}^{v}\varepsilon(y)\frac{\D y}{y}\right)v^{l-1}
<u - \frac{cd}{2}v_2^{\varepsilon_0} v^{l-1-\varepsilon_0}.
$$
Thus there exists~$v_3$ large enough so that
$u - \frac{1}{2}cdv_2^{\varepsilon_0} v^{l-1-\varepsilon_0}<-1$, for~$v\geq v_3$. 
With $v^*:=\max(v_1,v_2,v_3)$, 
$$
\EE\left(\E^{uY}\right)
=\int_0^{v^*}\E^{uv}f(v)\D v + \int_{v^*}^\infty \E^{v\left(u-v^{l-1} g(v)\right)}\D v
<\int_0^{v^*}\E^{uv}f(v)\D v + \int_{v^*}^\infty \E^{-v}\D v<\infty.
$$
\end{proof}

\section{Preliminary computations}
In view of~\eqref{eq:TowerProperty},
short-time asymptotic expansions of the functions~$\Crm$ and~$\Drm$ are necessary
in order to derive the pointwise limit of the rescaled cgf of~$(X_t)_{t\geq 0}$.
In this appendix we provide these expansions.
\subsection{Components of the mgf}\label{sec:components-of-mgf}
We start by investigating the short-time behaviour of the function~$\Drm(t,u/h(t))$.
For any $\beta \in \RR$, 
define the function~$\Drm_0^{\beta}:(u_-, u_+)\to\RR$ by
\begin{align}\label{eq:D0}
\Drm_0^{\beta}(u)
 & := 
\frac{1-\E^{-\I d_0 u}}{\xi^2 (1-g_0\E^{-\I d_0 u})}\left[(\rho\xi+\I d_0)\beta u+\kappa-d_1\right]
+ \frac{\I g_1(\rho\xi+\I d_0)}{\xi^2}
\frac{1-\E^{-\I d_0 u}}{(1-g_0\E^{-\I d_0 u})^2}\E^{-\I d_0 u}\\
&- \frac{(\rho\xi+\I d_0)u}{\xi^2}\frac{d_1-\I d_0 u\beta}{(1-g_0\E^{-\I d_0 u})^2}
\left(1-g_0\right)\E^{-\I d_0 u}\nonumber,
\end{align}
where the functions~$d_0, d_1, g_0, g_1$ are defined below~\eqref{eq:heston_asymp} above.

\begin{remark}\label{rem:RealPart}
The function~$\Drm_0^{\beta}$ is well defined:
to see this, we only need to check that the $\beta$ terms sum up to a real number, 
and the rest follows from~\cite[Remark~3.2]{forde12}. 
The first term in~\eqref{eq:D0} reads
$$
\frac{1-\E^{-\I d_0 u}}{\xi^2 (1-g_0\E^{-\I d_0 u})}(\rho\xi+\I d_0)\beta u
= -\beta\Lambda(u),
$$
which is a real number, and the sum of the remaining terms with~$\beta$ reads 
(taking out the prefactor $\I d_0 u \beta$)
$$
\frac{(\rho\xi + \I d_0)u\E^{-\I d_0 u}(1-g_0)}{\xi^2 (1-g_0 \exp(-\I d_0 u))^2}
= \frac{(g_0-1)\E^{-\I d_0 u}\Lambda(u)}{(1-g_0\E^{-\I d_0 u})(1-\E^{-\I d_0 u})}
= \frac{\I \rrho\,\sgn(u)\Lambda(u)}{\rho \cos(d_0 u)+\rrho\,\sgn(u) \sin(d_0 u)-\rho},
$$
which is purely imaginary, so that the whole term is a real number.
\end{remark}
The following lemma makes the effective domain of~$\Drm_0^{\beta}$ precise, 
and shows that it arises as the second order of the short-time expansion 
of a rescaled version of the function~$\Drm$ in~\eqref{eq:HestonMGF}.

\begin{lemma}\label{lemma:Dasymp}
Let~$\beta\in\RR$.
As~$t$ tends to zero, the map~$t\mapsto \Drm\left(t, u/h(t)\right)$ behaves as
\begin{equation*}
\Drm\left(t,\frac{u}{h(t)}\right) 
= 
\left\{
\begin{array}{lll}
0, & \text{if } u = 0, & \text{for any function }h,\\
\text{undefined}, &  u\neq 0, & \text{if } h(t)=o(t),\\
\displaystyle t^{-1}\Lambda(u) + \Drm_0^{\beta}(u) + o(1), & u\in(u_-, u_+), 
& \text{if }h(t) = t+\beta t^2+ o(t^2),\\
\displaystyle \frac{u^2 t}{2h^{2}(t)} \left[1 -\frac{h(t)}{u} + \frac{\rho\xi ut}{2h(t)}
 + \Oo\left(t + h^2(t) + \frac{t^2}{h^2(t)}\right) \right],  &u\in \RR, & \text{if } t=o(h(t)).
\end{array}
\right.
\end{equation*}
\end{lemma}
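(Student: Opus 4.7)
The plan is to treat the four cases separately, all by elementary expansion of the building blocks $d$ and $g$ in~\eqref{eq:heston_asymp} together with Taylor analysis of the defining formula~\eqref{eq:HestonMGF}. Case 1 is immediate: $d(0) = \kappa$, $g(0) = 0$, and $\kappa - 0 - d(0) = 0$, so $\Drm(t, 0) = 0$ identically. Case 2 is also essentially immediate: when $h(t) = o(t)$, the scaled argument $v := u/h(t)$ satisfies $vt = ut/h(t) \to \infty$, so $v$ exceeds the critical moment $v_+^H(t) \sim u_+/t$ of the Heston mgf (this is read off from the scaling already established for $h(t)=t$); hence $\M(t, u/h(t))$ diverges for small enough $t$, and $\Drm(t, u/h(t))$ is not finite as the coefficient of $v_0$ in a cgf.

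The heart of the proof is Case 3. Writing $u/h(t) = u/t - \beta u + \Oo(t)$ and substituting into~\eqref{eq:heston_asymp} yields
\begin{equation*}
d(u/h(t)) = \tfrac{\I d_0 u}{t} + (d_1 - \I d_0 u \beta) + \Oo(t), \qquad g(u/h(t)) = g_0 - \tfrac{\I g_1 t}{u} + \Oo(t^2),
\end{equation*}
whence $\exp(-d(u/h(t))t) = \E^{-\I d_0 u}\bigl(1 + t(\I d_0 u \beta - d_1) + \Oo(t^2)\bigr)$. Applying the elementary quotient expansion $(A_0 + A_1 t)/(B_0 + B_1 t) = A_0/B_0 + t(A_1 B_0 - A_0 B_1)/B_0^2 + \Oo(t^2)$ to $(1 - \E^{-dt})/(1 - g \E^{-dt})$, and multiplying by the expansion $(\kappa - \rho\xi\, u/h(t) - d(u/h(t)))/\xi^2 = -u(\rho\xi + \I d_0)/(\xi^2 t) + [\kappa - d_1 + u\beta(\rho\xi + \I d_0)]/\xi^2 + \Oo(t)$, one collects the $\Oo(1/t)$ term into $\Lambda(u)/t$ via the trigonometric factorization $1 - \E^{-\I d_0 u} = 2\I \sin(d_0 u/2)\, \E^{-\I d_0 u/2}$ and an analogous one for $1 - g_0 \E^{-\I d_0 u}$, while the constant term matches~\eqref{eq:D0} by direct inspection; Remark~\ref{rem:RealPart} then certifies that the resulting $\Drm_0^\beta(u)$ is real-valued.

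Case 4 relies on Taylor expansion of $\Drm(t, v)$ in $t$ around zero, which is legitimate precisely when $vt = ut/h(t) \to 0$, i.e.\ $t = o(h(t))$. Using the algebraic identities $(\kappa - \rho\xi v)^2 - d(v)^2 = -\xi^2 v(1-v)$ and $1 - g(v) = 2 d(v)/(\kappa - \rho\xi v + d(v))$ one computes $\partial_t\Drm(t, v)|_{t=0} = v(v-1)/2$ and $\tfrac{1}{2}\partial_t^2\Drm(t, v)|_{t=0} = v(1-v)(\kappa - \rho\xi v)/4$, whose leading large-$v$ behaviours are $v^2/2 - v/2$ and $\rho\xi v^3/4 + \Oo(v^2)$; substituting $v = u/h(t)$ reproduces the three terms of the stated expansion, and the remainder $\Oo(t + h^2 + t^2/h^2)$ follows by quantifying the $\Oo(t^3)$ Taylor truncation (giving a relative $\Oo(t^2/h^2)$), the large-$v$ corrections $\Oo(1/v)$ in $d, g$ (giving a relative $\Oo(h^2)$), and the subleading $v^2$ terms in the $t^2$-coefficient (giving a relative $\Oo(t)$). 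The principal obstacle is the bookkeeping in Case 3, where the leading $\Lambda(u)/t$ and the $\Oo(1)$ piece $\Drm_0^\beta(u)$ emerge only after intricate cancellations of imaginary contributions, and the reality check of Remark~\ref{rem:RealPart} is essential to confirm that one recovers exactly~\eqref{eq:D0}.
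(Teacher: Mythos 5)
Your proposal is correct in all four cases, but it departs from the paper's proof in two of them, and the comparison is instructive.

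For Case~2 (\(h(t)=o(t)\)), the paper argues directly from the formula: since \(d_0\) is real and \(d_1\) purely imaginary, the factor \(\exp\!\bigl(-\I u d_0\, t/h(t)-d_1 t\bigr)\) runs around the unit circle without converging as \(t/h(t)\to\infty\), so no limiting expansion exists. You instead invoke moment explosion: since \(u/h(t)\gg u_+/t\), the argument exceeds the critical moment of the Heston mgf for all small \(t\). The two viewpoints are really the same phenomenon (the oscillating denominator \(1-g\,\E^{-dt}\) sweeps through its poles), and your framing is legitimate, though "\(\Drm\) is not finite as the coefficient of \(v_0\) in a cgf'' is a slightly imprecise way of stating that the affine representation breaks down. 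Case~3 follows the paper's route (substitute the expansions~\eqref{eq:heston_asymp}, multiply out, and identify \(\Lambda\) and \(\Drm_0^\beta\), with Remark~\ref{rem:RealPart} guaranteeing reality); your quotient-expansion phrasing is just a reorganisation of the same algebra.

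For Case~4 you take a genuinely different and in fact cleaner route. The paper substitutes the large-argument expansions of \(d\) and \(g\) into the closed formula for \(\Drm\) and carries out a long multiplicative expansion. You instead Taylor-expand \(\Drm(t,v)\) in \(t\) at fixed \(v\), compute the first two coefficients \emph{exactly} via the identities \((\kappa-\rho\xi v)^2-d(v)^2=-\xi^2 v(1-v)\) and \(1-g(v)=2d(v)/(\kappa-\rho\xi v+d(v))\), obtaining the polynomials \(c_1(v)=v(v-1)/2\) and \(c_2(v)=v(1-v)(\kappa-\rho\xi v)/4\) (both of which I have verified), and then substitute \(v=u/h(t)\). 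This recovers the three displayed terms and a remainder \(\Oo(t+t^2/h^2(t))\), which is \emph{sharper} than the paper's \(\Oo(t + h^2(t) + t^2/h^2(t))\). One small internal inconsistency in your write-up: you attribute a relative \(\Oo(h^2)\) error to "large-\(v\) corrections \(\Oo(1/v)\) in \(d,g\)", but in your method the coefficients \(c_1,c_2\) are exact polynomials and no such approximation of \(d,g\) enters; the \(\Oo(h^2)\) in the paper is an artefact of its less precise intermediate expansions. You should also justify the tail control \(\sum_{k\ge 3}c_k(u/h(t))\,t^k = \Oo(u^4 t^3/h^4)\), which requires observing that the radius of analyticity of \(\Drm(\cdot,v)\) scales like \(1/|d(v)|\sim 1/v\), so the condition \(t=o(h(t))\) is exactly what makes the truncation legitimate; you gesture at this but do not spell it out.
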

\begin{remark}\label{rmk: PowerScale}
\begin{enumerate}[(i)]
\item If~$h(t) = t + o(t)$ without further information on higher-order terms (third case in the lemma), 
then only the leading order is available:
$\Drm\left(t,u/h(t)\right) = t^{-1}\Lambda(u)(1+o(1))$.
\item As in Remark~\ref{rem:h}(ii), one can consider $h(t) = c t + \beta t^2 + o(t^2)$,
but by dilation, setting $c=1$ is inconsequent.
\item When $h(t) = t^{1/2}$,
$\Drm\left(t, \frac{u}{h(t)}\right) 
 = \frac{1}{2}u^2 + \frac{1}{4}\left(\rho\xi u^2 - 2\right)u t^{1/2} + \Oo(t)$,
 which is consistent with~\cite[Lemma 6.2]{jacquier13}.
\end{enumerate}
\end{remark}
The function~$\Crm_0:(u_-, u_+)\to\RR$ defined as
\begin{equation}\label{eq:C0}
\Crm_0(u)
:= -\frac{\kappa\theta}{\xi^{2}}\left[(\rho\xi+\I d_0)u+2\log\left(\frac{1-g_0\exp(-\I d_0 u)}{1-g_0}\right)\right]
\end{equation}
is clearly real valued~\cite[Remark 6.2]{forde12}, and determines the asymptotic behaviour of the function~$\Crm$
as follows.
\begin{lemma}\label{lemma:Casymp}
The map~$t\mapsto \Crm\left(t, u/h(t)\right)$ has the following asymptotic behaviour as~$t$ tends to zero:
\begin{equation*}
\Crm\left(t,\frac{u}{h(t)}\right) 
= 
\left\{
\begin{array}{lll}
\text{undefined}, & u\neq 0, & h(t)= o(t),\\
\displaystyle 
\Crm_0(u) + \Oo(t), & u\in(u_-, u_+),
& h(t) = t+\Oo(t^2),\\
\displaystyle \Oo\left(th(t) + h^3(t)\right)
 + \frac{\kappa\theta u^2}{4}\left(\frac{t}{h(t)}\right)^{2}
 \left[1 + \Oo\left(h(t) + \frac{t}{h(t)}\right)\right],  &  u\in \RR,
 & t=o(h(t)).
\end{array}
\right.
\end{equation*}
\end{lemma}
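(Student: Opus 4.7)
The plan is to mirror the proof of Lemma~\ref{lemma:Dasymp}, with the preliminary step of recasting~\eqref{eq:HestonMGF} into a symmetric form. Using $1 - g(v) = 2d(v)/(\kappa - \rho\xi v + d(v))$, one checks the identity
\begin{equation*}
\frac{1 - g(v)\E^{-d(v) t}}{1 - g(v)}
 = \E^{-d(v) t/2}\left[\frac{\kappa - \rho\xi v}{d(v)}\sinh\left(\frac{d(v) t}{2}\right) + \cosh\left(\frac{d(v) t}{2}\right)\right],
\end{equation*}
which, substituted into~\eqref{eq:HestonMGF}, absorbs the stray $d(v)t$ term and yields
\begin{equation*}
\Crm(t,v) = \frac{\kappa\theta}{\xi^2}\left[(\kappa - \rho\xi v) t
 - 2 \log\left(\frac{\kappa - \rho\xi v}{d(v)}\sinh\left(\frac{d(v) t}{2}\right) + \cosh\left(\frac{d(v) t}{2}\right)\right)\right].
\end{equation*}
This reformulation is the key for the case $t = o(h(t))$, where $d(v)t \to 0$ and the hyperbolic functions can be Taylor-expanded.

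In the regime $h(t) = t + \Oo(t^2)$, I would substitute~\eqref{eq:heston_asymp} directly into~\eqref{eq:HestonMGF} at $v = u/h(t)$: each ingredient $d(v)t = \I d_0 u + \Oo(t)$, $g(v) = g_0 + \Oo(t)$ and $\rho\xi v \cdot t = \rho\xi u + \Oo(t)$ has a finite limit, and rearranging produces the function~$\Crm_0$ of~\eqref{eq:C0} with an $\Oo(t)$ remainder; reality of the limit follows from the pairing of imaginary contributions sketched in Remark~\ref{rem:RealPart}.

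In the regime $t = o(h(t))$ one has $d(v) t = \Oo(t/h(t)) \to 0$, and I would Taylor-expand the hyperbolic functions inside the symmetric form, using the exact identity $d(v)^2 = (\kappa - \rho\xi v)^2 + \xi^2 v(1-v)$, to get
\begin{equation*}
\frac{\kappa - \rho\xi v}{d(v)}\sinh\left(\frac{d(v) t}{2}\right) + \cosh\left(\frac{d(v) t}{2}\right)
 = 1 + \frac{(\kappa - \rho\xi v) t}{2} + \frac{d(v)^2 t^2}{8} + \Oo\bigl((d(v) t)^3\bigr).
\end{equation*}
Expanding $\log(1+x)$ to second order, the term $-\tfrac{1}{2}[(\kappa - \rho\xi v) t/2]^2 = -(\kappa - \rho\xi v)^2 t^2/8$ cancels the $(\kappa - \rho\xi v)^2 t^2/8$ piece of $d(v)^2 t^2/8$, leaving only $\xi^2 v(1-v) t^2/8$; the linear $(\kappa - \rho\xi v) t$ inside the logarithm cancels its counterpart outside, producing
\begin{equation*}
\Crm\left(t, \frac{u}{h(t)}\right)
 = \frac{\kappa\theta u^2}{4}\left(\frac{t}{h(t)}\right)^2\left[1 - \frac{h(t)}{u} + \Oo\left(h(t) + \frac{t}{h(t)}\right)\right],
\end{equation*}
the announced expansion. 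The additive remainders $\Oo(t h(t))$ and $\Oo(h(t)^3)$ track higher-order corrections from the sinh/cosh expansion and from the subleading terms of $d(v)$ and $g(v)$ at $v = u/h(t)$.

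The first case is immediate: the effective domain $[u_-^t, u_+^t]$ of the standard Heston mgf satisfies $u_\pm^t \sim u_\pm/t$ as $t \downarrow 0$ (with $u_\pm$ as in~\eqref{eq:lambda_H}), so whenever $h(t) = o(t)$ and $u \ne 0$ the argument $u/h(t)$ exceeds $u_\pm/t$ in absolute value for all sufficiently small $t$, whence $\Crm(t, u/h(t))$ is undefined. The main difficulty lies in the chain of exact cancellations in the third case: before the quadratic $\xi^2 v(1-v) t^2$ contribution emerges to deliver the correct $(t/h(t))^2$ rate, all the linear-in-$t$ terms must be shown to cancel identically, which the symmetric reformulation renders transparent but is hidden in the original~\eqref{eq:HestonMGF}.
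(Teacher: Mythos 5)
Your proposal is correct and takes a genuinely different route from the paper. The paper expands $d(u/h(t))$ and $g(u/h(t))$ to third order in~$h(t)$ (introducing the coefficients~$d_2$,~$g_2$) and then tracks the partial cancellations term by term in the logarithm; the cancellations that leave only the $\xi^2 v(1-v)$ contribution are verified by direct bookkeeping. Your reformulation $\Crm(t,v) = \frac{\kappa\theta}{\xi^2}\bigl[(\kappa - \rho\xi v)t - 2\log(\tfrac{\kappa-\rho\xi v}{d(v)}\sinh(\tfrac{d(v)t}{2}) + \cosh(\tfrac{d(v)t}{2}))\bigr]$ is an exact algebraic identity (your check via $1 - g = 2d/(\kappa - \rho\xi v + d)$ is right), so the only thing you need to expand is the logarithm as a power series in $d(v)t/2 \to 0$, and the decisive cancellation $-\tfrac{1}{2}(\tfrac{at}{2})^2 + \tfrac{d^2 t^2}{8} = \tfrac{(d^2-a^2)t^2}{8} = \tfrac{\xi^2 v(1-v) t^2}{8}$ comes out automatically from the exact identity $d^2 - a^2 = \xi^2 v(1-v)$, without ever invoking the $\Oo(h(t))$ expansion of $d$ or $g$. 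This buys two things: the proof is shorter and structurally transparent, and the error you actually obtain, a relative $\Oo(h(t) + t/h(t))$ with no additional additive $\Oo(t h(t) + h^3(t))$ term (the next correction being relatively $\Oo((t/h(t))^2)$), is tighter than the lemma's stated bound, so it a fortiori implies it. Two minor remarks: first, your closing sentence attributing the additive $\Oo(t h(t) + h^3(t))$ remainders to "subleading terms of $d(v)$ and $g(v)$" is inaccurate for your own proof, since the symmetric form contains $d(v)$ exactly and no $g(v)$ at all; those remainders are an artifact of the paper's decomposition, not of yours, and your bound simply doesn't need them. Second, your justification of the first case via the time-$t$ critical moment $u_\pm^t \sim u_\pm/t$ is a cleaner phrasing than the paper's (which invokes oscillation of the exponential factor), but it relies on the explosion-time analysis of Andersen and Piterbarg which is not cited at that point; the conclusion is nevertheless the same.
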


\begin{proof}[Proof of Lemma~\ref{lemma:Dasymp}]
Obviously $\Drm\left(t,0\right)\equiv 0$, so we assume from now on that $u \ne 0$.
From ~\eqref{eq:heston_asymp}, we have
\begin{equation}\label{eq:dgasymp}
d\left(\frac{u}{h(t)}\right) = \I \frac{d_0 u}{h(t)} + d_1 + \Oo\left(h(t)\right)
\qquad\qquad\text{and}\qquad\qquad
g\left(\frac{u}{h(t)}\right) = g_0 - \I \frac{g_1}{u}h(t) + \Oo\left(h^2(t)\right).
\end{equation}
Plugging these back into the expression of the function $\Drm$ in~\eqref{eq:HestonMGF}, we obtain
\begin{equation}\label{eq:Dasymp}
\Drm\left(t,\frac{u}{h(t)}\right)
 = \left[\frac{\kappa-d_1-\frac{\rho\xi u + \I d_0 u}{h(t)} + \Oo(h(t))}{\xi^2}\right]
\left[\frac{1-\exp\left\{-\frac{\I u d_0 t}{h(t)} - d_1 t+ \Oo(th(t))\right\}}
{1-\left[g_0 - \I \frac{g_1}{u}h(t) + \Oo(h^2(t))\right]
\E^{-\frac{\I u d_0 t}{h(t)} - d_1 t + \Oo(th(t))}}\right].
\end{equation}
If $h(t) = o(t)$, 
$d_0$ is a real number and $d_1$ is purely imaginary, then as $t/h(t)$ goes to infinity the term 
$\exp\left(-\I u d_0 t/h(t)-d_1 t\right)$ oscillates on the unit circle in the complex plane, 
thus no asymptotic can be derived.

Assume now that $h(t) = t + \beta t^2 + o(t^2)$. Then 
$th^{-1}(t) = 1-\beta t + o(t)$, and Equation~\eqref{eq:Dasymp} yields
\begin{align*}
\Drm&\left(t,\frac{u}{h(t)}\right) 
 = \frac{1}{\xi^2}\left[-\frac{(\rho\xi+\I d_0) u}{h(t)} + (\kappa-d_1) + \Oo(h(t))\right]
\left[\frac{1-\exp\left(-\I  d_0 ut/h(t)- d_1 t+ \Oo(th(t))\right)}{1-\left(g_0 -\I t g_1/u + \Oo(t^2)\right)
\exp\left(-\I  d_0 u t/h(t) - d_1 t + \Oo(th(t))\right)}\right]\\
&= \frac{1}{\xi^2}\left(-\frac{(\rho\xi+\I d_0) u}{t}\left(1-\beta t + o(t)\right) + \left(\kappa-d_1\right)+\Oo(t)\right)
\left(1-\E^{-\I d_{0}u}\left(1-d_1 t+\Oo(t^2)\right)\left(1+\I \beta d_0 u t + o(t)\right)\right)\nonumber\\
&\frac{1}{1-g_0\E^{-\I d_{0}u}}
\left(1+\frac{(-\I g_1/u + g_0(\I d_0 u \beta - d_1))\E^{-\I d_{0}u}}{1-g_0 \E^{-\I d_{0}u}}t
+ o(t)\right)\nonumber\\
&= \frac{\E^{-\I d_{0}u}-1}{\xi^2 t}\frac{(\rho\xi+\I d_0)u}{1-g_0\E^{-\I d_{0}u}}
+\frac{1-\E^{-\I d_{0}u}}{\xi^2 (1-g_0\E^{-\I d_{0}u})}\left((\rho\xi+\I d_0)\beta u+\kappa-d_1\right)
- \frac{(\rho\xi+\I d_0)u}{\xi^2}\frac{(d_1-\I\beta d_{0}u)\E^{-\I d_{0}u}}{1-g_0\E^{-\I d_{0}u}} \nonumber\\
&+ \frac{(\rho\xi+\I d_0)(\I g_1-g_0 u(\I\beta d_{0}u - d_1))(1-\E^{-\I d_{0}u})}
{\xi^2 (1-g_0\E^{-\I d_{0}u})^2}\E^{-\I d_{0}u} + o(1)\nonumber\\
&= \frac{\Lambda(u)}{t} + \Drm_0^{\beta}(u) + o(1).\nonumber
\end{align*}
The form of the effective domain is straightforward from these expressions.

If $h(t) = t + o(t)$ without further information on higher-order terms, 
then $t/h(t) = 1 + o(1)$. 
Following the same procedure as above then only the leading order can be derived, i.e.
$\Drm\left(t, u/h(t)\right) = t^{-1}\Lambda(u)[1+o(1)].$

Finally in the case $t = o(h(t))$,
\begin{align*}
 \left[1-\left(g_0 -\frac{\I g_1}{u}h(t) + \Oo(h^2(t))\right)
\E^{-\frac{\I d_0 ut}{h(t)}-d_1 t +\Oo(th(t))}\right]^{-1}
  & = \frac{1 - \frac{\I g_1 h(t)}{u(1-g_0)} - \frac{\I d_0g_0ut}{(1-g_0)h(t)}
 + \Oo\left(t + h^2(t) + \frac{t^2}{h^2(t)}\right)}{1-g_0},\\
 1-\exp\left(\frac{-\I d_0 ut}{h(t)}-d_1 t+\Oo(th(t))\right)
 &  = \frac{\I d_0 ut}{h(t)} + d_1 t + \frac{d_0^2u^2t^2}{2h^2(t)} + \Oo\left(\frac{t^2}{h(t)}\right) + \Oo\left(th(t)\right).
\end{align*}
Plugging these results into~\eqref{eq:Dasymp} yields
\begin{align*}
\Drm\left(t,\frac{u}{h(t)}\right)
& = \frac{1}{\xi^2(1-g_0)}
\left[-\frac{(\rho\xi+\I d_0) u}{h(t)} + (\kappa-d_1) + \Oo\left(h(t)\right)\right]
\left[\frac{\I d_0 ut}{h(t)} + d_1 t + \frac{d_0^2u^2t^2}{2h^2(t)}
 + \Oo\left(\frac{t^2}{h(t)} + th(t)\right)\right]\\
& \left[1 - \frac{\I g_1 h(t)}{u(1-g_0)} - \frac{\I d_0g_0ut}{(1-g_0)h(t)}
 + \Oo\left(t + h^2(t) + \frac{t^2}{h^2(t)}\right)\right]\\
& = \frac{1}{\xi^2(1-g_0)}\left[\frac{(d_0-\I \rho\xi)d_0u^2t}{h^2(t)}
 + \frac{\left[\I d_0 u(\kappa-d_1)-d_1u(\rho\xi+\I d_0)\right]t}{h(t)} -\frac{d_0^2u^3(\rho\xi+\I d_0)t^2}{2h^3(t)}
+\Oo\left(t + \frac{t^2}{h^2(t)}\right)\right]\\
& \left[1 - \frac{\I g_1 h(t)}{u(1-g_0)} - \frac{\I d_0g_0ut}{(1-g_0)h(t)}
 + \Oo\left(t + h^2(t) + \frac{t^2}{h^2(t)}\right)\right]\\
& = \frac{u^2t}{2h^2(t)}\left[1 -\frac{h(t)}{u} + \frac{\rho\xi ut}{2h(t)}
+\Oo\left(\frac{t^2}{h^2(t)} + h^2(t) + t\right)\right],
\end{align*}
where we used the identity
$$
\frac{(d_0-\I \rho\xi)d_0}{ (1-g_0)\xi^{2}} 
= \frac{(\xi\rrho \sgn(u)-\I \rho\xi)\xi\rrho \sgn(u)}{\xi^2\left(1-\frac{\I \rho-\rrho \sgn(u)}{\I \rho+\rrho \sgn(u)}\right)}
= \frac{1}{2}.
$$
\end{proof}

\begin{proof}[Proof of Lemma~\ref{lemma:Casymp}]
Assume that $u\ne 0$. 
Expand $d(u/h(t))$ and $g(u/h(t))$ to the third order,
\begin{equation}\label{eq:dg2order}
\begin{array}{rl}
\displaystyle d\left(\frac{u}{h(t)}\right) & = \displaystyle \I \frac{d_0 u}{h(t)}+ d_1 - \I d_2 h(t) + \Oo\left(h^2(t)\right),\\
\displaystyle g\left(\frac{u}{h(t)}\right) & = \displaystyle g_0 - \I \frac{g_1}{u}h(t) - \I \frac{g_2}{u^2}h^2(t) + \Oo\left(h^3(t)\right),
\end{array}
\end{equation}
where 
$d_2 := (\kappa^2 - d_1^2) / (2d_0 u)$ and 
$g_2 := [(\kappa^2-d_1^2)\rho\xi/d_0+(\kappa - d_1)(\rho\xi-\I d_0)g_1](\rho\xi-\I d_0)^{-2}$.
Combining these expansions with Equation~\eqref{eq:dgasymp} implies
\begin{align}\label{eq:Casymp1}
\Crm\left(t,\frac{u}{h(t)}\right) 
= &-\frac{2\kappa\theta}{\xi^2}
\log\left(\frac{1 - \left(g_0 - \I g_1h(t)/u-\I g_2 h^2(t)/u^2 + \Oo(h^3(t))\right)
\E^{-\I d_0 ut/h(t) - d_1 t + \I d_2 th(t)+ \Oo\left(th^2(t)\right)}}{1-g_0 + \I g_1h(t)/u + \I g_2 h^2(t)/u^2 +\Oo\left(h^3(t)\right)}\right)\nonumber\\
& + \frac{\kappa\theta}{\xi^2}\left[\left(\kappa-d_1\right)t - \frac{(\rho\xi+\I d_0)ut}{h(t)} + \Oo\left(th(t)\right)\right].
\end{align}

If $h(t) = o(t)$, no short-time asymptotics can be derived since $t/h(t)$ tends to infinity.
For the proof of the case where $h(t) = t+\Oo(t^2)$ we refer to~\cite[Lemma 6.1]{forde12}. 
Assume now that $t = o(h(t))$,
then the following asymptotic expansions hold:
\begin{align*}
\left(1-g_0+\frac{\I h(t)g_1}{u}+\frac{\I h^2(t)g_2}{u^2}+\Oo\left(h^3(t)\right)\right)^{-1}
&= \frac{1}{1-g_0}
\left(1-\frac{\I g_1 h(t)}{u(1-g_0)}-\frac{g_3 h^2(t)}{u^2(1-g_0)^2}+\Oo\left(h^3(t)\right)\right),\\
\exp\left(-\frac{\I d_0 u t}{h(t)}-d_1 t + \I d_2 th(t) + \Oo\left(th^2(t)\right)\right)
&= 1-\frac{\I d_0 ut}{h(t)}-\frac{1}{2}\left(\frac{d_0 ut}{h(t)}\right)^2 -d_1 t +\I d_2 th(t) 
+\Oo\left(th^2(t) + \frac{t^2}{h(t)}\right),
\end{align*}
where $g_3 := g_1^2+\I g_2(1-g_0)$.
Consequently,
\begin{align*}
&\frac{1-\left(g_0-\I g_1h(t)/u-\I g_2h^2(t)/u^2+\Oo(h^3(t))\right)
\E^{-\I d_0 ut/h(t)-d_1 t+\Oo\left(th(t)\right)}}{1-g_0 + \I g_1h(t)/u + \I g_2h^2(t)/u^2 + \Oo(h^3(t))}\\
&= \left\{1-\left[g_0-\frac{\I g_1}{u}h(t) - \frac{\I g_2}{u^2}h^2(t) + \Oo(h^3(t))\right]\right\}
\left[1-\frac{\I d_0 ut}{h(t)}-\frac{1}{2}\left(\frac{d_0 ut}{h(t)}\right)^2 -d_1 t + \I d_2 th(t) + \Oo\left(th^2(t)+\frac{t^2}{h(t)}\right)\right]\\
&\frac{1}{1-g_0}\left(1-\frac{\I g_1 h(t)}{u(1-g_0)}- \frac{g_3 h^2(t)}{u^2(1-g_0)^2} + \Oo(h^3(t))\right)\\
&= \left(1 + \frac{\I g_0 d_0 ut}{(1-g_0)h(t)} + \frac{\I g_1 h(t)}{u(1-g_0)} + \frac{d_1 g_0+d_0 g_1}{1-g_0}t + \frac{\I g_2 h^2(t)}{u^2(1-g_0)} + \frac{g_0 d_0^2 u^2 t^2}{2(1-g_0)h^2(t)}
+ \Oo\left(th(t)+h^3(t)+\frac{t^2}{h(t)}\right)\right)\\
&\left(1-\frac{\I g_1 h(t)}{u(1-g_0)}-\frac{g_3 h^2(t)}{u^2(1-g_0)^2}+\Oo\left(h^3(t)\right)\right)\\
&= 1+ \frac{\I g_0 d_0 u}{1-g_0}\frac{t}{h(t)}+\frac{u^2 d_0^2 g_0 t^2}{2(1-g_0)h^2(t)} + \left(\frac{d_1 g_0+d_0 g_1}{1-g_0}+\frac{d_0 g_0 g_1}{(1-g_0)^2}\right)t
+ \Oo \left(th(t)+h^3 (t)+\frac{t^2}{h(t)}\right),
\end{align*}
and therefore
\begin{align*}
&\log\left(\frac{1-\left[g_0-\I h(t)g_1/u-\I h^2(t)g_2/u^2+\Oo(h^3(t))\right]
\E^{-\I d_0 ut/h(t)-d_1 t+\Oo(th(t))}}{1-g_0 + \I h(t)g_1/u + \I h^2(t)g_2/u^2 +\Oo(h^3(t))}\right)\\
&= \frac{\I g_0 d_0 u}{1-g_0}\frac{t}{h(t)}
+ \frac{u^2 d_0^2 g_0}{2(1-g_0)^2}\frac{t^2}{h^2(t)}
+ \left(\frac{d_1 g_0+d_0 g_1}{1-g_0}+\frac{d_0 g_0 g_1}{(1-g_0)^2}\right)t 
+ \Oo\left(th(t) + h^3(t) + \frac{t^2}{h(t)} + \frac{t^3}{h^3(t)}\right)\nonumber.
\end{align*}
Plugging this into~\eqref{eq:Casymp1}, the result follows 
by noticing that the coefficients of~$\frac{t}{h(t)}$ and~$t$ are both zero.
\end{proof}



\section{Proofs of the main results}
\subsection{Proof of Proposition~\ref{prop:limit_of_cgf}}\label{subsec:proof_thm_limit_of_cgf}
In~\cite[Section~6]{jacquier13} the authors proved that $\Dd^* = \RR$ whenever~$\gamma<1$, 
and $\Dd^* = (u_-, u_+)$ if $\gamma=1$. 
Throughout the proof we keep the notation~$h$, emphasising that the statement still holds for function~$h$ with a general form, not only polynomials.

\textbf{Case $\gamma\in (0,1/2)$.}\label{sec:alphaLess12}
We need to analyse the behaviour of $\log \M(z)$ as~$z$ approaches zero. 
Since~$\m$ is strictly positive, by continuity of the mgf around the origin, 
$\M_{\Vv}\left(u^2 t(2h^2(t))^{-1}(1+\Oo(h(t)))\right)$ 
converges to $\M_{\Vv}(0) = 1$ as~$t$ tends to zero for any $u$ in $\RR$, which implies that $\Dd_{\Vv}^*= \RR$.
For small~$t$, a Taylor expansion indicates that
\begin{align*}
\log\M_{\Vv} \left(\Drm\left(t,\frac{u}{h(t)}\right)\right)
&= \log\EE\left(\exp\left\{\frac{u^2 t \Vv}{2 h^2(t)}
\left[1 -\frac{h(t)}{u}+\Oo\left(\frac{t}{h(t)}\right) + \Oo\left(h^2(t)\right)\right] \right\}\right)\\
&= \log\left\{1 + \frac{u^2\EE(\Vv) t}{2h^2(t)}
\left[1 -\frac{h(t)}{u} +\Oo\left(\frac{t}{h(t)}+h^2(t)\right)\right]
 + \frac{u^4 \EE(\Vv^2)t^2}{8h^4(t)} + \Oo\left(\frac{t^3}{h^6(t)}\right)\right\}\\
&= \frac{u^2 \EE(\Vv)t}{2h^2(t)}\left(1 +\Oo\left(h(t)+\frac{t}{h^2(t)}\right)\right).
\end{align*}
Since $h(t)\Crm(t,u/h(t))$ is of order $\Oo\left(t^2/h(t)+h^4(t)\right)$, then
\begin{equation}
\Lambda_\gamma\left(t,\frac{u}{h(t)}\right)
 = \frac{u^2 \EE(\Vv)t}{2h(t)}\left\{1 +\Oo\left(h(t)+\frac{t}{h^2(t)} + h^4(t)\right)\right\},
\end{equation}
and therefore
$\lim_{t\downarrow 0}\Lambda_\gamma(t, u/h(t)) = 0$, for all~$u\in\RR$.

\textbf{Case $\gamma \in (1/2,1]$.}\label{sec:alphaGreater12}
We need to evaluate~$\M_{\Vv}$ at infinity. 
If $\m$ is finite, for~$t$ sufficiently small, the term 
$\M_{\Vv}\left(\frac{1}{2}u^2 th^{-2}(t)\left(1+\Oo\left(t/h(t)\right)\right)\right)$ 
is infinite for any non-zero $u$,
hence $\Dd_{\Vv}^*=\{0\}$, and $\Lambda_\gamma(u)$ is null at $u=0$, 
and infinite elsewhere.	
If~$\m$ is infinite, then obviously~$\Dd_{\Vv}^* = \RR$. 
Assume first that~$\vp$ is finite; 
we claim that $\lim\limits_{u\uparrow \infty}(\vp u)^{-1}\log\M_{\Vv}(u) = 1$. 
In fact, let $F_{\Vv}$ be the cumulative distribution function of~$\Vv$, then
$$
\M_{\Vv}(u) = \EE\left(\E^{u\Vv}\right)\leq \exp(u\vp)\int_{[\vm, \vp]} F_{\Vv}(\D v) = \exp(u\vp).
$$
For any small $\eps>0$, fix $\delta \in(0,\eps\vp/2)$, so that
$$
\frac{\log\M_{\Vv}(u)}{u\vp}
\geq \frac{1}{u\vp}\log\left(\int_{\vp-\delta}^{\vp}\E^{uv}F_{\Vv}(\D v)\right)
\geq \frac{1}{u\vp}\log\left(\E^{u(\vp-\delta)}\PP\left(\Vv \geq \vp-\delta\right)\right)
= 1-\frac{\delta}{\vp} + \frac{\log\PP\left(\Vv \geq \vp-\delta\right)}{u\vp},
$$
since~$\vp$ is the upper bound of the support;
therefore $\PP(\Vv\geq \vp - \delta)$ is strictly positive, and the result follows.
If~$\gamma \in (1/2, 1)$, notice that 
$h(t)\Crm(t,u/h(t))$ is of order~$t^{2-\gamma}$ from Lemma~\ref{lemma:Casymp}, 
and hence
\begin{equation*}
\lim_{t\downarrow 0}\Lambda_\gamma\left(t, \frac{u}{h(t)}\right)
= \lim_{t\downarrow 0} \Oo\left(t^{2-\gamma}\right) 
+ \lim_{t\downarrow 0}
t^{\gamma}\log\M_{\Vv}\left(\frac{u^2 t^{1-2\gamma}}{2}\left(1+\Oo(t^{1-\gamma})\right)\right)
= \frac{u^2 \vp}{2} \lim_{t\downarrow 0} t^{1-\gamma}
= 0,\quad\text{for any }u\text{ in }\RR.
\end{equation*}
When $\gamma=1$, 
$\Lambda(u)$ is positive whenever $u\in(u_-, u_+)\setminus\{0\}$. 
Therefore,
\begin{equation*}
\lim_{t\downarrow 0}\Lambda_\gamma\left(t, \frac{u}{h(t)}\right) 
= \lim_{t\downarrow 0}\Oo(t) 
+ \lim_{t\downarrow 0} t\left(\frac{\vp\Lambda(u)}{t}(1+\Oo(t))\right)
= \Lambda(u)\vp,
\qquad\text{for any }u\in\Dd^* = (u_-, u_+).
\end{equation*}

\textbf{Case $\gamma=1/2$.}
If~$\vp$ is finite then the pointwise limit is null on the whole real line. 
Assume now that~$\vp$ is infinite and~$\m$ is finite. 
Following Remark~\ref{rmk: PowerScale}(iii),
$\frac{u^2}{2} + \left(\frac{\rho\xi u^3}{4}-\frac{u}{2}\right) t^{1/2}+\Oo(t) < \m$
implies
$\Dd_{\Vv}^{*o} = \left(-\sqrt{2\m}, \sqrt{2\m}\right)
\subseteq \Dd_{\Vv}^* \subseteq \limsup_{t\downarrow 0}\Dd_\Vv^t \subseteq\left[-\sqrt{2\m}, \sqrt{2\m}\right] = \overline{\Dd_{\Vv}^*}$.
For sufficiently small~$t$,
$$
\Lambda_{1/2}\left(t, \frac{u}{\sqrt{t}}\right) 
= \frac{\kappa\theta u^2}{4}t^{3/2} + \Oo\left(t^2\right) 
+ t^{1/2}\log\M_{\Vv}\left(\frac{u^2}{2}+\left(\frac{\rho\xi u^3}{4}-\frac{u}{2}\right) t^{1/2}+\Oo(t)\right).
$$
For any fixed~$u$ in $\Dd_{\Vv}^{*o}$, by definition there exists a positive $t_0$ such that~$u$ is in $\Dd_\Vv^t$ for all~$t$ less than $t_0$.
Then the mgf of~$\Vv$ is infinitely differentiable around the point $u^2/2$, and the~$n$-th order derivative at this point is 
$\M_{\Vv}^{(n)}\left(\frac{1}{2}u^2\right) = \EE\left[\Vv^n\exp\left(\frac{1}{2}u^2\Vv\right)\right]$.
Denote now 
$a_n (u) := \M_{\Vv}^{(n)}\left(\frac{1}{2}u^2\right)\M_{\Vv}^{-1}\left(\frac{1}{2}u^2\right)$, for $n\in \NN_+$, 
and 
$a_0(u):=\log\M_{\Vv}\left(\frac{1}{2}u^2\right)$.
A Taylor expansion of the function~$\M_{\Vv}$ around the point $\frac{1}{2}u^2$ yields
\begin{align}\label{eq: asymp-of-lamb-1/2}
\Lambda_{1/2}\left(t, \frac{u}{\sqrt{t}}\right)
& = \sqrt{t}\log\left\{\M_{\Vv}\left(\frac{u^2}{2}\right)\left[1+ a_1(u)
\left(\frac{\rho\xi u^2}{2}-1\right)\frac{u\sqrt{t}}{2} +\Oo(t)\right]\right\}
+ \frac{\kappa\theta u^2}{4}t^{3/2} + \Oo\left(t^2\right)\nonumber\\
&= a_0(u)\sqrt{t} + a_1(u)\left(\frac{\rho\xi u^2}{2}-1\right)\frac{ut}{2}
+\Oo(t^{3/2}).
\end{align}
Letting $t$ tend to zero, we finally obtain
\begin{equation*}
\Lambda_{1/2}(u)=
\left\{
\begin{array}{ll}
0, &\text{ when }u \in \Dd_{\Vv}^{*o},\\
\infty, & \text{ when }u\in \RR\setminus \overline{\Dd_{\Vv}^*}.
\end{array}
\right.
\end{equation*}
However, the limit of $\Lambda_{1/2}\left(t,\pm\sqrt{2\m/t}\right)$ depends on the explicit form of $\M_{\Vv}$. 
To see this, assume that $\rho\xi \m < 1$, 
which is guaranteed in particular when~$\rho\leq 0$, and compute the limit when $u = \sqrt{2\m}$. 
L'H\^opital's rule implies
\begin{equation}\label{eq: bddlimit}
\lim_{t\downarrow 0}t^{1/2}\log\M_{\Vv}\left(\m + \sqrt{\frac{\m}{2}}\left(\rho\xi \m-1\right)t^{1/2} +\Oo(t)\right) 
=\sqrt{\frac{2}{\m}}\frac{1}{1-\rho\xi \m}
\lim_{s\downarrow 0} \frac{s^2\M'_{\Vv}\left(\m - s\right)}{\M_{\Vv}\left(\m - s\right)}.
\end{equation}

\subsection{Proof of Theorem~\ref{thm:fat-tailImpvolAsymps}}\label{sec:proofOfFat-tailImpvolAsymps}
The systematic procedure is similar to the proof of~\cite[Theorem~3.1]{jacquier13}.
To simplify notations, write $\lambtilde_t(u) := \Lambda_{1/2}\left(t,u/\sqrt{t}\right)$,
$\ctilde_t(u):=\Crm(t,u/\sqrt{t})$ and
$\dtilde_t(u):=\Drm(t,u/\sqrt{t})$,
whenever these quantities are well defined.
We shall prove the theorem in several steps:
in Lemma~\ref{lemma:asymp-of-u*} we show that a saddlepoint analysis is feasible;
by taking the expectation under a new probability measure, the main contribution of the option price arises 
and its asymptotic expansion is provided in Lemma~\ref{lem:main-contribution};
in Lemma~\ref{lemma: cf-of-z} we prove the convergence (with rescaling) of the sequence~$(X_t-x)_{t\geq 0}$ under this new measure;
finally, the full asymptotics of the Call option price is obtained via inverse Fourier transform.

\begin{lemma}\label{lemma:asymp-of-u*}
Under Assumption~\ref{assu:fat-tail-2}, for any~$x\neq 0$,~$t>0$ small enough,
the equation~$\partial_u\lambtilde_t(u) = x$ admits a unique solution~$u_t^*(x)$
such that~$\dtilde_t(u_t^*(x))\in \Dd_\Vv^t$, 
and the following holds as $t$ tends to zero:
\begin{equation*}
u_t^*(x) = 
\left\{
\begin{array}{ll}
\displaystyle \sgn(x) \sqrt{2\m} + b_1(x)t^{\frac{1}{2\omega}} + o\left(t^{\frac{1}{2\omega}}\right), & \text{for }\omega= 1,\\
\displaystyle \sgn(x)\sqrt{2\m} + b_1(x)t^{\frac{1}{2\omega}} + b_2(x) t^{\frac{1}{\omega}}\log t + b_3(x)t^{\frac{1}{\omega}} + o\left(t^{\frac{1}{\omega}}\right), & \text{for }\omega\geq 2,
\end{array}
\right.
\end{equation*}
where
\begin{align*}
b_1(x) & := -\sgn(x) (2\m)^{(1-\omega)/(2\omega)}\left(\frac{\ind_{\{\omega=1\}}|\gamma_0| + \ind_{\{\omega\geq 2\}}(\omega-1)\gamma_0}{|x|}\right)^{1/\omega} 
+ \frac{1-\rho\xi\m}{2}\ind_{\{\omega=1\}},\\
b_2(x) & :=  -\sgn(x)\frac{\A\sqrt{2\m}}{2\omega^2}b_1^2(x),\\
b_3(x) & := \sgn(x)\left\{\frac{b_1^2(x)}{\sqrt{2\m}\omega}\left(1-\frac{\omega}{2} - 2\A\m\log\left(\sqrt{2\m}|b_1(x)|\right)-2\B\m\right)\right\}
+ \frac{1-\rho\xi\m}{2}\ind_{\{\omega=2\}}.
\end{align*}
If~$x=0$, then~$u_t^*(0)$ defined as the solution to~$\partial_u\lambtilde_t(u)=0$ 
satisfies $u_t^*(0) = \frac{1}{2}\sqrt{t} + o(\sqrt{t})$.
\end{lemma}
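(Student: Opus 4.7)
The plan is a classical saddle-point inversion. Writing $\lambtilde_t(u) = \sqrt{t}\bigl[\ctilde_t(u) + \log\M_\Vv(\dtilde_t(u))\bigr]$, the equation $\partial_u\lambtilde_t(u) = x$ becomes
$$\sqrt{t}\,\partial_u\ctilde_t(u) + \sqrt{t}\,\frac{\M_\Vv'(\dtilde_t(u))}{\M_\Vv(\dtilde_t(u))}\,\partial_u\dtilde_t(u) = x.$$
Lemma~\ref{lemma:Casymp} gives $\sqrt{t}\,\partial_u\ctilde_t(u)=\Oo(t^{3/2})$, hence subdominant, while Lemma~\ref{lemma:Dasymp} with $h(t)=\sqrt{t}$ yields
$$\dtilde_t(u) = \frac{u^2}{2} + \frac{u(\rho\xi u^2 - 2)}{4}\sqrt{t} + \Oo(t), \qquad \partial_u\dtilde_t(u) = u + \frac{3\rho\xi u^2 - 2}{4}\sqrt{t} + \Oo(t).$$
The singularity of $\M_\Vv'/\M_\Vv$ at $\m$ quantified by~\eqref{eq:fat-tail assu3} supplies the blow-up that balances the small factor $\sqrt{t}$ on the left-hand side, so $u_t^*(x)$ must approach $\sgn(x)\sqrt{2\m}$ as $t\downarrow 0$.

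Existence and uniqueness for small $t$ follow from the convexity of $\lambtilde_t$ (as a cgf) together with its essential smoothness, which~\eqref{eq:fat-tail assu3} guarantees since $\partial_u\lambtilde_t(u)\to\pm\infty$ as $\dtilde_t(u)\uparrow\m$; the map $\partial_u\lambtilde_t$ is then a strictly increasing bijection from $\{u:\dtilde_t(u)<\m\}$ onto $\RR$. For $x\ne 0$, plugging the ansatz $u_t^*(x) = \sgn(x)\sqrt{2\m} + b_1(x)\,t^{1/(2\omega)} + \cdots$ into the above expansions yields
$$\m - \dtilde_t\bigl(u_t^*(x)\bigr) = -\sgn(x)\sqrt{2\m}\,b_1(x)\,t^{1/(2\omega)} + \tfrac{1}{2}\sgn(x)\sqrt{2\m}(1-\rho\xi\m)\sqrt{t} - \tfrac{1}{2}b_1(x)^2\,t^{1/\omega} + \cdots.$$
For $\omega=1$ the two leading terms coincide in order and both feed into $b_1(x)$; substituting into~\eqref{eq:fat-tail assu3} and solving the balance $|\gamma_0|/\bigl[(1-\rho\xi\m)/2 - b_1(x)\bigr] = x$ recovers the stated formula. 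For $\omega\geq 2$ the $\sqrt{t}$ piece is subdominant to $t^{1/(2\omega)}$, so $b_1(x)$ is determined by the leading balance $\sqrt{t}\,(\omega-1)\gamma_0(\m-\dtilde_t)^{-\omega}\sgn(x)\sqrt{2\m} = x$; iterating one further order, the logarithmic term $\A(\m-u)\log(\m-u)$ in~\eqref{eq:fat-tail assu3} (through $\log t^{1/(2\omega)} = (\log t)/(2\omega)$) produces $b_2(x)$, while the polynomial $\B(\m-u)$ correction, the residual $\sqrt{t}$-shift in $\dtilde_t$ (responsible for the $(1-\rho\xi\m)/2$ piece in $b_3(x)$ when $\omega=2$), and the quadratic self-interaction $b_1(x)^2\,t^{1/\omega}/2$ together fix $b_3(x)$.

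The ATM case $x=0$ is handled separately by Taylor-expanding $\lambtilde_t$ at $u=0$ via~\eqref{eq: asymp-of-lamb-1/2}: using $a_0'(u) = u a_1(u)$ and $a_1(0)=\EE(\Vv)$, plugging $u_t^*(0)=c\sqrt{t}+o(\sqrt{t})$ into $\partial_u\lambtilde_t(u_t^*(0))=0$ and matching the $\Oo(t)$ coefficients yields $c\,\EE(\Vv) - \tfrac{1}{2}\EE(\Vv) = 0$, hence $c=1/2$. The hard part is the bookkeeping in the $\omega\geq 2$ regime: three distinct subleading sources — the logarithmic and polynomial corrections in~\eqref{eq:fat-tail assu3}, the $\Oo(\sqrt{t})$ shift in $\dtilde_t$ from Lemma~\ref{lemma:Dasymp}, and the quadratic self-interaction $b_1(x)^2 t^{1/\omega}/2$ produced by squaring the ansatz — all coexist at order $t^{1/\omega}$, and must be disentangled to pin down $b_2(x)$ (the log coefficient) and $b_3(x)$ independently. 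Once the ansatz is fixed at each stage, however, the matching is purely algebraic and the coefficients are uniquely determined.
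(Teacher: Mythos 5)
Your proposal follows essentially the same route as the paper: both work from the saddle-point equation $\partial_u\lambtilde_t(u)=x$, invoke strict convexity and essential smoothness (from~\eqref{eq:fat-tail assu3}) to obtain existence and uniqueness, use Lemmas~\ref{lemma:Dasymp}--\ref{lemma:Casymp} to expand $\dtilde_t$ and $\ctilde_t$, and then insert the ansatz $u_t^*(x)=\sgn(x)\sqrt{2\m}+b_1(x)t^{1/(2\omega)}+\cdots$ and match orders; your identification of the three competing corrections at order $t^{1/\omega}$ (the $\A$-logarithm, the $\B$-polynomial, the $\sqrt{t}$-shift from $\dtilde_t$, and the quadratic self-interaction) is exactly what produces $b_2$ and $b_3$, and your balance equations do recover the stated coefficients. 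The one place where the paper is noticeably more careful is the preliminary claim $u_t^*(x)\to\sgn(x)\sqrt{2\m}$: you assert that the singularity of $\M_\Vv'/\M_\Vv$ at $\m$ ``must'' force this so that the $\sqrt{t}$ factor is balanced, whereas the paper establishes it by a subsequence contradiction argument, showing that if $u_{t_n}^*(x)$ stayed at distance $\varepsilon_0$ from $\sqrt{2\m}$ then $\dtilde_{t_n}(u_{t_n}^*(x))$ would remain bounded away from $\m$, so that $t_n^{-1/2}\partial_u\lambtilde_{t_n}(u_{t_n}^*(x))$ would remain bounded, contradicting $\partial_u\lambtilde_{t_n}(u_{t_n}^*(x))=x\neq 0$. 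Your informal version is correct in spirit but this step does deserve the explicit argument, especially since the monotonicity of $\partial_u\lambtilde_t$ alone does not immediately localize the limit without controlling $\dtilde_t$ on compact subintervals away from the endpoints.
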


\begin{proof}[Proof of Lemma~\ref{lemma:asymp-of-u*}]
Assume that~$x>0$, the case when~$x<0$ being analogous.
Equation~\eqref{eq:TowerProperty} implies that for any~$u\in\RR$,
the equation~$\partial_u\lambtilde_t(u) = x$ reads
\begin{equation}\label{eq:fat-tail-sln}
x = \partial_u\lambtilde_t(u) = \sqrt{t}\left(\log\M\left(t,\frac{u}{\sqrt{t}}\right)\right)'
 = \sqrt{t}\ctilde_t'(u) 
+ \sqrt{t}\frac{\M'_\Vv\left(\dtilde_t(u)\right)}{\M_\Vv\left(\dtilde_t(u)\right)}\dtilde_t'(u).
\end{equation}
The existence and uniqueness of the solution to~\eqref{eq:fat-tail-sln} are guaranteed by the strict convexity of the rescaled cgf~$\lambtilde_t$ for each~$t$~\cite[Theorem~2.3]{jorgensen97} 
and~\eqref{eq:fat-tail assu3}, in which the denominator tends to zero as~$u$ tends to the boundary of~$\Dd_\Vv^t$.
Denote now the unique solution by~$u_t^*(x)$.
Applying Lemma~\ref{lemma:Dasymp} and Lemma~\ref{lemma:Casymp} with $h(t) \equiv t^{1/2}$, 
$$
\ctilde_t'(u) = \frac{u\kappa\theta}{2}t+\Oo\left(t^{\frac{3}{2}}\right),\quad\text{and}\quad
\dtilde_t'(u) = u + \left(\frac{3\rho\xi u^2}{4}-\frac{1}{2}\right)\sqrt{t} + \Oo(t).
$$
We first prove that~$\lim\limits_{t\downarrow 0} u^*_t(x) = \sqrt{2\m}$. 
If~$\lim_{t\downarrow 0}u_t^*(x)\neq \sqrt{2\m}$, 
there exists a sequence~$\{t_n\}_{n=1}^{\infty}$ and (small enough)~$\varepsilon_0>0$, satisfying~
$\lim_{n\uparrow \infty}t_n = 0$ and
$|u_{t_n}^*(x) - \sqrt{2\m}|\geq \varepsilon_0$ for any~$n\geq 1.$ 
In Section~\ref{subsec:proof_thm_limit_of_cgf} it is proved that
$\lim_{t\downarrow 0}\Dd_\Vv^t\subseteq \limsup_{t\downarrow 0}\Dd_\Vv^t\subseteq \overline{\Dd_{\Vv}^*} = [-\sqrt{2\m}, \sqrt{2\m}]$. 
Also notice that for any fixed~$t$ small enough the map~$\partial_u\lambtilde_t:\Dd_\Vv^t \to \RR$ is continuous and strictly increasing.
Hence for fixed positive~$\varepsilon_0$ there are at most finitely many~$t_i$ in the sequence such that~$u_{t_i}^*(x)\geq \sqrt{2\m}+ \varepsilon_0$.

Equation~\eqref{eq:fat-tail-sln} implies that for fixed~$x>0$ the limit of
$t^{-1/2}\partial_u\lambtilde_t(u_t^*(x))$
is infinity as~$t$ tends to zero.
Taking a subsequence of~$\{t_n\}_{n\geq 1}$ if necessary, 
assume now that $u_{t_n}^*(x)\leq \sqrt{2\m} - \varepsilon_0$, for any~$n\geq 1$.
Since~$\dtilde_{t}(\sqrt{2\m}-\eps) = \m-\sqrt{2\m}\eps + \eps^2/2 + \Oo(\sqrt{t})$,
then for any~$\eps > 0$ there exists~$N(\eps)\in\NN$ such that
$\left|\dtilde_{t_n}(\sqrt{2\m}-\eps) - \m + \sqrt{2\m}\eps - \eps^2/2\right|<\sqrt{2\m}\eps/2$
holds for any~$n\geq N(\eps)$.
Fix~$0<\eps_1<  \min(\eps_0,\sqrt{2\m})$ small enough so that 
$\m-3\sqrt{2\m}\eps_1/2 + \eps_1^2/2>\m-\delta_0$,
where~$\delta_0>0$ is chosen to satisfy for any~
$\m-\delta_0<u<\m$,
the higher-order term in~\eqref{eq:fat-tail assu3} is bounded above by one.
Then for such~$\eps_1$ and for any~$n\geq N(\eps_1)$ we have
$\m-\delta_0<\dtilde_{t_n}(\sqrt{2\m} - \eps_1)<\m-\sqrt{2\m}\eps_1/2+\eps_1^2/2<\m$.
The function~$\partial_u\lambtilde_t$ is strictly increasing, implying
$$
\lim_{n\uparrow \infty}\frac{\partial_u\lambtilde_{t_n}\left(u_{t_n}^*(x)\right)}{\sqrt{t_n}}
\leq \lim_{n\uparrow \infty}\frac{\partial_u\lambtilde_{t_n}\left(\sqrt{2\m} - \eps_1\right)}{\sqrt{t_n}}
\leq \frac{2^{\omega+1}\delta_1}{\eps_1^\omega(\sqrt{2\m}- \eps_1)^{\omega-1}}
<\infty,
$$
where
$\delta_1:= \ind_{\{\omega=1\}}|\gamma_0| + \ind_{\{\omega\geq 2\}}(\omega-1)\gamma_0$,
hence the contradiction. 
Therefore $\lim_{t\downarrow 0}u_t^*(x) = \sqrt{2\m}$. 
Analogously we can prove that $\lim_{t\downarrow 0}u_t^*(0) = 0$.

\textbf{Case~$\omega=1$.}
Assume that $u_t^*(x) = \sqrt{2\m} + h_x(t)$, where~$h_x(t) = o(1)$.
Equation~\eqref{eq:fat-tail-sln} implies that $h_x(t) = \Oo\left(\sqrt{t}\right)$,
hence all the terms of order~$\Oo(\sqrt{t})$ in the expansion of $\dtilde_t(u_t^*(x))$ should be included. 
More specifically,
$$
\dtilde_t\left(u_t^*(x)\right) = \m + \sqrt{2\m}h_x(t) + \frac{\sqrt{2\m}}{2}\left(\rho\xi\m-1\right)\sqrt{t} + o(\sqrt{t}).
$$ 
Plugging this back into~\eqref{eq:fat-tail-sln} 
and solving at the leading order yield the desired result.

\textbf{Case~$\omega\geq 2$.}
In this case $h_x(t) = \Oo(t^{1/(2\omega)})$.
Equation~\eqref{eq:fat-tail-sln} now reads
\begin{equation}\label{eq: solve-u*}
\frac{\sqrt{t}}{x}\left\{\frac{\kappa\theta\sqrt{2\m}}{2}t + \frac{\kappa\theta}{2}t h_x(t) + \Oo\left(t^{3/2}\right)
+ \left(\frac{\delta_0\left(1+o(1)\right)}{\left(-\sqrt{2\m}h_x(t) + \Oo\left(h_x^2+\sqrt{t}\right)\right)^\omega}\right)
\left(\sqrt{2\m} + h_x(t) + \Oo\left(\sqrt{t}\right)\right)\right\} \equiv 1.
\end{equation}
Denote by~$h_x^*$ as the leading order of the function~$h_x$.
Solving~\eqref{eq: solve-u*} at the leading order, then
$\delta_0\sqrt{2\m t} \equiv x\left(-\sqrt{2\m}h_x^*(t)\right)^\omega,$
from which
$h_x^*(t) = -\left(2\m\right)^{\frac{1-\omega}{2\omega}}\left(\delta_0/x\right)^{\frac{1}{\omega}} t^{\frac{1}{2\omega}}$.
Higher orders in the expansion of~$u_t^*(x)$ can be derived similarly, simply by
replacing the little-o term in~\eqref{eq: solve-u*} with precise higher-order terms provided in~\eqref{eq:fat-tail assu3}.
We omit the details.

Finally, when $x=0$,
write $u_t^*(0) = h(t)$ with $h(t) = o(1)$.
As~$t$ tends to zero,
$\M_\Vv(u_t^*(0))\sim 1$,~$\M_\Vv '(u_t^*(0)) \sim \EE(\Vv)$, and
$\dtilde_t'\left(u_t^*(0)\right) = h(t) - \frac{1}{2}\sqrt{t} + \Oo\left(t+h^2(t)\sqrt{t}\right)$.
Plugging these into~\eqref{eq:fat-tail-sln} with $x=0$ proves the lemma.
\end{proof}

\begin{lemma}\label{lem:main-contribution}\ 
\begin{enumerate}
\item When~$\gamma_0 > 0$ and~$\omega\geq 2$, as~$t$ tends to zero,
$$
\exp\left(\frac{-xu_t^*(x) + \lambtilde_t(u_t^*(x))}{\sqrt{t}}\right)
= \exp\left(-\frac{\Lambda^*(x)}{\sqrt{t}} + c_1(x) t^{\frac{1-\omega}{2\omega}} + o\left(t^{\frac{1-\omega}{2\omega}} \right)\right),
$$
for any~$x\neq 0$, where
$c_1(x) := \omega\gamma_0^{1/\omega}\left(\frac{|x|}{\sqrt{2\m}(\omega-1)}\right)^{1-1/\omega}$,
and the function~$\Lambda^*$ is defined in~\eqref{eq:fat-tail-call-asymp};
\item If~$\gamma_0<0$ and $\omega=1$, then for any $x\neq 0$, 
as~$t$ tends to zero,
$$
\exp\left(\frac{-xu_t^*(x) + \lambtilde_t(u_t^*(x))}{\sqrt{t}}\right)
= \exp\left(-\frac{\Lambda^*(x)}{\sqrt{t}} + c_2(x) + \gamma_1 \right)\left(\frac{|x|}{|\gamma_0|\sqrt{2\m t}}\right)^{|\gamma_0|}\left(1+o(1)\right),
$$
where~$c_2(x):= \frac{1}{2}\left(\rho\xi\m-1\right)x - \gamma_0 $.
\end{enumerate}
\end{lemma}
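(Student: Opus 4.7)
The plan is to substitute the expansion of $u_t^*(x)$ from Lemma~\ref{lemma:asymp-of-u*} into the decomposition $\lambtilde_t(u) = \sqrt{t}\,\ctilde_t(u) + \sqrt{t}\log\M_\Vv(\dtilde_t(u))$ and expand each ingredient using Lemmas~\ref{lemma:Dasymp}--\ref{lemma:Casymp} together with Assumption~\ref{assu:fat-tail-2}. The contribution from $\sqrt{t}\,\ctilde_t(u_t^*(x))$ is of order $\Oo(t)$ (apply Lemma~\ref{lemma:Casymp} with $h(t)\equiv\sqrt{t}$ in the regime $t=o(h(t))$), hence absorbed into the remainder. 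The leading term of $-xu_t^*(x)/\sqrt{t}$ is $-|x|\sqrt{2\m}/\sqrt{t} = -\Lambda^*(x)/\sqrt{t}$, so the whole game reduces to tracking subleading corrections from $-xu_t^*(x)/\sqrt{t}$ against the expansion of $\log\M_\Vv(\dtilde_t(u_t^*(x)))$ provided by Assumption~\ref{assu:fat-tail-2}.

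Setting $h(t) := u_t^*(x) - \sgn(x)\sqrt{2\m}$ and using Lemma~\ref{lemma:Dasymp} together with Remark~\ref{rmk: PowerScale}(iii), one obtains the key expansion
$$
\m - \dtilde_t(u_t^*(x)) = -\sqrt{2\m}\,h(t) - \tfrac{1}{2}h(t)^2 + \tfrac{\sqrt{2\m}}{2}(1-\rho\xi\m)\sqrt{t} + \Oo\!\left(h(t)\sqrt{t} + t\right),
$$
which is positive for small $t$ since the sign of $b_1(x)$ (and hence of $h$) given in Lemma~\ref{lemma:asymp-of-u*} makes $-\sqrt{2\m}\,h(t)>0$. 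For $\omega\geq 2$, the first line of Assumption~\ref{assu:fat-tail-2} applied to this expansion gives $\log\M_\Vv(\dtilde_t(u_t^*(x))) \sim \gamma_0(-\sqrt{2\m}\,b_1(x))^{-(\omega-1)}\,t^{-(\omega-1)/(2\omega)}$. The subleading piece $-xb_1(x)t^{(1-\omega)/(2\omega)}$ of $-xu_t^*(x)/\sqrt{t}$ lives at the same order, and invoking the explicit form of $b_1(x)$ shows that the two coefficients carry factors $1$ and $\omega-1$ respectively of the common quantity $\gamma_0^{1/\omega}\bigl(|x|/((\omega-1)\sqrt{2\m})\bigr)^{1-1/\omega}$. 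Their sum is exactly $c_1(x)$.

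The $\omega=1$ case rests on a neat cancellation: the correction $(1-\rho\xi\m)/2$ in $b_1(x)$ is precisely what kills the $\sqrt{2\m}(1-\rho\xi\m)\sqrt{t}/2$ term displayed above, leaving $\m-\dtilde_t(u_t^*(x)) = \sqrt{2\m t}\,|\gamma_0|/|x|\,(1+o(1))$. Substituting into the logarithmic form $\log\M_\Vv(u) = \gamma_0\log(\m-u) + \gamma_1 + o(1)$ extracts a $\gamma_0(\log t)/2$ contribution that exponentiates into the prefactor $(|x|/(|\gamma_0|\sqrt{2\m t}))^{|\gamma_0|}$; the remaining constants combine with $-xb_1(x)$ to produce $c_2(x)+\gamma_1$.

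The main obstacle is the careful bookkeeping of all remainders. For $\omega\geq 2$ one must verify that the higher-order contributions $b_2(x)t^{1/\omega}\log t$ and $b_3(x)t^{1/\omega}$ in $u_t^*$, the quadratic $h(t)^2$ term, the $\Oo$-remainders in Lemmas~\ref{lemma:Dasymp}--\ref{lemma:Casymp}, and the sub-dominant terms in Assumption~\ref{assu:fat-tail-2} all land within $o(t^{(1-\omega)/(2\omega)})$; since several of these are only marginally smaller than the principal term, the accounting is delicate. The algebraic identity expressing $c_1(x)$ as the sum of its two contributions (one from $-xu_t^*/\sqrt{t}$, one from $\log\M_\Vv$) is not immediately obvious and should be recorded explicitly as the heart of the calculation.
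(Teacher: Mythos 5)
Your proposal is correct and follows the same route as the paper: split $-xu_t^*(x)/\sqrt{t}$ off from $\lambtilde_t(u_t^*(x))/\sqrt{t}=\ctilde_t(u_t^*)+\log\M_\Vv(\dtilde_t(u_t^*))$, feed the expansion of $u_t^*$ from Lemma~\ref{lemma:asymp-of-u*} into each piece, discard $\ctilde_t=\Oo(t)$, and identify that the two surviving coefficients of $t^{(1-\omega)/(2\omega)}$ sum to $c_1(x)$ (via the ratio $(\omega-1):1$ that you correctly compute), with the $\omega=1$ case handled through the $(1-\rho\xi\m)/2$ cancellation and the $\gamma_0\log(\m-\cdot)$ term producing the power-of-$t$ prefactor. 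The only blemish is a typo (``first line'' of Assumption~\ref{assu:fat-tail-2} should read ``second line'' in the $\omega\geq 2$ case), and the remainder bookkeeping you flag as delicate does go through since every correction is of size $t^{1/(2\omega)}\log t$ relative to the principal term.
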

\begin{proof}[Proof of Lemma~\ref{lem:main-contribution}]
\textbf{Case~$\omega\geq 2$.}
Assumption~\ref{assu:fat-tail-2} and Lemma~\ref{lemma:asymp-of-u*} imply
\begin{equation*}
\begin{array}{lll}
\displaystyle \exp\left(-\frac{xu^*_t(x)}{\sqrt{t}}\right) 
 & \displaystyle = \exp\left\{-\frac{x}{\sqrt{t}}\left[\sqrt{2\m} + b_1(x) t^{\frac{1}{2\omega}} + o(t^{\frac{1}{2\omega}})\right]\right\}
 & \displaystyle = \exp\left\{-\frac{\Lambda^*(x)}{\sqrt{t}} - b_1(x) x t^{\frac{1-\omega}{2\omega}} + o\left(t^{\frac{1-\omega}{2\omega}} \right)\right\},\\
 \displaystyle\exp\left(\frac{\lambtilde_t(u_t^*(x))}{\sqrt{t}}\right)
 & \displaystyle = \exp\left(\ctilde_t(u^*_t) + \log\M_\Vv(\dtilde_t(u_t^*))\right)
 & \displaystyle = \exp\left\{\frac{\gamma_0}{\left(\sqrt{2\m}|b_1(x)|\right)^{\omega-1}}t^{\frac{1-\omega}{2\omega}} + o\left(t^{\frac{1-\omega}{2\omega}}\right)\right\}.
\end{array}
\end{equation*}
Using the expression of~$b_1(\cdot)$ provided in Lemma~\ref{lemma:asymp-of-u*}, 
the coefficient of the term of order~$t^{\frac{1-\omega}{2\omega}}$ is given by
$$
-b_1(x) x + \frac{\gamma_0}{\left(\sqrt{2\m}|b_1|\right)^{\omega-1}} 
= \left\{[(\omega-1)\gamma_0]^{\frac{1}{\omega}} + \gamma_0[(\omega-1)\gamma_0]^{\frac{1-\omega}{\omega}} \right\}\left(\sqrt{2\m}\right)^{\frac{1-\omega}{\omega}}|x|^{1-\frac{1}{\omega}}
= \omega \gamma_0^{\frac{1}{\omega}}\left(\frac{|x|}{\sqrt{2\m}(\omega-1)}\right)^{1-\frac{1}{\omega}}
= c_1(x) .
$$

\textbf{Case~$\omega=1$.}
This case follows straightforward computations after noticing that
$$
\exp\left\{\frac{\lambtilde_t(u_t^*(x))}{\sqrt{t}}\right\}
 = \exp\left\{\Oo(t) + \gamma_0\log \left(\m- \dtilde_t(u_t^*(x))\right) + \gamma_1 + o(1) \right\}
 =\E^{\gamma_1} \left(\frac{|\gamma_0|\sqrt{2\m t} }{|x|}\right)^{\gamma_0}\left(1+o(1)\right).
$$
\end{proof}

For each~$x\neq 0$ and~$t>0$ small enough, define the time-dependent measure~$\QQ_t$ by
$$
\frac{\D\QQ_t}{\D\PP} := \exp\left(\frac{u^*_t(x)X_t-\lambtilde_t(u_t^*(x))}{t^{1/2}}\right).
$$
Lemma~\ref{lemma:asymp-of-u*} implies that~$\lambtilde_t(u_t^*(x))$ is finite for small~$t$. 
Also by definition it is obvious that
$\EE[\D\QQ_t/\D \PP] = 1$, then~$\QQ_t$ is a well-defined probability measure for each~$t$.

\begin{lemma}\label{lemma: cf-of-z}
For any~$x\neq 0$, let~$Z_t:=(X_t-x)/\vartheta(t)$, where~
$\vartheta(t):=\ind_{\{\omega=1\}} + \ind_{\{\omega= 2\}}t^{1/8}$. 
Under Assumption~\ref{assu:fat-tail-2}, as~$t$ tends to zero, the characteristic function of~$Z_t$ under~$\QQ_t$ is
\begin{equation*}
\Psi_t(u) := \EE^{\QQ_t}\left(\E^{\I u Z_t}\right)
= 
\left\{
\begin{array}{ll}
\displaystyle\E^{-\I u x}\left(1 - \frac{\I u x}{|\gamma_0|}\right)^{\gamma_0}\left(1+o(1)\right), &\text{for }\omega =1,\\
\displaystyle \exp\left(\frac{-u^2\zeta^2(x)}{2}\right)\left(1+o(1) \right),&\text{for }\omega = 2,
\end{array}
\right.
\end{equation*}
where~$\zeta(x) := \displaystyle\sqrt{2}\left(\frac{2\m}{\gamma_0^2}\right)^{1/8}|x|^{3/4}$.
\end{lemma}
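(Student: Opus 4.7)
The plan is to express $\Psi_t$ explicitly via the change of measure and analyse it through a saddle-point-type expansion of $\lambtilde_t$ around $u_t^*(x)$. Using $\lambtilde_t(u) = \sqrt{t}\log\M(t, u/\sqrt{t})$ together with the definition of $\QQ_t$, one obtains
\begin{equation*}
\Psi_t(u)
= \exp\left(-\frac{\I u x}{\vartheta(t)}\right)
\exp\left(\frac{\lambtilde_t\!\left(u_t^*(x) + \I u\sqrt{t}/\vartheta(t)\right) - \lambtilde_t(u_t^*(x))}{\sqrt{t}}\right),
\end{equation*}
valid for $t$ small enough that the complex argument lies in the analyticity strip of $\M(t,\cdot)$. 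Splitting $\lambtilde_t = \sqrt{t}[\ctilde_t + \log\M_\Vv\circ\dtilde_t]$ and invoking Lemma~\ref{lemma:Casymp}, the $\ctilde_t$ contribution is of order $\Oo(t)$ and is therefore negligible, so the limiting behaviour is governed entirely by $\log\M_\Vv(\dtilde_t(u_t^*(x) + \delta_t)) - \log\M_\Vv(\dtilde_t(u_t^*(x)))$, with $\delta_t := \I u\sqrt{t}/\vartheta(t)$.

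Introduce $\epsilon_t := \m - \dtilde_t(u_t^*(x))$ and $\mu_t := \dtilde_t(u_t^*(x)+\delta_t) - \dtilde_t(u_t^*(x))$. Combining Lemma~\ref{lemma:Dasymp} with the explicit expression for $b_1(x)$ from Lemma~\ref{lemma:asymp-of-u*}, I would first establish the sharp estimates $\epsilon_t \sim \sqrt{2\m t}\,|\gamma_0|/|x|$ when $\omega=1$, and $\epsilon_t\sim c\,t^{1/4}$ with $c:=(2\m)^{1/4}(\gamma_0/|x|)^{1/2}$ when $\omega=2$, together with $\mu_t\sim \delta_t\sgn(x)\sqrt{2\m}$ in both cases. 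Both estimates crucially rely on substituting the explicit form of $b_1(x)$ from Lemma~\ref{lemma:asymp-of-u*}, not merely on the fact that $u_t^*(x)\to\sgn(x)\sqrt{2\m}$.

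For $\omega=1$, Assumption~\ref{assu:fat-tail-2} gives $\log\M_\Vv(z) = \gamma_0\log(\m-z) + \gamma_1 + o(1)$, so no resummation is required: $\epsilon_t$ and $\delta_t$ are of the same order $\sqrt{t}$, and a direct computation yields $(\epsilon_t - \mu_t)/\epsilon_t \to 1 - \I u x/|\gamma_0|$; taking logarithms, multiplying by $\gamma_0$ and using $\vartheta(t)\equiv 1$ gives the claimed limit $\E^{-\I u x}(1 - \I u x/|\gamma_0|)^{\gamma_0}$. For $\omega=2$, the analysis is more delicate: Assumption~\ref{assu:fat-tail-2} reads $\log\M_\Vv(z) = \gamma_0/(\m-z) + \gamma_0\gamma_1\log(\m-z) + \gamma_0\gamma_2 + o(1)$, and the $\log$ and constant corrections contribute only $o(1)$ after differencing, so it suffices to expand
\begin{equation*}
\frac{\gamma_0}{\m-\dtilde_t(u_t^*(x)+\delta_t)} - \frac{\gamma_0}{\m-\dtilde_t(u_t^*(x))}
= \gamma_0\sum_{n\geq 1}\frac{\mu_t^n}{\epsilon_t^{n+1}}.
\end{equation*}
Since $\epsilon_t$ is of order $t^{1/4}$ and $\mu_t$ of order $t^{3/8}$, the $n$-th summand is of order $t^{(n-2)/8}$, so only $n\in\{1,2\}$ can contribute to the limit. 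Substituting the explicit expressions for $c$ and $\delta_t$ shows that $\gamma_0\mu_t/\epsilon_t^2 = \I u x/\vartheta(t) + o(1)$, which exactly cancels the prefactor $\exp(-\I u x/\vartheta(t))$; the arithmetic identity $2\gamma_0\m/c^3 = \zeta(x)^2/2$ then gives $\gamma_0\mu_t^2/\epsilon_t^3 \to -u^2\zeta(x)^2/2$, which is the required Gaussian exponent.

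The main obstacle lies precisely in the $\omega=2$ case, where two separate exact cancellations must be tracked simultaneously: the identity $\gamma_0\mu_t/\epsilon_t^2 = \I u x/\vartheta(t) + o(1)$ is a manifestation of the saddle-point relation $\partial_u\lambtilde_t(u_t^*(x)) = x$ at subleading order, and its verification requires invoking both the explicit constant $b_1(x)$ from Lemma~\ref{lemma:asymp-of-u*} and the next-to-leading-order expansion of $\dtilde_t'$ from Lemma~\ref{lemma:Dasymp}. Once these cancellations are secured, the uniform control of the remainder $o(1)$ terms on compact sets of $u$ follows from the quantitative bounds built into Assumption~\ref{assu:fat-tail-2}, which finishes the pointwise convergence of $\Psi_t$ to the claimed limit.
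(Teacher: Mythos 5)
Your proof is correct and follows essentially the same route as the paper: write $\log\Psi_t$ via the change-of-measure identity, discard the $\ctilde_t$ contribution and the logarithmic/constant corrections as $o(1)$, and expand $\log\M_\Vv(\dtilde_t(u_t^*+\delta_t))-\log\M_\Vv(\dtilde_t(u_t^*))$ using the explicit $b_1(x)$ from Lemma~\ref{lemma:asymp-of-u*} to produce the exact cancellation against $\exp(-\I u x/\vartheta(t))$. Your reorganisation of the $\omega=2$ case as a geometric series $\gamma_0\sum_{n\geq 1}\mu_t^n/\epsilon_t^{n+1}$ is an algebraically equivalent presentation of the paper's exact fraction $\gamma_0(\dtilde_1-\dtilde_2)/[(\m-\dtilde_1)(\m-\dtilde_2)]$, keeping the same two contributing terms.
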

\begin{remark}
Lemma~\ref{lemma: cf-of-z} and L\'evy's Convergence Theorem~\cite[Theorem~18.1]{williams91} imply that 
under~$\QQ_t$ the process~$(Z_t)_{t\geq 0}$ converges weakly
 to a Gamma distribution (or a Gamma distribution mirrored to the negative real half line)
if $x>0$ (or $x<0$) minus the constant~$x$ when~$\omega=1$,
and to a Gaussian distribution when~$\omega=2$.
\end{remark}
\begin{remark}
Intuitively, the case $\omega\geq 3$ should be similar to the case $\omega=2$, 
so that a suitable candidate for the function~$\vartheta$ can be found.
However, in such scenario more information on the asymptotics of ~$\log\M_\Vv$ and its derivative 
are required in order to obtain the suitable (non-constant) characteristic function.
These extra assumptions turn out to be very restrictive and of little practical use, and are thus omitted.
\end{remark}

\begin{proof}[Proof of Lemma~\ref{lemma: cf-of-z}]
Assume that~$x>0$, with~$x<0$ being analogous. 
Function~$\log\Psi_t$ can be written as
\begin{align}\label{eq:cf-under-q}
\log\Psi_t(u) 
&= \log\EE\left[\exp\left(\frac{\I u (X_t-x)}{\vartheta(t)} + \frac{u_t^*(x)X_t - \lambtilde_t(u_t^*(x))}{\sqrt{t}}\right)\right]\\
&= -\frac{\I u x }{\vartheta(t)} + \log\EE\left[\exp\left(\frac{ \left(\I u\sqrt{t}/\vartheta(t) + u_t^*(x)\right)X_t}{\sqrt{t}}\right)\right] - \frac{\lambtilde_t(u_t^*(x))}{\sqrt{t}}\nonumber\\
&=-\frac{\I u x}{\vartheta(t)} + \frac{1}{\sqrt{t}}\left(\lambtilde_t\left(u_t^*(x) + \I u \sqrt{t}/\vartheta(t)\right) - \lambtilde_t\left(u_t^*(x)\right)\right).\nonumber
\end{align}

\textbf{Case~$\omega=1$.}
Lemma~\ref{lemma:asymp-of-u*} implies that
\begin{equation*}
\begin{array}{llrl}
& \displaystyle \dtilde_1(u)
:=\dtilde_t\left(u^*_t(x) + \frac{\I u\sqrt{t}}{\vartheta(t)}\right) 
= \m + \frac{\gamma_0\sqrt{2\m t}}{x} + \I u\sqrt{2\m t}
+ o(\sqrt{t}),
 &\dtilde_2
&\displaystyle :=\dtilde(u_t^*(x))
= \m + \frac{\gamma_0\sqrt{2\m t}}{x} 
+ o(\sqrt{t}),\\
& \displaystyle \ctilde_1(u)
:=\ctilde\left(u^*_t(x) + \frac{\I u\sqrt{t}}{\vartheta(t)}\right)
= \frac{\m\kappa\theta t}{2} + \Oo\left(t^{3/2}\right),
& \ctilde_2
& \displaystyle :=\ctilde(u^*_t(x)) = \frac{\m\kappa\theta t}{2}  + \Oo\left(t^{3/2}\right).
\end{array}
\end{equation*}
As a result, the lemma follows in this case from the following computations:
\begin{align*}
\log \Psi_t(u) 
&= -\I u x + \ctilde_1(u) - \ctilde_2 + \log\M_\Vv\left(\dtilde_1(u)\right) - \log\M_\Vv\left(\dtilde_2\right) 
= -\I u x + \gamma_0\log\left(\frac{\m-\dtilde_1(u)}{\m-\dtilde_2}\right) + o(1)\\
&= -\I u x + \gamma_0\log\left(1 - \frac{\I u x}{|\gamma_0|} + o(1)\right) + o(1).
\end{align*}

\textbf{Case $\omega = 2$.}
Denote~$\theta:=1/8$, then~$\frac{1}{2}-\theta>\frac{1}{4} = \frac{1}{2\omega}$.
Lemma~\ref{lemma:asymp-of-u*} implies
\begin{align*}
\dtilde_1(u)
&= \m +\sqrt{2\m}b_1 t^{1/4} + \I u\sqrt{2\m}t^{1/2-\theta} 
+ \left(\frac{b_1^2}{2} + \sqrt{2\m}b_3 + \sqrt{\frac{\m}{2}}(\m\rho\xi-1)\right)t^{1/2} 
+ o(\sqrt{t}),\\
\dtilde_2
&= \m +\sqrt{2\m}b_1 t^{1/4} + \left(\frac{b_1^2}{2} + \sqrt{2\m}b_3 + \sqrt{\frac{\m}{2}}(\m\rho\xi-1)\right)t^{1/2} 
+ o(\sqrt{t}),\\
\ctilde_1(u)
&= \frac{\m\kappa\theta t}{2} + \frac{\kappa\theta\sqrt{2\m}b_1}{2}t^{5/4} + \Oo\left(t^{11/8}\right),\quad\text{and}\quad
\ctilde_2  = \frac{\m\kappa\theta t}{2} + \frac{\kappa\theta\sqrt{2\m}b_1}{2}t^{5/4} + \Oo\left(t^{3/2}\right).
\end{align*}
Consequently,
\begin{align*}
&\frac{\lambtilde_t\left(u_t^*+\I ut^{1/2-\theta}\right) - \lambtilde_t\left(u_t^*\right)}{\sqrt{t}}
= \ctilde_1(u)-\ctilde_2+\log\M_\Vv(\dtilde_1(u))-\log\M_\Vv(\dtilde_2)\\
&= \frac{\gamma_0\left(\dtilde_1(u) - \dtilde_2\right)}{\left(\m-\dtilde_1(u)\right)\left(\m-\dtilde_2\right)}
+ \gamma_0\gamma_1 \left(\log\left(\m-\dtilde_1(u)\right) - \log\left(\m-\dtilde_2\right)\right) + o(1)\\
&= \frac{\gamma_0\left(\I u \sqrt{2\m} t^{1/2-\theta} + o(\sqrt{t})\right)}{2\m b_1^2t^{1/2}}
\left[1-\frac{\I u t^{1/4-\theta}}{b_1} + \Oo\left(t^{1/4}\right)\right]
+ \frac{\I u \gamma_0\gamma_1 t^{1/4-\theta}}{b_1} + o(1)
= \frac{\I \gamma_0 u}{\sqrt{2\m}b_1^2}t^{-\theta} + \frac{\gamma_0 u^2}{\sqrt{2\m}b_1^3}+o(1),
\end{align*}
and the proof follows by noticing that $b_1<0$ and $\gamma_0 = x \sqrt{2\m}b_1^2$, 
from Lemma~\ref{lemma:asymp-of-u*}.

\end{proof}

We finally prove the main theorem, when $x>0$. 
The price of a European Call option with strike~$\E^x$ is
\begin{align*}
\EE^\PP\left(\E^{X_t} - \E^x\right)^+
& = \EE^{\QQ_t}\left[\left(\E^{X_t} - \E^x\right)^+\frac{\D\PP}{\D \QQ_{t}}\right]
= \EE^{\QQ_t}\left[\exp\left(\frac{-u_t^*(x)X_t + \lambtilde_t(u_t^*(x))}{t^{1/2}}\right)\left(\E^{X_t} - \E^x\right)^+\right]\\
 & = \exp\left(\frac{-xu_t^*(x) + \lambtilde_t(u_t^*(x))}{\sqrt{t}}\right)\E^x
\EE^{\QQ_t}\left[\exp\left(\frac{-u_t^*(x)(X_t - x)}{\sqrt{t}}\right)\left(\E^{X_t-x} - 1\right)^+\right]\\
&= \exp\left(\frac{-xu_t^*(x) + \lambtilde_t(u_t^*(x))}{\sqrt{t}}\right)\E^x
\EE^{\QQ_t}
\left[\exp\left(\frac{-u_t^*(x)Z_t}{\sqrt{t}/\vartheta(t)}\right)\left(\E^{Z_t\vartheta(t)}-1\right)^+\right].
\end{align*}

\textbf{Case~$\omega=2$.}
The proof is identical to~\cite[Theorem~3.1]{jacquier13} and is therefore omitted.

\textbf{Case~$\omega=1$.} 
The Fourier transform of the modified payoff
$\exp\left(-\frac{u_t^*(x)Z_t}{\sqrt{t}}\right)\left(\E^{Z_t}-1\right)^+$ under~$\QQ_t$ is
\begin{align*}
\int_0^\infty \exp\left(-\frac{u_t^*(x)z}{\sqrt{t}}\right)\left(\E^z - 1\right)\E^{\I u z} \D z
&= \left[\frac{\E^{(1+\I u -u_t^*(x)t^{-1/2})z}}{1+\I u-u_t^*(x)t^{-1/2}}\right]_0^\infty 
- \left[\frac{\E^{(\I u -u_t^*(x)t^{-1/2})z}}{\I u-u_t^*(x)t^{-1/2}}\right]_0^\infty\\
&= \frac{1}{\I u-u_t^*(x)t^{-1/2}} - \frac{1}{1+\I u-u_t^*(x)t^{-1/2}} \\
&= \frac{t}{\left(u_t^*(x)-(1+\I u)\sqrt{t}\right)\left(u_t^*(x)-\I u\sqrt{t}\right)},
\end{align*}
where in the second line we use the fact that
$\lim_{t\downarrow 0} u_t^*(x)t^{-1/2} = +\infty$.
Recall that the Gamma distribution with shape~$|\gamma_0|$ and scale~$|\frac{x}{\gamma_0}|$ 
has density~$f_\Gamma \in L^2(\RR)$ given by
\begin{equation}\label{eq:density-f-gamma}
f_\Gamma(y) = \frac{y^{|\gamma_0|-1}}{\Gamma\left(|\gamma_0|\right)}
\exp\left(-\left|\frac{\gamma_0}{x}\right| y \right)\left(\left|\frac{\gamma_0}{x}\right|\right)^{|\gamma_0|},\quad\text{for }y>0.
\end{equation}
Applying~\cite[Theorem~13.E]{goldberg65} and Lemma~\ref{lemma: cf-of-z},
\begin{align}\label{inverse-to-density}
\EE^{\QQ_t}\left[\exp\left(\frac{-u_t^*(x)Z_t}{\sqrt{t}}\right)\left(\E^{Z_t}-1\right)^+\right]
&= \frac{t}{2\pi} \int_{-\infty}^\infty 
\frac{\Psi_t(u)\D u}{\left(u_t^*(x)-(1-\I u)\sqrt{t}\right)\left(u_t^*(x) + \I u\sqrt{t}\right)}\nonumber\\
&=\frac{t}{4\pi\m}\int_{-\infty}^\infty
\E^{-\I u x}\left(1-\frac{\I ux}{|\gamma_0|}\right)^{\gamma_0}\left(1+o(1)\right)\D u
=\frac{t f_\Gamma(x)}{2\m} \left(1+o(1)\right),
\end{align}
where the last line follows from Fourier inversion.
Combining Lemma~\ref{lem:main-contribution} and~\eqref{inverse-to-density},
the Call price reads
$$
\EE\left(\E^{X_t} - \E^x\right)^+
=
\exp\left(-\frac{\Lambda^*(x)}{\sqrt{t}}+ x + c_2(x) + \gamma_1\right)\left(\frac{x}{|\gamma_0|\sqrt{2\m}}\right)^{|\gamma_0|}
\frac{f_\Gamma(x)}{2\m} t^{1-\frac{|\gamma_0|}{2}}\left(1+o(1)\right),\quad\text{for } x>0.
$$

Assume now that~$x<0$, the price of a European Put option with strike~$\E^x$ is
$$
\EE\left(\E^x - \E^{X_t}\right)^+ = \exp\left(\frac{-xu_t^*(x) + \lambtilde_t(u_t^*(x))}{\sqrt{t}}\right)\E^x
\EE^{\QQ_t}
\left[\exp\left(\frac{-u_t^*(x)Z_t}{\sqrt{t}}\right)\left(1- \E^{Z_t}\right)^+\right],
$$
and the Fourier transform of the modified payoff function 
$\exp\left(\frac{-u_t^*(x)Z_t}{\sqrt{t}}\right)\left(1- \E^{Z_t}\right)^+$ is
$$
\int_{-\infty}^0 \exp\left(-\frac{u_t^*(x)z}{\sqrt{t}}\right)\left(1 - \E^z\right)\E^{\I u z} \D z
= \frac{t}{\left(u_t^*(x)-(1+\I u)\sqrt{t}\right)\left(u_t^*(x)-\I u\sqrt{t}\right)}.
$$
Following a similar procedure, and noticing that~$(\E^{X_t})_{t\geq 0}$ is a~$\PP$-martingale, the Put-Call parity implies
\begin{equation*}
\EE\left(\E^{X_t} - \E^x\right)^+
=
\left(1-\E^x\right)
+ \exp\left(-\frac{\Lambda^*(x)}{\sqrt{t}} + x + c_2(x) + \gamma_1\right)
\left(\frac{|x|}{|\gamma_0|\sqrt{2\m}}\right)^{|\gamma_0|}
\frac{f_\Gamma(|x|)}{2\m} t^{1-\frac{|\gamma_0|}{2}}\left(1+o(1)\right), \quad\text{for } x<0.
\end{equation*}
In the standard Black-Scholes model with volatility~$\Sigma>0$, the short-time asymptotics of the Call option price reads~\cite[Corollary~3.5]{forde12}
$\displaystyle\EE\left(\E^{X_t} - \E^x\right)^+ = (1-\E^x)^+ + \frac{1}{\sqrt{2\pi}x^2}\exp\left(-\frac{x^2}{2\Sigma^2 t} + \frac{x}{2}\right)
(\Sigma^2t)^{3/2}(1+\Oo(t))$.
Then the asymptotics of implied volatility can be derived following the systematic approach provided in~\cite{gao14}.

\subsection{Proof of Theorem~\ref{thm:large-time-LDP}}
We first prove the large deviations statement,
which we then translate into the large-maturity behaviour of the implied volatility.
Andersen and Piterbarg~\cite[Proposition~3.1]{andersen05} 
analysed moment explosions in the standard Heston model, and proved that for any $u>1$, 
the quantity $\EE(\E^{uX_t})$ always exists as long as 
\begin{equation}\label{assu:mgf-explosion}
\kappa>\rho\xi u\qquad\text{and}\qquad d(u)\geq 0.
\end{equation}
Moreover, the assumption $\kappa>\rho\xi$ implies (see~\cite{FordeJac11}) 
that~\eqref{assu:mgf-explosion} holds for any~$u\in[\usm, \usp]$, 
so that $\EE(\E^{uX_t})$ is well-defined for $u\in[\usm, \usp]$ 
and any (large)~$t$ in the standard Heston model.
The tower property then yields
$$
\M(t,u) = \EE\left[\EE(\E^{uX_t}|\Vv)\right] = \Crm(t,u)\left(\M_\Vv\circ\Drm(t,u)\right).
$$
Consequently, for any large~$t$, $\M(t,u)$ is well-defined for
$u\in \Ss:=[\usm, \usp]\cap\Ss_\Vv$, 
where the set~$\Ss_\Vv$ is defined by
$$
\Ss_\Vv:= \bigcup_{t>0}\bigcap_{s\geq t}\left\{u:\Drm(s,u)<\m \right\}.
$$
Using the expressions of functions~$\Crm$ and~$\Drm$ in~\eqref{eq:HestonMGF}, 
the rescaled cgf of the process~$(t^{-1}X_t)_{t\geq 0}$ reads
\begin{align}\label{eq:large-time-rescaled-cgf}
\Xi(t,u) 
&:= \frac{1}{t}\log\EE\left(\E^{uX_t}\right)
 = \frac{1}{t}\Crm(t,u) + \frac{1}{t}\log\M_\Vv\left(\Drm(t,u)\right)\nonumber\\
&= \frac{\kappa\theta(\kappa-\rho\xi u-d(u))}{\xi^2} 
- \frac{2\kappa\theta}{\xi^2 t}\log\left(\frac{1-g(u)\E^{-d(u)t}}{1-g(u)}\right)
+\frac{1}{t} \log\M_\Vv \left(\frac{\kappa- \rho\xi u - d(u)}{\xi^{2}}\frac{1-\E^{-d(u)t}}{1-g(u)\E^{-d(u)t}}\right).
\end{align}
For any~$u\in(\usm, \usp)$, since the quantity~$d(u)$ is strictly positive, then
\begin{equation}\label{eq:large-time-higher-limit}
\lim_{t\uparrow \infty}\frac{1}{t}\log\left(\frac{1-g(u)\E^{-d(u)t}}{1-g(u)}\right) = 0,\quad\text{and}\quad
\lim_{t\uparrow \infty}\frac{\kappa- \rho\xi u - d(u)}{\xi^{2}}\frac{1-\E^{-d(u)t}}{1-g(u)\E^{-d(u)t}} = \frac{\kappa- \rho\xi u - d(u)}{\xi^2}.
\end{equation}
Since~$u\mapsto\Xi(t,u)$ is continuous for each~$t>0$, L'H\^opital's rule implies that~
$\lim_{t\uparrow \infty}\Xi(t,\usmp) = \kappa\theta(\kappa - \rho\xi \usmp)/ \xi^2$.

\textbf{Case~$\m=\infty$.}
Obviously~$\Ss_\Vv = \RR$, implying that~$\Ss = [\usm, \usp]$.
Equation~\eqref{eq:large-time-rescaled-cgf} shows that
$$
\Xi(u):=\lim_{t\uparrow \infty}\Xi(t,u) 
\equiv \Lf(u),\quad\text{for any }u\in\RR,
$$
with~$\Lf$ provided in~\eqref{eq:large-time-limiting-cgf}.
In~\cite[Theorem~2.1]{FordeJac11}, it is proved that
the limiting function~$\Xi$ and its effective domain~$\Ss$ satisfy all the assumptions 
of the G\"artner-Ellis theorem (Theorem~\ref{thm:Gartner}), 
and hence the large deviations principle for the sequence~$(t^{-1}X_t)_{t\geq 0}$ follows.

\textbf{Case~$\m<\infty$.}
Equation~\eqref{eq:large-time-higher-limit} implies that
$$
\left\{ u: \frac{\kappa-\rho\xi u - d(u)}{\xi^2}<\m\right\}
\subset \Ss_\Vv
\subset \left\{ u: \frac{\kappa-\rho\xi u - d(u)}{\xi^2} \leq \m\right\}.
$$
As a result, the essential smoothness of function~$\Xi$ is guaranteed if
$$
[\usm, \usp]\subset \left\{ u: \frac{\kappa-\rho\xi u - d(u)}{\xi^2}<\m\right\}
= \left\{ u: \kappa-\rho\xi u <\xi^2\m + \sqrt{(\kappa-\rho\xi u)^2+\xi^2 u (1-u)}\right\}.
$$
Since~$\kappa-\rho\xi u>0$ holds for any~$u\in[\usm, \usp]$,
\begin{align}\label{condition-p}
\kappa-\rho\xi u <\xi^2\m + \sqrt{(\kappa-\rho\xi u)^2+\xi^2 u(1-u)}
 & \Longleftrightarrow
 0<u(1-u) + \xi^2 m^2 + 2\m \sqrt{(\kappa-\rho\xi u)^2+\xi^2 u(1-u)}\nonumber\\
 & \Longleftrightarrow
 \frac{u(u-1)}{\xi^2}<\m^2 + \frac{2\m}{\xi^2}\sqrt{(\kappa-\rho\xi u)^2+\xi^2 u(1-u)}.
\end{align}
Since $[0,1]\subset (\usm, \usp)$, 
Condition~\eqref{condition-p} holds for any~$u\in[0,1]$.
Whenever~$u>1$ or~$u<0$, as functions of~$u$,
the left-hand-side is strictly increasing while the right-hand-side is strictly decreasing.
Therefore,~\eqref{condition-p} holds for any~$u\in[\usm, \usp]$ if and only if
$\max\{\usm(\usm -1), \usp(\usp -1)\} < \m^2 \xi^2$.
Consequently, Assumption~\ref{assu:Cond} ensures that
$\Ss = [\usm, \usp]$, 
and the proof follows from the G\"artner-Ellis theorem (Theorem~\ref{thm:Gartner}).

We now prove the asymptotic behaviour for the implied volatility.
We claim that in a randomised Heston setting
the European option price has the following limiting behaviour:
\begin{align*}
-\lim_{t\uparrow \infty}\frac{1}{t}\log\left(\EE\left(\E^{X_t} - \E^{xt}\right)^+ \right) 
&= \Lf^*(x) - x, &\text{ for } x\geq\frac{\ttheta}{2},\\
-\lim_{t\uparrow \infty}\frac{1}{t}\log\left(1-\EE\left(\E^{X_t} - \E^{xt}\right)^+\right) 
&=\Lf^*(x) - x, &\text{ for } -\frac{\theta}{2}\leq x \leq \frac{\ttheta}{2},\\
-\lim_{t\uparrow \infty}\frac{1}{t}\log\left(\EE\left(\E^{xt} - \E^{X_t}\right)^+ \right) 
&=\Lf^*(x) - x, &\text{ for } x\leq-\frac{\theta}{2}.
\end{align*}
The proof is covered in detail in~\cite[Section~5.2.2]{JR15}, 
and we therefore highlight the main ideas for completeness.
From Theorem~\ref{thm:large-time-LDP}, define a time-dependent probability measure~$\QQ_t$:
$$
\frac{\D \QQ_t}{\D \PP} := \exp\left\{u^*(x)X_t - \Xi\left(t, u^*(x)\right)t\right\},
$$
where
$u^*(x)$ is the solution to the equation $x = \Xi'(u)$.
The option price is then expressed as the expectation under~$\QQ_t$ of a modified payoff, 
and can be computed by (inverse) Fourier transform
with the main contribution equal to~$\exp\left\{-\left(\Lf^*(x)-x\right)t\right\}$.
It is also known (see~\cite[Corollary~2.12]{FordeJac09} for instance) that
in the Black-Scholes model with volatility~$\Sigma$, 
the asymptotics of European option prices with strike~$\E^{xt}$ are given by
\begin{align*}
-\lim_{t\uparrow \infty}\frac{1}{t}\log\left(\EE\left(\E^{X_t} - \E^{xt}\right)^+ \right) 
&= \Lambda_{\BS}^*\left(x,\Sigma\right) - x, &\text{ for } x\geq\frac{\Sigma^2}{2},\\
-\lim_{t\uparrow \infty}\frac{1}{t}\log\left(1-\EE\left(\E^{X_t} - \E^{xt}\right)^+\right) 
&= \Lambda_{\BS}^*\left(x,\Sigma\right) - x, &\text{ for } -\frac{\Sigma^2}{2}\leq x \leq \frac{\Sigma^2}{2},\\
-\lim_{t\uparrow \infty}\frac{1}{t}\log\left(\EE\left(\E^{xt} - \E^{X_t}\right)^+ \right) 
&= \Lambda_{\BS}^*\left(x,\Sigma\right)- x, &\text{ for }x\leq-\frac{\Sigma^2}{2},
\end{align*}
where~$\Lambda_{\BS}^*\left(x,\Sigma\right):= \frac{\left(x+\Sigma^2/2\right)^2}{2\Sigma^2}$.
Then the leading order of the large-time implied variance is obtained by solving
$$
\Lf^*(x) - x= \Lambda_{\BS}^*(x,\Sigma) - x= \frac{\left(-x+\Sigma^2/2\right)^2}{2\Sigma^2}.
$$
We omit the details of the proof which can be found in~\cite{FordeJac11, FordeJacMij10}.


\end{document}